\newtheorem{theorem}{Theorem}
\newtheorem{proposition}{Proposition}
\newtheorem{corollary}{Corollary}
\theoremstyle{definition}
\newtheorem{definition}{Definition}
\newtheorem{example}{Example}
\DeclareMathOperator{\poly}{poly}
\def\ket#1{\left| #1 \right\rangle}
\DeclareMathOperator{\enc}{Enc}
\DeclareMathOperator{\EffEnc}{EffEnc}
\def\union{\textsf{Union}}
\def\contains{\textsf{Contains}}
\def\append{\textsf{Append}}
\def\convert{\textsf{Convert}}
\def\effcontains{\textsf{EffContains}}
\def\calculate{\textsf{Calculate}}
\def\setgen{\textsf{SetGen}}
\def\Eppstein{\textsc{Eppstein}}
\def\reduce{\textsc{Reduce}}
\def\checkcycles{\textsc{Check}}
\def\alg{\textsc{Alg}}
\def\qalg{\textsc{QAlg}}
\def\hybridalg{\textsc{HybridAlg}}
\def\trivial{\textsc{Trivial}}
\def\rectree{\textsc{RecTree}}
\def\trivred{\textsc{TrivRed}}
\def\edgeselect{\textsc{EdgeSelect}}
\colorlet{algbgcolour}{black!10}
\tikzset{blackdot/.style={circle,inner sep=1.5pt,fill}}
\begin{document}
\title{A hybrid algorithm framework for small quantum computers with application to finding Hamiltonian cycles}
\author{Yimin Ge}
\email{yimin.ge@mpq.mpg.de}
\affiliation{Max-Planck-Institut f\"{u}r Quantenoptik, Hans-Kopfermann-Str. 1,
85748 Garching, Germany}
\author{Vedran Dunjko}
\email{v.dunjko@liacs.leidenuniv.nl}
\affiliation{LIACS, Leiden University, Niels Bohrweg 1, 2333 CA Leiden, Netherlands}
\begin{abstract}
Recent works \cite{DGC2018} have shown that quantum computers can polynomially speed up certain SAT-solving algorithms even when the number of available qubits is significantly smaller than the number of variables.   
Here we generalise this approach. We present a framework for hybrid quantum-classical algorithms which utilise  quantum computers significantly smaller than the problem size. Given an arbitrarily small ratio of the quantum computer to the instance size, we achieve polynomial speedups for classical divide-and-conquer algorithms, provided that certain criteria on the time- and space-efficiency are met. We demonstrate how this approach can be used to enhance Eppstein's algorithm for the cubic Hamiltonian cycle problem, and achieve a polynomial speedup for any ratio of the number of qubits to the size of the graph. 
\end{abstract}

\maketitle

\section{Introduction}

Although fully scalable quantum computers may be far off, small quantum computers may be achievable in the forseeable future. 
Such devices may able to provide solutions to toy or specialised problems of small size (e.g. in quantum chemistry \cite{QChemIntro}),
it was however until recently unclear whether they could also be utilised for speeding up more general and common  computations of much larger problem
instances. 
Indeed, quantum and classical algorithms usually exploit global structures inherent to the problem, and it is generally difficult to utilise small quantum computers without breaking that structure. 
For example, the ability to factor $n/10$-digit integers is unlikely to be of much help for the task of factoring $n$-digit integers. 
One would therefore naively expect that for structured problems, small quantum computers would only be useful for small problem sizes. 

Recently in \cite{DGC2018}, it was show that this is not generally true: given a quantum computer with only $M$ qubits, it was shown that one can obtain a significant speedup of Sch\"oning's algorithm for solving 3SAT involving $n\gg M$ variables. 
More precisely, the speedup can be expressed in terms of the ratio $c=M/n$ of available qubits to the problem size, and it was shown that an asymptotic polynomial speedup can be achieved for arbitrarily small values of $c$. 
The latter is non-trivial, since it was also shown that   
employing a naive  approach to speed up the classical algorithm breaks the exploited problem structure, thus resulting in no improvement unless $c$ is quite large. 

One of the main insights of \cite{DGC2018} was that classical divide-and-conquer algorithms inherently don't suffer from this ``threshold effect'' since they naturally maintain the structure of the problem despite breaking it into smaller subproblems, and are thus well-suited for being enhanced using small quantum computers. Yet, to achieve genuine speedups, the quantum subroutines employed must meet stringent criteria for space- and time-efficiency, which in general are non-trivial to fulfil and, in the case of \cite{DGC2018}, required specialised data-structures and careful memory management. Moreover, \cite{DGC2018} exclusively considers the example of Sch\"oning's 3SAT algorithm, and while it demonstrated how divide-and-conquer structures can in principle be exploited, it left open a formal characterisation of the criterion for when this approach works, and whether other examples beyond Sch\"oning's 3SAT algorithm exist where similar speedups can be obtained\footnote{Note that since reductions of one NP-complete problem to another generally incur significant polynomial slowdowns, and we only expect polynomial speedups, a speedup for 3SAT does not imply a speedup for other NP-complete problems. Consequently, each problem and algorithm must be treated individually.}. Given the ubiquity of classical divide-and-conquer algorithms, a general framework for developing such hybrid algorithms to obtain speedups using only small quantum computers would thus be highly desirable.

In this work, we formalise and generalise the criteria for the hybrid approach of \cite{DGC2018}, and show that indeed, there are other problems and algorithms that can be enhanced with that approach, using only small quantum computers.  
Specifically, we develop a general framework for constructing hybrid algorithms that speed up certain classical divide-and-conquer algorithms using significantly fewer qubits than the problem size, which also makes precise the relation between the speedup obtained on the one hand, and the space requirements and runtime of the underlying quantum subroutines on the other.  
We then apply this formalism to finding Hamiltonian cycles on cubic graphs, which is another fundamental NP-complete problem. This provides the first example of the applicablity of these techniques beyond the example of Sch\"oning's 3SAT algorithm given in \cite{DGC2018}. 
Our framework operates on the algorithmic level and is distinct from the circuit-level techniques of \cite{2016_Bravyi,arXiv:1904.00102}, which aim to simulate general quantum circuits on fewer qubits. The efficiency of the latter depends on sparseness or decomposability assumptions on the original circuits which will in general prevent speedups for the algorithms we consider here. 

We provide our formalism for constructing hybrid algorithms from classical algorithms  in the form of a ``toolkit'' comprising two parts. The first part, which we term the \emph{divide-and-conquer hybrid approach}, shows general criteria for the kind of classical algorithms one can speed up using our techniques, and relates the speedup to the number of  available qubits.  
The second part, which we term \emph{efficient reversible set generation}, comprises specialised data-structures. These are specifically designed to bridge the gap between two seemingly irreconciliable properties required of the quantum algorithm for a polynomial speedup in the divide-and-conquer hybrid approach: reversibility on the one hand, and being extremely space-efficient on the other. The set generation procedure we develop makes the task of developing quantum algorithms suitable for our hybrid approach significantly easier:  
we show that it suffices to find space-efficient implementations for a few problem-specific quantum operations. 

Finally, we illustrate how this framework is applied to speed up Eppstein's algorithm \cite{Eppstein} for the cubic Hamiltonian cycle problem with only a small quantum computer. The cubic Hamiltonian cycle problem asks if a given cubic graph of $n$ vertices has a \emph{Hamiltonian cycle}, i.e. a cycle  visiting every vertex exactly once. This problem is NP-complete, and  a special case of the general Hamiltonian cycle problem (where no restrictions on the maximum degree of the graph is assumed), which in turn is closely linked to the travelling salesman problem. Brute-force search requires $O(n!\poly(n))$ time,   
there is however also a trivial path-search algorithm with runtime $O(2^n\poly(n))$. In 2004, Eppstein \cite{Eppstein} gave a divide-and-conquer algorithm of runtime $O(2^{n/3}\poly(n)) = O(1.2599^n\poly(n))$, which heavily exploits the cubic structure of the graph. Quantum speedups for Eppstein's algorithm have previously been obtained using arbitrarily-sized quantum computers \cite{PhysRevA.95.032323}. In this work, we obtain a polynomial speedup using only $M=c n$ qubits for arbitrarily small $c>0$. 
 
The outline of the remainder of this paper is as follows. In Section~\ref{sec:summary}, we give a brief overview of the results and clarify some notation. In Section~\ref{sec:sha}, we formulate the divide-and-conquer hybrid approach for a general class of classical algorithms. In Section~\ref{sec:iterative}, we provide the details of the efficient and reversible  set generation procedure. In Section~\ref{sec:eppstein}, we apply these tools to Eppstein's algorithm for the cubic Hamiltonian cycle problem. Finally, we close the paper with some concluding remarks and open questions in Section~\ref{sec:conclusion}.

\section{Overview of results}\label{sec:summary}

We briefly summarise the results and main ideas of this paper. 

Sections~\ref{sec:sha} and \ref{sec:iterative} set up the general framework for designing hybrid algorithms for quantum computers significantly smaller than the problem size. Specifically, Section~\ref{sec:sha} introduces the divide-and-conquer hybrid approach, and outlines the general criteria for the kind of classical algorithms which our framework is applicable to.  Theorem~\ref{thm:sha} then formalises the trade-off between the number of available qubits, the space-requirement of the underlying quantum subroutine, and the speedup obtained. The main idea of this hybrid approach is to take a classical divide-and-conquer algorithm, which calls itself on ever smaller problem instances, and to replace the recursive call with a suitable quantum algorithm once the problem instance is sufficiently small to fit the number of available qubits. While the basic idea is simple,  in order to obtain a polynomial speedup over the original classical algorithm, the replacing quantum algorithm has to fulfill strict criteria for space-efficiency whilst also being polynomially faster. In many cases, the main contribution to the quantum algorithm's space-requirement  comes from the necessity of generating and storing large sets.

Section~\ref{sec:iterative} then shows how to do the latter efficiently. In particular, Theorem~\ref{thm:setgen} provides a (classical) reversible set generation routine which can be used to obtain quantum algorithms that are compatible with a polynomial speedup when used in Theorem~\ref{thm:sha}. To that end, specialised set encodings are first introduced which are designed to overcome the main challenge of such an implementation: the ability to uncompute encodings of previously generated sets without resulting in either large computational overheads or large memory requirements. 

Section~\ref{sec:eppstein} then provides an example of how these tools can be applied in practice. Eppstein's algorithm for finding Hamiltonian cycles on cubic graphs is a classical divide-and-conquer algorithm that naturally fits the framework of Theorem~\ref{thm:sha}. The main ingredient for a speedup then becomes a polynomially faster quantum algorithm that solves this problem using sufficiently few qubits. Theorem~\ref{thm:quantumAlg} proves the existence of such a quantum algorithm. The proof of Theorem~\ref{thm:quantumAlg} heavily utilises the set generation routine of Theorem~\ref{thm:setgen}, which reduces the task to implementing a small number of problem-specific, i.e. graph-theoretic, operations. Theorem~\ref{thm:quantumAlg}, together with Theorem~\ref{thm:sha}, then immediately imply a polynomial speedup of Eppstein's algorithm using only significantly fewer qubits than the size of the graph, which is formally stated in Theorem~\ref{thm:eppsteinSpeedup}.

Throughout this paper, we will use standard bra/ket notation for quantum states. We will also use bra/ket notation in the context of classical reversible circuits, since they can be seen as special cases of quantum circuits. 

Moreover, for simplicity of notation, we will often make several notational simplifications.
First, we often simply write $\ket{0}$ for $\ket{0}^{\otimes L}$ for any known $L\in\mathbb N$. The value of $L$ will always be clear from context. 
Second, for operators acting on some registers of a multi-register state, we will normally not explitly write the complementing identiy operators (e.g., we will simply write $A\ket a\ket b \ket c\ket d$ instead of $(\mathds 1\otimes A\otimes\mathds 1)\ket a\ket b\ket c\ket d$ if $A$ acts on the middle two registers). It will always be clear from context which registers which operators act on.

\section{The divide-and-conquer hybrid approach}\label{sec:sha}

In this section, we formalise the \emph{divide-and-conquer hybrid approach}, generalising the techniques of \cite{DGC2018}, for designing hybrid algorithms using only quantum computers significantly smaller than the problem size. The main idea is to take a classical divide-and-conquer\footnote{Many algorithms that are not a priori given in this form can be formulated as such.} algorithm that calls itself on ever smaller (effective) problem sizes, and replace the recursive calls with a quantum algorithm once the problem size becomes sufficiently small.

Let $\mathcal P$ be a countable set, $\mathcal A: \mathcal P\rightarrow\{0,1\}$ be a decision problem\footnote{For simplicity, we formulate the divide-and-conquer hybrid approach for decision problems here. The approach can be generalised to algorithms with more general outputs, subject to size constraints of the output.}, and $n:\mathcal P\rightarrow \mathbb N$  be a problem parameter.  
We refer to $n(P)$ as the \emph{problem size} of $ P$.   

\begin{algorithm}[htbp]
\colorbox{algbgcolour}{\parbox{\linewidth}{\begin{algorithmic}[1]
\Procedure{$\alg$}{$P$}
  \State if $\trivial(P)=1$
  \State \indent return $f(P)$
  \State else
  \State\label{algAlinerecursive} \indent return $g(\alg(R_1(P)),\ldots,\alg(R_l(P)))$
\EndProcedure
\end{algorithmic}}}
\caption{General algorithm for the divide-and-conquer hybrid approach}\label{alg:algsha}
\end{algorithm}

Suppose that $\mathcal A(P)$ can be decided by a classical recursive algorithm\footnote{In general, the recursive algorithm can also take additional parameters, but these can be incorporated into $P$.} of the form given in Alg.~\ref{alg:algsha}, 
where $l\geq 2$ is an integer, $R_1,\ldots,R_l:\mathcal P\rightarrow \mathcal P$, $g:\{0,1\}^l\rightarrow\{0,1\}$, and $ f,\trivial: \mathcal P\rightarrow\{0,1\}$. We  assume that  $\trivial(P), R_1(P),\ldots,R_l(P)$ can be calculated in $O(\poly (n(P)))$ time, and that $f(P)$ can be calculated in $O(\poly(n(P)))$ time if $\trivial(P)=1$.  The maps $R_1,\ldots,R_l$ can be thought of as ``reduction operations'', mapping the problem to a smaller instance, whereas $\trivial(P)$ signifies if $P$ is sufficiently simple to be solved directly. 
We assume that for all $P\in\mathcal P$ and $i=1,\ldots,l$,  $n(R_i(P))\leq n(P)$. 

The runtime of such divide-and-conquer algorithms can often be bounded by introducing an \emph{effective problem size} $s: \mathcal P \rightarrow \mathbb N$. 
In general, $s(P)$ and $n(P)$ can be different, we assume however that for all $P\in\mathcal P$, 
$s(P)\leq n(P)$. We assume that both $n(P)$ and $ s(P)$ can be calculated in  time $O(\poly(n(P)))$. 
  We moreover assume that  
there is a universal constant $C\in\mathbb N$ such that $s(P)\leq C$ implies $\trivial(P) = 1$.

	To ensure that $\alg(P)$ has a runtime of the form $O(2^{\gamma s(P)}\poly(n(P)))$ for some constant $\gamma>0$, we assume that there exist integers $k>0$, $C_{ij}>0$ for $1\leq i \leq l, 1\leq j\leq k$, and $l_1,\ldots,l_k\in\{1,\ldots,l\}$  such that for all $P\in\mathcal P$ with $\trivial(P)=0$, there exists some $j\in\{1,\ldots, k\}$ such that for all $i=1,\ldots, l$, we have 
\begin{equation}\label{eq:sReduction}
	s(R_i(P)) \leq s(P) - C_{ij} \quad\text{or}\quad \trivial(R_i(P))=1
\end{equation} 
for $i=1,\ldots,l_j$,  
and $\trivial(R_i(P))=1$ for $i=l_j+1,\ldots,l$. Here, $j\in\{1,\ldots,k\}$ labels one of $k$ possible cases for effective problem size reductions, and $l_j\leq l$ the effective number of recursive branches of that case, whereas the positive integers $C_{ij}$ are lower bounds on the decrease of the effective problem size, guaranteeing that the algorithm terminates.

Under these assumptions, it is easy to derive the stated upper bound on the runtime  of running $\alg(P)$. Indeed, \eqref{eq:sReduction} implies a recursive runtime bound $T(s(P))$  depending only on $s(P)$ given by
\begin{equation}
	T(s') \leq \max_{j=1,\ldots,k} ( T(s'-C_{1j}) + \cdots + T(s'-C_{l_jj})) + O(\poly(n(P)))
\end{equation}
and $T(s')=O(\poly n(P))$ for $s'\leq C$. Using standard methods for solving recurrence relations \cite{Bentley:1980:GMS:1008861.1008865}, this leads to a runtime of $T(s') = O(2^{\gamma s'}\poly(n(P)))$ for some $\gamma>0$. Thus, $\alg(P)$ has a runtime of $O(2^{\gamma s(P)}\poly(n(P)))$.

	Our aim is to provide criteria for when a reduction of the value of $\gamma$, i.e. a polynomial speedup, can be achieved.

\begin{theorem}[Divide-and-conquer hybrid approach] \label{thm:sha}
	Suppose that there is a quantum algorithm $\qalg$ that decides $\mathcal A(P)$  using at most $G(s(P),n(P))$ qubits in time $O(2^{\gamma_Q s(P)}\poly(n(P)))$ for some constant $\gamma_Q \in [0, \gamma)$. Suppose that $G:[0,\infty)\times[0,\infty)\rightarrow[0,\infty)$ has the property that
	for all nonnegative integers $s\leq n'\leq n$, $G(s,n')\leq G(s,n)$.  
	 Suppose moreover that there exists some $\tilde\lambda\in(0,1)$ such that for all $\lambda\in[0,\tilde \lambda]$ and $n\in\mathbb N$, 
	\begin{equation}
		G(\lambda n, n) = nF(\lambda) + O(\log n)
\end{equation}
for some strictly monotonically increasing and continuously differentiable $F:[0,\tilde\lambda]\rightarrow [0,\infty)$ such that $F(0)=0$ and $F'$ is bounded away from $0$.  

 	 Let $c\in(0,F(\tilde\lambda))$ be an arbitrary constant. Then, given a quantum computer with $M=cn(P)$ qubits, there exists a hybrid quantum-classical algorithm that decides $\mathcal A(P)$  in a runtime of $O (\max(2^{\gamma s(P) - f(c)n(P)}, 2^{\gamma_Q s(P)})\poly(n(P)))$, where $f(c) = (\gamma-\gamma_Q)F^{-1}(c)>0$. 
 	 In particular, $\mathcal A(P)$ can be decided in $O (2^{(\gamma  - f(c))n(P)}\poly(n(P)))$. 
\end{theorem}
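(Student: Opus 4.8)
The plan is to execute the classical recursion of $\alg$ in depth‑first order, but to divert each recursive call $\alg(P')$ to an invocation of $\qalg(P')$ as soon as the current subproblem has become small enough to fit the available quantum computer. Concretely, set $\lambda_c := F^{-1}(c)$; this is well defined and lies in $(0,\tilde\lambda)$ since $F$ is a continuous strictly increasing bijection of $[0,\tilde\lambda]$ onto $[0,F(\tilde\lambda)]$, $c\in(0,F(\tilde\lambda))$, and $F^{-1}(0)=0$. With $n:=n(P)$ and a sufficiently large constant $D$ (fixed below), put $s^\ast := \lceil \lambda_c n - D\log n\rceil$. The hybrid algorithm $\hybridalg(P')$ then returns $f(P')$ if $\trivial(P')=1$, returns $\qalg(P')$ if $\trivial(P')=0$ and $s(P')\le s^\ast$, and otherwise returns $g(\hybridalg(R_1(P')),\dots,\hybridalg(R_l(P')))$. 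Correctness follows by structural induction along the recursion: $\qalg$ decides $\mathcal A$, and at the remaining nodes the combination via $g$ is exactly the one performed by the correct classical algorithm $\alg$.

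Next, one verifies that $\qalg$ always fits. Every subproblem $P'$ occurring in the recursion satisfies $n(P')\le n$ (because $n(R_i(\cdot))\le n(\cdot)$) and $s(P')\le n(P')$, so monotonicity of $G$ in its second argument gives $G(s(P'),n(P'))\le G(s(P'),n)$. At a node with $s(P')\le s^\ast$ we have $s(P')/n\le \lambda_c<\tilde\lambda$, hence $G(s(P'),n)=nF(s(P')/n)+O(\log n)\le nF(s^\ast/n)+O(\log n)$; since $s^\ast/n\le \lambda_c - D'\log n/n$ for a constant $D'$ (absorbing the ceiling), the mean value theorem together with $F'\ge\mu>0$ yields $F(s^\ast/n)\le F(\lambda_c)-\mu D'\log n/n=c-\mu D'\log n/n$, so $G(s(P'),n)\le cn-\mu D'\log n+O(\log n)\le cn=M$ once $D$ (hence $D'$) is chosen large enough to beat the implied constant. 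Thus $\qalg(P')$ uses at most $M$ qubits, and running the recursion depth‑first while reusing one register shows $M$ qubits suffice throughout. (For the finitely many $n$ too small for these estimates the claimed asymptotic bound is vacuous, and one may simply run $\alg$.)

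For the runtime, let $T_H(s,n)$ denote the worst‑case cost over instances of effective size at most $s$ and problem size at most $n$. For $s\le s^\ast$, the cost is that of one call to $\qalg$, i.e.\ $O(2^{\gamma_Q s}\poly(n))\le O(2^{\gamma_Q s^\ast}\poly(n))$ (trivial nodes are even cheaper). For $s>s^\ast$, the branching guaranteed by \eqref{eq:sReduction}, together with $n(R_i(\cdot))\le n$ and the monotonicity of $T_H$, gives
\[
  T_H(s,n)\;\le\;\max_{j=1,\dots,k}\big(T_H(s-C_{1j},n)+\cdots+T_H(s-C_{l_jj},n)\big)+O(\poly(n)),
\]
with the trivial children absorbed into the additive term. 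This is precisely the recurrence that produces the classical bound $O(2^{\gamma s}\poly(n))$, the only difference being that the base case now sits at the level $s^\ast$ with base value $O(2^{\gamma_Q s^\ast}\poly(n))$ rather than at a universal constant with base value $O(\poly(n))$. The standard analysis of such recurrences \cite{Bentley:1980:GMS:1008861.1008865} is insensitive to the position of the base case beyond an overall multiplicative shift, and the additive $O(\poly(n))$ terms are dominated because $2^{\gamma_Q s^\ast}\ge 1$; hence, for $s(P)>s^\ast$,
\[
  T_H(s(P),n)\;=\;O\!\big(2^{\gamma(s(P)-s^\ast)}\,2^{\gamma_Q s^\ast}\poly(n)\big)\;=\;O\!\big(2^{\gamma s(P)-(\gamma-\gamma_Q)s^\ast}\poly(n)\big),
\]
while for $s(P)\le s^\ast$ one runs $\qalg(P)$ directly at cost $O(2^{\gamma_Q s(P)}\poly(n))$.

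Substituting $s^\ast=F^{-1}(c)\,n-O(\log n)$ yields $(\gamma-\gamma_Q)s^\ast=f(c)\,n-O(\log n)$ with $f(c)=(\gamma-\gamma_Q)F^{-1}(c)$; the $O(\log n)$ in the exponent turns into a $\poly(n)$ prefactor, so $T_H=O(\max(2^{\gamma s(P)-f(c)n},2^{\gamma_Q s(P)})\poly(n))$, and $f(c)>0$ since $\gamma>\gamma_Q$ and $F^{-1}(c)>F^{-1}(0)=0$. For the final ``in particular'' statement, $s(P)\le n$ and $F^{-1}(c)<\tilde\lambda<1$ give $\gamma s(P)-f(c)n\le(\gamma-f(c))n$ and $\gamma_Q s(P)\le\gamma_Q n\le(\gamma-f(c))n$, so both terms in the maximum are $O(2^{(\gamma-f(c))n}\poly(n))$. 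The step I expect to require the most care is the calibration of $s^\ast$: one must back off an $O(\log n)$ amount below $\lambda_c n$ so that $\qalg$ provably fits in exactly $M=cn$ qubits despite the $O(\log n)$ slack in $G(\lambda n,n)=nF(\lambda)+O(\log n)$ — this is where the regularity of $F$ ($F'$ bounded away from $0$, hence $F^{-1}$ Lipschitz) is used — and then to confirm that this logarithmic retreat, as well as the relocation of the recursion's base case from a constant to the $\Theta(n)$‑scale value $s^\ast$, costs only polynomial and not exponential factors. The remaining recurrence bookkeeping is routine.
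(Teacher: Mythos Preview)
Your proposal is correct and follows essentially the same approach as the paper: define a threshold $s^\ast$ (the paper calls it $\tilde s$) of the form $F^{-1}(c)\,n - O(\log n)$, switch to $\qalg$ below it, and solve the shifted recurrence to get $O(2^{\gamma(s-s^\ast)+\gamma_Q s^\ast}\poly(n))$. The only cosmetic difference is that the paper sets $\tilde s := n\,F^{-1}(c - a\ln n/n)$ and then uses the Lipschitz property of $F^{-1}$ to conclude $\tilde s = F^{-1}(c)\,n - O(\log n)$, whereas you set $s^\ast := \lceil F^{-1}(c)\,n - D\log n\rceil$ directly and use $F'\ge\mu$ to verify the qubit budget; your treatment is in fact slightly more explicit (you spell out the ``in particular'' bound and the calibration of $D$), but the arguments are equivalent.
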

\begin{proof}
The proof is based on the ideas developed in \cite{DGC2018} in the context of 3SAT. The main idea of the hybrid algorithm is to call $\qalg(P')$ instead of $\alg(P')$ in the recursive step of $\alg$ when $s(P')$ is sufficiently small. 

	By assumption, running $\qalg(P')$ for any $P'\in\mathcal P$ with $n(P')\leq n(P)$ requires at most $G(s(P'),n(P'))\leq G(s(P'),n(P)) \leq n(P)F(s(P')/n(P))+a\ln n(P)$ qubits for some constant $a>0$. 
Hence, if $n(P')\leq n(P)$, then
\begin{equation}
	s(P')\leq n(P)F^{-1}\left(c-\frac{a\ln n(P)}{n(P)}\right) =: \tilde s 
\end{equation}
is a sufficient condition for being able to run $\qalg(P')$ with $M=cn(P)$ qubits. 
Note that since $F$ is strictly increasing and continuously differentiable with its derivative bounded away from $0$, the same applies to $F^{-1}$. 
 Thus, by the mean value theorem, $\tilde s = F^{-1}(c) n(P) -O(\log n(P))$. 

Let $\hybridalg(P')$ be the algorithm which calls $\qalg(P')$ if $s(P')\leq \tilde s$ and $\alg'(P')$ otherwise, where $\alg'$ is the same algorithm as $\alg$ except that in line~\ref{algAlinerecursive} of $\alg$, the calls to $\alg$ are replaced by calls to $\hybridalg$. Note that since $n(R_i(P'))\leq n(P')$ for all $P'\in\mathcal P$ and $i=1,\ldots,l$, $M=cn(P)$ qubits suffice to run $\hybridalg(P)$. 

We now analyse the runtime of $\hybridalg$. Note that its runtime can be bounded by the function $T_H(s(P))$ depending only on $s(P)$, where $T_H$ is given recursively by 
\begin{equation}
	T_H(s') =O(2^{\gamma_Q s'}\poly(n(P)))
\end{equation}
for $s'\leq \tilde s$ and
\begin{equation}
	T_H(s') \leq \max_{j=1,\ldots,k} ( T_H(s'-C_{1j}) + \cdots + T_H(s'-C_{l_jj})) + O(\poly(n(P)))
\end{equation}
for $s'>\tilde s$. Using standard recurrence relation techniques, we therefore obtain
\begin{equation}
	T_H(s') = O(2^{\gamma(s'-\tilde s) + \gamma_Q \tilde s}\poly(n(P))) = O(2^{\gamma s' - (\gamma-\gamma_Q){\tilde s}}\poly(n(P)))  = O(2^{\gamma s' - f(c)n(P)}\poly(n(P))),
\end{equation}
 for $s'>\tilde s$, which proves the claim. 
\end{proof} 

We remark that the assumptions of Theorem~\ref{thm:sha} can be relaxed in various ways. First, it is sufficient for $\qalg$ to only decide $\mathcal A(P)$ for $P\in\bigcup_{i=1}^l R_i(\mathcal P)$. Second, $F$ being continuously differentiable can be relaxed to $F^{-1}$ being locally Lipschitz-continuous. 

Note moreover that the assumption that $F'$ is bounded away from $0$ ensures that the degree of the polynomial overhead of the hybrid algorithm can be bounded by a constant independent of $c$. 
More precisely, if  $F'(\lambda)\geq \kappa >0$ for all $\lambda\in[0,\tilde\lambda]$, then the hybrid algorithm decides $\mathcal A(P)$ in a runtime of $O(\max(2^{\gamma s(P) - f(c)n(P)},2^{\gamma_Q s(P)}) n(P)^{O(1/\kappa)})$. 

\begin{example}\label{ex:slogns}
	Suppose that in Theorem~\ref{thm:sha}, running $\qalg(P)$ requires $O(s\log(n/s) + s+\log n)$ qubits, where here we just write $s,n$ instead of $s(P),n(P)$ for simplicity. In that case, $G(s,n) = As\ln(n/s) + Bs+O(\log n)$ for some constants $A,B>0$. 
	 Then, $F(\lambda) = A\lambda\ln(1/\lambda) + B\lambda$, which is monotonically increasing on $(0,e^{B/A-1})$. 
	It can be shown that 
$
	F^{-1}(c) = -c/(AW_{-1}(-ce^{-B/A}/A))
$,
where $W_{-1}$ is the $-1$ branch of the Lambert $W$ function. It is easy to see that for small values of $c$, $F^{-1}(c) =\Theta(c/\log(1/ c))$. 

The curious expression $O(s\log(n/s)+s)$ stems from the information-theoretic cost of encoding a subset of $\{1,\ldots, O(n)\}$ of size $O(s)$. It is moreover the scaling obtained in Ref~\cite{DGC2018} (see Example~\ref{ex:PBS} below) and also later in Section~\ref{sec:iterative} and \ref{sec:eppstein}. 
\hfill$\blacksquare$
\end{example}

\begin{example}\label{ex:slogn}
	Suppse that $\qalg$ requires $\geq s\log_2 n$ qubits instead. Note that this does not satisfy the requirements of Theorem~\ref{thm:sha}. Then, with the same notation as in the proof of Theorem~\ref{thm:sha}, $\tilde s = cn/\log_2 n$, and hence 
	\begin{equation}\label{eq:subpolyspeedup}
		T_H(n) = O(2^{(\gamma- \frac c{\log_2 n})n}\poly(n))).
	\end{equation}
	Note that this does not yield a polynomial speedup over $\alg$, since the value of $\gamma$ is not reduced by a constant. 
	
	The importance of this example lies in that while a qubit scaling of $O(s\log n)$ is, in many cases, easy to achieve (e.g. through storing an ordered list of $O(s)$ numbers in $\{1,\ldots,O(\poly n)\}$), it however does not lead to a polynomial (albeit still asymptotic) speedup. A similar result can also be seen for a scaling of $O(s\log s)$.  
	\hfill$\blacksquare$
\end{example}

Note that the strictness of the space requirement for $\qalg$ to obtain a polynomial speedup   comes  from the premise of only having a quantum computer of size $M=cn$. Note that if in Example~\ref{ex:slogn}, we were given a quantum computer with $M=cn\log n$ qubits instead, a polynomial speedup would still be obtained. The strength of the speedup therefore critically depends what is considered a ``natural'' scaling of $M$ relative to $n$.
In many of the typical applications, the search space of typical classical algorithms (e.g. brute-force search) can be enumerated using $O(n)$ classical bits. Using amplitude amplification,  one can then often obtain a quantum algorithm using $O(n)$ qubits that is usually polynomially (and often quadratically) faster than the corresponding classical algorithm \cite{2004_Ambainis}. In these cases, $M=cn$ is the natural scaling because if the scaling were to be relaxed to $M=cq(n)$ qubits with $q(n)$ being superlinear in $n$, then  even for arbitrarily small $c$, the above  quantum algorithm would require asymptotically fewer qubits than the hybrid algorithm, which would be inconsistent with the notion that $M$ should be significantly smaller than the number of qubits required by a full quantum algorithm. 

\begin{example}\label{ex:PBS}
	We now show that the results of \cite{DGC2018} also fit in this paradigm. We use the nomenclature of \cite{DGC2018}, and refer the interested reader to \cite{DGC2018,2011_Moser} for further details. 

	In \cite{DGC2018}, $\alg$ was taken to be the  algorithm  from \cite{2011_Moser} for the Promise-Ball-SAT problem. There, $P=(F,\mathbf x, r)$ comprises an $n$-variable $3$SAT formula $F$, a trial assignment $\mathbf x\in\{0,1\}^n$ and a radius $r\in\{1,\ldots,n\}$. The effective problem size $s(P)$ was simply taken to be its radius $r$ of $P$. 
	
	In \cite{2011_Moser}, it was shown that there are two possible cases ($k=2$) in the recursive algorithm. In the first case, two subproblems of radius $r-1$ are created. In the second case, at most $ t^2 2^{\Delta}$ subproblems of radius $r-\Delta$ are created, where $\Delta\in \mathbb N$ is a constant chosen below and $t=3\Delta$. 
	
	Then, with the notation of this section, $l= t^2 2^{\Delta}$, $k=2$, $l_1=2$, $l_2=l$, and 
	\begin{equation}
		C = \left( 
			\begin{array}{cc}
				1 & \Delta \\
				1 & \Delta \\
				0 & \Delta \\
				\vdots & \vdots \\
				0  & \Delta
			\end{array}		
		\right). 
	\end{equation}
	
	This leads to a recursive runtime bound of 
	\begin{equation}
		T(r) \leq \max\left( 2T(r-1), t^22^\Delta T(r-\Delta)\right),
	\end{equation}
	leading to a runtime of $T(r) = O((2t^{2/\Delta})^r\poly(n))=O(2^{(1+\varepsilon)r}\poly(n))$, where $\varepsilon = 2\Delta^{-1}\log_2 (3\Delta)$. Note that $\varepsilon\rightarrow 0$ as $\Delta\rightarrow \infty$.

	Ref~\cite{DGC2018} then constructs a quantum algorithm  solving the Promise-Ball-SAT problem in time $O(2^{\gamma_Q r}\poly(n))$ with $\gamma_Q=\log_2(3)/2 < 1$, and using at most $O(r\log (n/r) +r +\log n)$ qubits. Thus, Theorem~\ref{thm:sha} and the observation in Example~\ref{ex:slogns} implies that Promise-Ball-SAT can be solved in time $O(\max(2^{ (1+\varepsilon) r - \tilde f(c) n}, 2^{\gamma_Q r})\poly(n))$  with $\tilde f(c) = \Theta(c/\log(1/c))$. Note that $\Delta$ can be chosen such that $\varepsilon < \tilde f(c)/2$. The results from \cite{2002_Dantsin}, which reduce 3SAT to Promise-Ball-SAT, then imply that given a quantum computer with $M=cn$ qubits, $n$-variable 3SAT can be solved in time 
$O(2^{(\gamma-f(c))n}\poly(n))$,	 
	where $f(c)=\tilde f(c)/2 $ and $\gamma=\log_2(4/3)$. This is a polynomial speedup of the 3SAT algorithm obtained in \cite{2011_Moser} (which in turn can be seen as a derandomised version of Sch\"oning's algorithm \cite{1999_Schoning}), which is the central result of \cite{DGC2018}.\hfill$\blacksquare$
\end{example}

\section{Efficient and reversible set generation}\label{sec:iterative}

The previous section highlights the importance of the space-efficiency of the quantum algorithm used in Theorem~\ref{thm:sha}. 
In many instances, this quantum algorithm require the storing and manipulation large sets, e.g. to keep track of changes to $P$. As observed in \cite{DGC2018}, this is in general a non-trivial task when constrained by limited memory. 

In this section, we formulate a general process to space- and time-efficiently generate an encoding of a set in a reversible manner, designed to be compatible with the use of Theorem~\ref{thm:sha} to obtain a polynomial speedup using small quantum computers. 

Note that most of the required subroutines can trivially be implemented space-efficiently if one assumes that ancillary memory registers can be erased at will. However, we naturally require  our computations to be reversible. This is in general an issue, since turning non-reversible computations into reversible ones in the canonical fashion either introduces many ancillas, or can incur exponential overheads (see e.g. \cite{BTV2001}), both of which are non-starters for our needs. 
Combining the seemingly competing requirements of reversibility, small memory requirements and computational efficiency for our purposes is non-trivial. 
To illustrate the problem (see also \cite{DGC2018}), 
note that if sets $S=\{x_1,\ldots,x_r\}\subset\{1,\ldots,N\}$ are simply stored as ordered lists $\ket{x_1}\ldots\ket{x_r}$, the memory requirement of $O(r\log N)$ qubits is too large for a polynomial speedup if $r=O(s(P))$ and $N=O(n(P))$ (see Example~\ref{ex:slogn}). 
On the other hand,
 if sets are encoded as genuine sets (i.e., without storing any ordering of the elements in the set), the operation $\ket{\{x_1,\ldots, x_{i-1}\}}\ket{x_i}\mapsto \ket{\{x_1,\ldots, x_{i}\}}$ is non-reversible, because the information on which element was added last is lost. 
 The naive way to make this reversible would  be to first implement the operation $\ket{\{x_1,\ldots, x_{i-1}\}}\ket{x_i}\mapsto \ket{\{x_1,\ldots, x_{i-1}\}}\ket{x_i}\ket{\{x_1,\ldots, x_{i}\}}$ and then to uncompute the $\ket{\{x_1,\ldots, x_{i-1}\}}\ket{x_i}$ registers by applying the inverse of the circuit up to that point. It is easy to see however that this incurs a computational overhead of $O(2^i)$, which is too large. Indeed, suppose that $\setgen_i\ket 0 = \ket{\{x_1,\ldots,x_i\}}$ and $\calculate_i\ket{0}=\ket{x_i}$. Then, the naive (recursive) implementation of $\setgen_{i}$ would be to first apply $\calculate_{i}\setgen_{i-1}$ to generate $\ket{\{x_1,\ldots, x_{i-1}\}}\ket{x_i}$, then implementing $\ket{\{x_1,\ldots, x_{i-1}\}}\ket{x_i}\mapsto \ket{\{x_1,\ldots, x_{i-1}\}}\ket{x_i}\ket{\{x_1,\ldots, x_{i}\}}$, and finally applying $(\calculate_{i}\setgen_{i-1})^{-1}$ to uncompute the  ancillas. The resulting recursive runtime would be $|\setgen_{i}| > 2|\setgen_{i-1}|$, leading to an exponential gate count of $O(2^i)$. 

In this section, we describe a general formalism to overcome this problem based on the ideas of \cite{DGC2018}. Specifically, we will show how the above task can in fact be implemented with $O(r\log(N/r)+r+\log N)$ memory and $O(\poly(N))$ runtime.  Of course, one of the key aspects of our implementation is the continuous  uncomputation of any ancillas we introduce along the way once they are no longer needed. Once they are uncomputed (i.e., reset to a known initial state, say $\ket 0$), they can be re-used for later computational steps. This allows us to keep the overall number of ancillas used low. The primary challenge is to do this in a way which avoids the exponential overhead mentioned above. 

In Section~\ref{subsec:setenc}, we first introduce the data-structures which allow this suitable trade-off between space-efficiency and computational overhead of uncomputation. Specifically, we develop a space-efficient encoding of large sets which adds just enough ordering information to allow for efficient uncomputation. Afterwards, in Section~\ref{subsec:iterative}, we describe the general algorithm for generating such an encoding of a set, given only access to operations which generate single elements of the set. 

All algorithms considered in the remainder of this section are classical and will be written as reversible circuits. 
For convenience, we also introduce the following notion of reversible implementation and note the subsequent trivial observation. 

\begin{definition}
	Let $q,l,g\in\mathbb N$, $\mathcal X \subset \{0,1\}^q$ and $f:\mathcal X \rightarrow \{0,1\}^q$ be injective. We say that $f$ can be \emph{implemented reversibly using $l$ ancillas and $g$ gates} if there exists a classical reversible  circuit of at most $g$ elementary gates which for all $x\in\mathcal X$ implements the operation
	\begin{equation}
		\ket{x}\ket{0}^{\otimes l} \mapsto \ket{f(x)}\ket{0}^{\otimes l}.
	\end{equation}
\end{definition}

\begin{proposition}\label{prop:compositereversible}
	Let $q,t\in\mathbb N$, $\mathcal X_1,\ldots,\mathcal X_t \subset \{0,1\}^q$ and $f_i:\mathcal X_i \rightarrow \{0,1\}^q$, $i=1,\ldots,t$, be injective such that $f_i$ can be implemented reversibly using $l_i$ ancillas and $g_i$ gates. Suppose  that $f_i(\mathcal X_i)\subset X_{i+1}$ for $i=1,\ldots,t-1$. Then, $f_t\circ f_{t-1}\circ\cdots\circ f_1$ can be implemented reversibly using $\max_{i=1,\ldots,t} l_i$ ancillas and $g_1+\cdots+g_t$ gates. 
\end{proposition}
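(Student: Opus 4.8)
The plan is to prove Proposition~\ref{prop:compositereversible} by explicitly composing the circuits for $f_1,\ldots,f_t$ while reusing the same block of ancilla wires for all of them. First I would set up the register layout: a $q$-qubit data register and a single ancilla register of width $\ell := \max_{i} l_i$. The key structural observation is that the hypothesis $f_i(\mathcal X_i)\subset \mathcal X_{i+1}$ guarantees that after applying the circuit for $f_i$ to a valid input $x\in\mathcal X_i$, the data register holds $f_i(x)\in\mathcal X_{i+1}$, which is a legitimate input for the circuit of $f_{i+1}$; hence the circuits can be chained directly.

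The main step is the handling of ancillas. The circuit $C_i$ realizing $f_i$ acts as $\ket{x}\ket{0}^{\otimes l_i}\mapsto\ket{f_i(x)}\ket{0}^{\otimes l_i}$ for all $x\in\mathcal X_i$. I would pad this to a circuit $C_i'$ on the $\ell$-wide ancilla register by simply letting $C_i$ act on the first $l_i$ ancilla wires and the identity on the remaining $\ell - l_i$; since $l_i\le\ell$, and since the extra wires are initialized to $\ket{0}$ and untouched, $C_i'$ implements $\ket{x}\ket{0}^{\otimes\ell}\mapsto\ket{f_i(x)}\ket{0}^{\otimes\ell}$, using the same $\ell$ ancillas and the same $g_i$ gates (padding with identity adds no gates). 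Crucially, $C_i'$ returns the ancilla register to the all-zero state, so it is ``clean'' for $C_{i+1}'$ to use. I would then define the composed circuit as $C := C_t' C_{t-1}'\cdots C_1'$ and verify by induction on $i$ that after $C_i'\cdots C_1'$ the state is $\ket{f_i(\cdots f_1(x)\cdots)}\ket{0}^{\otimes\ell}$, the induction step being immediate from the padded-circuit identity together with $f_i(\mathcal X_i)\subset\mathcal X_{i+1}$. The gate count of $C$ is $g_1+\cdots+g_t$ and it uses $\ell = \max_i l_i$ ancillas, which is the claim.

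I do not expect a genuine obstacle here — the statement is essentially bookkeeping — but the one point that needs care is making explicit \emph{why} the ancillas can be shared: it relies precisely on the definition of reversible implementation demanding that the ancillas be returned to $\ket{0}^{\otimes l}$ at the end, not merely consumed. Without that, the ancillas of $C_i$ could not be safely reused by $C_{i+1}$ and the count would be $\sum_i l_i$ rather than $\max_i l_i$. The other minor point is the domain-tracking: one must check at each stage that the argument fed to $C_{i+1}$ lies in $\mathcal X_{i+1}$, which is exactly what the chaining hypothesis $f_i(\mathcal X_i)\subset\mathcal X_{i+1}$ provides, so I would state this inductive invariant explicitly rather than leave it implicit.
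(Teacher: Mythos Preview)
Your proposal is correct and is exactly the argument the paper has in mind; note that the paper itself calls Proposition~\ref{prop:compositereversible} a ``trivial observation'' and states it without proof, so there is no alternative argument to compare against. The two points you flag --- that ancilla reuse is licensed precisely by the definition's requirement that ancillas return to $\ket{0}^{\otimes l}$, and that the chaining hypothesis $f_i(\mathcal X_i)\subset\mathcal X_{i+1}$ keeps the intermediate states in the domain of the next circuit --- are exactly the content of the proposition, and your padding-plus-induction write-up handles both cleanly.
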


\subsection{Efficient set encodings}\label{subsec:setenc}

In this section we describe how to efficiently encode sets in a way which allows for efficient uncomputation whilst maintaining reversibility. We first describe ``basic'' set encodings, which use little memory but by themselves do not allow for efficient uncomputation. After that, we describe an efficient encoding composed of multiple basic encodings that allows for efficient uncomputation.

\begin{definition}
	Let $N,k\in\mathbb N$ be  positive integers with $k\leq N$, and let $S\subset \{1,\ldots,N\}$ with $|S|=k$.  Define the \emph{basic encoding} $\ket{\enc_N S}$ of $S$ to be a sequence of $\lfloor k\log_2(N/k+1)\rfloor + 2k$ trits set to
	\begin{equation}
		\ket{\enc_N S} := \ket{ (y_1)_2}\ket 2 \ket{(y_2-y_1)_2}\ket 2 \ldots \ket{(y_k-y_{k-1})_2}\ket 2\ket0\ldots\ket 0,
	\end{equation}
	where 
	$S=\{y_1,\ldots,y_k\}$  with $y_1<\cdots <y_k$, and 
	for a positive integer $y$, $\ket{(y)_2}$ denotes a sequence of $\lceil \log_2(y+1)\rceil $ trits encoding the binary representation of $y$ on the $\{0,1\}$ subspace\footnote{Note that the binary representation without leading zeros of a positive integer $y$ has $\lceil \log_2(y+1)\rceil$ bits.}. 
\end{definition}
\begin{example}
	Suppose $N=20$, $k=5$, and $S=\{6,7,10,15,17\}$. Then, $\lfloor k\log_2(N/k+1)\rfloor + 2k = 21$, and $y_1=6=110_2$, $y_2-y_1=1=1_2$, $y_3-y_2=3=11_2$, $y_4-y_3 = 5 = 101_2$, and $y_5-y_4=2=10_2$. Thus,
	\begin{equation}
		\ket{\enc_N S} = \ket{110212112101210200000}.
	\end{equation} \hfill$\blacksquare$
\end{example}

To see that $\lfloor k\log_2(N/k+1)\rfloor + 2k$ indeed suffice for $\ket{\enc_N S}$, note that the number of trits required is 
\begin{align}
	\lceil \log_2 (y_1+1)\rceil & + \lceil \log_2 (y_2-y_1+1)\rceil + \cdots + \lceil \log_2 (y_k-y_{k-1}+1)\rceil + k \\ 
	&\leq  \log_2 (y_1+1) + \log_2(y_2-y_1+1) + \cdots + \log_2(y_k-y_{k-1}+1) + 2k\\
	&\leq  k\log_2 ((y_k+k)/k) +2k  \label{eq:jensen}
	\\ &\leq k\log_2 (N/k+1) + 2k,
\end{align}
where \eqref{eq:jensen} follows from Jensen's inequality. Note that this is significantly less than the naive encoding of $S$ as an ordered list, which uses $O(k\log N)$ bits.

In \cite{DGC2018}, it was shown how to perform basic set operations on $\ket{\enc_N S}$. 
\begin{definition}
	Let $N$ be a positive integer. 
	\begin{enumerate}[(i)]
		\item For any positive integer $k\leq N$, let $\contains_{N,k}$ be the operation that performs
	\begin{equation}\label{eq:containsdef}
		\contains_{N,k}\ket{\enc_N S}\ket{x} \ket 0 = \ket{\enc_N S}\ket x \ket{x\in S?} 
	\end{equation}
	for any $S\subset\{1,\ldots,N\}$ with $|S|=k$, where the last bit on the right-hand side of \eqref{eq:containsdef} is $1$ if $x\in S$ and $0$ otherwise. 
		\item Let $\convert_{N}$ be the operation that performs
		\begin{equation}
			\convert_N\ket x\ket 0 = \ket x\ket{\enc_N\{x\}}
		\end{equation}
		for any $x\in\{1,\ldots,N\}$. 
		\item For any positive integers $k_1,k_2$ with $k_1+k_2\leq N$, let $\union_{N, k_1,k_2}$ be the operation that performs
		\begin{equation}
			\union_{N, k_1,k_2}\ket{\enc_N S_1}\ket{\enc_N S_2}\ket 0 = \ket{\enc_N S_1}\ket{\enc_N S_2}\ket{\enc_N S_1\cup S_2}
		\end{equation}
		for any disjoint $S_1,S_2\subset\{1,\ldots,N\}$ with $|S_1|=k_1$ and $|S_2|=k_2$. 
	\end{enumerate}	
\end{definition}
\begin{proposition}[\cite{DGC2018}, Lemma 2,5,6 in Supplemental Material]\label{prop:setops}
	Let $N$ be a  positive integer. Then,
	\begin{enumerate}[(i)]
		\item\label{prop:setopscontains} for any positive integer $k\leq N$, $\contains_{N,k}$ can be implemented reversibly using $O(\log N)$ ancillas and $O(\poly(N))$ gates. 
		\item $\convert_N$ can be implemented reversibly using $O(\log N)$ ancillas and $O(\poly(N))$ gates\footnote{Note that in the Supplemental Material of \cite{DGC2018}, $\convert_N$ was called $\append_0$.}.
		\item\label{prop:setopsunion} for any positive integers $k_1,k_2$ with $K=k_1+k_2\leq N$, $\union_{N, k_1,k_2}$ can be implemented reversibly using $O(K\log(N/K)+K+\log N)$ ancillas and $O(\poly(N))$ gates. 
	\end{enumerate}
\end{proposition}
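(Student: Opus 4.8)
The plan is to prove all three parts from one common scheme: traverse the basic encoding from left to right while carrying two $O(\log N)$-bit bookkeeping registers — a \emph{position pointer} $p$ marking the current trit and a \emph{running sum} $\sigma$ equal to the partial sum of the differences read so far (i.e.\ the set element currently reached) — performing only reversible updates. Three facts make this reversible. First, the inputs are never modified, and new symbols are written to the output only at positions currently holding $\ket 0$, so each write acts as a CNOT-style copy. Second, any scratch produced inside an iteration (comparison bits, arithmetic carries, a decoded difference) depends only on register contents that still exist at the end of that iteration, so it is uncomputed before the next iteration begins rather than allowed to pile up. Third, the \emph{final} values of $p$ and $\sigma$ are functions of the output alone ($\sigma$ ends at $\max S$, $p$ at the length of the encoding), so the bookkeeping registers are zeroed at the very end by recomputing those values from the output and XOR-ing them out. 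Every encoding in play has length $O(N)$ trits and each micro-step costs $\poly(N)$ gates — the only subtlety being that ``read the trit at position $p$'' is realised as a multiplexer over the $O(N)$ possible values of $p$ — so all gate counts come out $O(\poly(N))$.

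Parts (i) and (ii) are the easy cases. For $\convert_N$ I would compute the bit-length $m=\lceil\log_2(x+1)\rceil$ of $x$ into an $O(\log N)$-bit register, copy the binary digits of $x$ into the leading output trits (leading zeros copy harmlessly onto zeros), write the delimiter $2$ into output trit $m$ controlled on $m$, and finally uncompute $m$; this uses $O(\log N)$ ancillas and $O(\poly N)$ gates. For $\contains_{N,k}$, starting from $p=0,\sigma=0$ I would repeat $k$ times the step ``decode the difference block starting at $p$, add it to $\sigma$, advance $p$ past the following $2$'', and after each update compute $[\sigma=x]$ into a scratch bit, XOR it onto the output bit, and uncompute the scratch bit (which is possible since $\sigma$ and $x$ are still present); since the $y_i$ are distinct the output bit is flipped at most once, so it ends equal to $[x\in S]$. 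Resetting $p$ and $\sigma$ as above then gives $\contains_{N,k}$ with $O(\log N)$ ancillas and $O(\poly N)$ gates.

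Part (iii) is a reversible merge of two sorted difference-encoded lists. I would maintain pointers $p_1,p_2$ into the inputs, a pointer $p_3$ into the output register (or into a fresh buffer of the same size, swapped into the output register at the end), running sums $\sigma_1,\sigma_2$ reconstructing the two current ``frontier'' elements, and $\sigma_3$ holding the last element emitted. Each iteration decodes the frontiers $a$ of $S_1$ and $b$ of $S_2$ from $(p_1,\sigma_1)$ and $(p_2,\sigma_2)$; compares them (disjointness forces $a\ne b$, so there are no ties and no information is lost); writes $\min(a,b)-\sigma_3$ in binary followed by a $2$ at position $p_3$; and then advances $p_3$, sets $\sigma_3\leftarrow\min(a,b)$, advances the winning input pointer and re-decodes its new frontier. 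After $O(K)$ iterations, $K=k_1+k_2$, both inputs are exhausted and the buffer holds $\ket{\enc_N(S_1\cup S_2)}$; recall $|S_1\cup S_2|=K$ by disjointness, so this occupies $\lfloor K\log_2(N/K+1)\rfloor+2K$ trits. The per-iteration comparison bit and arithmetic scratch are cleaned within each iteration using, in particular, that the origin of the just-emitted element $\sigma_3$ is recoverable from the intact input $\ket{\enc_N S_1}$ by invoking $\contains_{N,k_1}$ from part (i); at the very end only the $O(\log N)$-bit pointers and running sums remain, and these are zeroed by recomputing their output-determined final values. The ancilla count is thus $O(\log N)$ for the bookkeeping plus $O(K\log(N/K)+K)$ for the buffer, i.e.\ $O(K\log(N/K)+K+\log N)$ as claimed, and with $O(K)$ iterations each of $O(\poly N)$ gates the gate count is $O(\poly N)$.

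The hard part is precisely the reversibility of the merge in part (iii): handled naively, each iteration leaves behind a comparison bit and arithmetic scratch which, if simply carried along, accumulate to $\Omega(K)$ junk registers and, if instead uncomputed by rerunning the whole prefix in Bennett fashion, blow up the gate count exponentially (exactly as discussed for $\setgen$ above). The proposal sidesteps both by ensuring every iteration's scratch is a function of data that survives that iteration — so it can be cleaned locally — and the crux is identifying the right such function, here the fact that the origin of the just-emitted element can be re-derived after the fact from $\contains$ together with disjointness of $S_1$ and $S_2$. The remaining ingredients — the trit-level circuits for decoding and re-encoding a single difference block, the handling of one list becoming exhausted, and the multiplexed memory accesses — are routine.
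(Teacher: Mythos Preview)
The paper does not actually prove this proposition: it is imported verbatim from \cite{DGC2018} (Lemmas~2,~5,~6 of the Supplemental Material there), so there is no in-paper argument to compare against. Your proposal is a self-contained proof and, as far as I can tell, a correct one along the lines one would expect the cited lemmas to take: a left-to-right scan of the trit string with an $O(\log N)$-bit position pointer and running sum, with all per-step scratch uncomputed before the next step and the final pointer/sum values zeroed by recomputation from the surviving data.

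Your treatment of (i) and (ii) is straightforward and fine. For (iii) you have identified the genuine difficulty --- the per-iteration comparison bit --- and the right fix: since $S_1$ and $S_2$ are disjoint, once the just-emitted element sits in $\sigma_3$ you can recover the branch taken by calling $\contains_{N,k_1}$ on $\sigma_3$, so the comparison bit is a function of data that survives the iteration and can be erased locally. That is exactly the kind of observation the paper emphasises elsewhere (avoiding both the $\Omega(K)$ junk pile and the Bennett-style exponential blow-up), and it is the crux of the lemma.

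Two small remarks. First, your ``routine'' handling of an exhausted list and of the pointer-advance bookkeeping does hide some fiddly reversibility details (e.g.\ the advance amount depends on the old pointer, which has just changed); the cleanest way to discharge these is to let $p_1,p_2$ be \emph{element counters} rather than trit positions and to recompute the current frontiers from scratch each iteration at $O(\poly N)$ cost, which keeps all transient quantities trivially uncomputable. Second, if you write directly into the output register rather than a separate buffer, your scheme seems to need only $O(\log N)$ ancillas for (iii), which would be stronger than the stated $O(K\log(N/K)+K+\log N)$; you are of course free to claim only the weaker bound, but it is worth noting.
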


Note in particular that in Proposition~\ref{prop:setops}(\ref{prop:setopscontains}) and (\ref{prop:setopsunion}), the runtime bound (i.e., the degree of the polynomial) does not depend on $k$ or $k_1,k_2$, respectively. In fact, none of the operations depend on the set-sizes in any relevant way.  

Although $\ket{\enc_N S}$ is itself space-efficient, it does not allow for a set generation procedure that is simultaneously space- and time-efficient as well as reversible, for the reasons explained at the beginning of this section. We now define a memory-structure that allows for this task. 

\begin{definition}\label{def:EffEnc}
	Let $N,k$ be  positive integers with $k\leq N$. Let $S\subset\{1,\ldots,N\}$ with $|S|=k$, and let $Z=(x_1,\ldots,x_k)$ be a permutation of the elements of $S$. Then, the \emph{efficient encoding} $\ket{\EffEnc_N Z}$ of $Z$ is definded as follows: suppose that $k$ has binary representation $k=2^{a_1}+\cdots+2^{a_s}$ with integers $\lfloor\log_2 k\rfloor = a_1 > a_2 >\cdots > a_s\geq 0$. For $j=1,\ldots,s$, let $k_j:=2^{a_1}+\cdots+2^{a_j}$. Then, $\ket{\EffEnc_N Z}$ is defined as
	\begin{equation}
		\ket{\EffEnc_N Z} := \ket{\enc_N \{x_1,\ldots,x_{k_1}\}}\ket{\enc_N \{x_{k_1+1},\ldots,x_{k_2}\}}\ldots \ket{\enc_N \{x_{k_{s-1}+1},\ldots,x_{k}\}}. 
	\end{equation}
\end{definition}

\begin{example}
 Suppose $k=13=8+4+1$ and $Z=(x_1,\ldots,x_{13}) $. Then,
 \begin{equation}
 	\ket{\EffEnc_N Z} = \ket{\enc_N \{x_1,x_2,x_3,x_4,x_5,x_6,x_7,x_8\}}\ket{\enc_N \{x_9,x_{10},x_{11},x_{12}\}}\ket{\enc_N \{x_{13}\}}.
 \end{equation}
 \hfill$\blacksquare$
\end{example}

\begin{proposition}
	For any positive integers $k\leq N$ and distinct integers $x_1,\ldots,x_k\in\{1,\ldots,N\}$, $\ket{\EffEnc_N Z}$ comprises at most $\lfloor 2k\log_2(N/k+1) \rfloor + 8k = O(k\log(N/k)+k)$ trits, where $Z=(x_1,\ldots,x_k)$.
\end{proposition}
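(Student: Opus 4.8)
The plan is to bound the total trit count of $\ket{\EffEnc_N Z}$ by summing the sizes of its constituent basic encodings and then applying a concavity (Jensen) argument, exactly mirroring the bound already established for a single $\ket{\enc_N S}$. First I would recall that $\ket{\EffEnc_N Z}$ consists of $s$ blocks, where the $j$-th block is $\ket{\enc_N S_j}$ for a set $S_j$ of size $m_j := k_j - k_{j-1}$ (with $k_0 := 0$), so that $m_1 = 2^{a_1}, m_2 = 2^{a_2}, \ldots, m_s = 2^{a_s}$ and $\sum_{j=1}^s m_j = k$. By the definition of the basic encoding, block $j$ uses at most $\lfloor m_j \log_2(N/m_j + 1)\rfloor + 2 m_j$ trits, hence the total is at most $\sum_{j=1}^s \bigl( m_j \log_2(N/m_j + 1) + 2 m_j\bigr)$.

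Next I would handle the two pieces of this sum separately. The linear term gives $\sum_j 2 m_j = 2k$ immediately. For the logarithmic term, the function $t \mapsto t\log_2(N/t + 1)$ is concave on $(0,\infty)$, so by Jensen's inequality (the weighted form, or equivalently the subadditivity of concave functions vanishing at $0$) one gets $\sum_{j=1}^s m_j \log_2(N/m_j + 1) \leq k \log_2\bigl(N/k \cdot (\text{number of terms})\big/1 + \cdots\bigr)$ — more carefully, since the $m_j$ sum to $k$ and there are at most $s \leq \lfloor \log_2 k\rfloor + 1$ of them, the crude bound $m_j \log_2(N/m_j+1) \leq m_j \log_2(N + 1)$ is too weak, so instead I would use that each $m_j \geq 1$ together with concavity to argue $\sum_j m_j \log_2(N/m_j+1) \leq k\log_2(N/k + 1) + (\text{lower-order})$; in fact the cleanest route is to note $m_j\log_2(N/m_j+1)\le m_j\log_2(N/k\cdot(k/m_j)+1)$ and bound $\sum_j m_j \log_2(k/m_j+1)\le k\log_2 2 = k$ again by concavity since $\sum m_j=k$, while $\sum_j m_j\log_2(N/k)\cdot(\text{with care})$ — to avoid fuss I would simply observe that since the $m_j$ are distinct powers of two summing to $k$, we have $s = O(\log k)$ and each $m_j \le k$, so $\sum_j m_j\log_2(N/m_j+1)\le k\log_2(N/1+1) $ is available but wasteful; the tight constant $2k\log_2(N/k+1)$ claimed in the statement comes from comparing against $2\times$ the single-encoding bound, i.e. bounding the sum of two "halves" — I would verify that $\sum_j m_j\log_2(N/m_j+1) \le 2k\log_2(N/k+1)$ holds because the worst case is $s=2$ with $m_1=m_2=k/2$, giving $k\log_2(2N/k+1)\le k\log_2(N/k+1)+k$, and more generally induction on $s$ using concavity keeps us under $2k\log_2(N/k+1) + 4k$.

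Combining, the total is at most $2k\log_2(N/k+1) + 8k$ trits (the slack in the additive $8k$ absorbing the $s$ ceiling corrections $\lfloor\cdot\rfloor$ across the $O(\log k)$ blocks and the induction overhead), and since $k\log_2(N/k+1) = O(k\log(N/k) + k)$ and the $8k$ term is $O(k)$, we get the claimed $O(k\log(N/k)+k)$. Finally I would take the floor to write $\lfloor 2k\log_2(N/k+1)\rfloor + 8k$ as stated, absorbing any fractional parts into the constant.

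The main obstacle I anticipate is pinning down the constant in front of $k\log_2(N/k+1)$: the single-block Jensen bound gives coefficient $1$ for one set of size $k$, but splitting into $s$ blocks of sizes $m_j$ summing to $k$ costs a factor that must be shown to stay bounded by $2$. The key fact making this work is that the $m_j$ are \emph{distinct} powers of two in decreasing order, so $m_2 \le m_1/2$, $m_3\le m_1/4$, etc., which geometrically controls the number and sizes of blocks; the essential inequality is that for such a sequence $\sum_j m_j\log_2(N/m_j+1)\le 2 m_1\log_2(N/m_1+1) + O(m_1) \le 2k\log_2(N/k+1)+O(k)$ since $m_1 \le k \le 2m_1$. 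I would carry out this estimate cleanly by first replacing $\log_2(N/m_j+1)$ with $\log_2(N/m_j) + 1$ (valid since $m_j \le N$), handling the $+1$ via the linear term, and then using $m_j\log_2(N/m_j) \le m_1\log_2(N/m_1)\cdot 2^{-(j-1)}\cdot(\text{a mild factor})$ — summing the geometric-type series then gives a constant strictly below $2$, with everything else folded into the $8k$.
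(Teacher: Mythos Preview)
Your proposal eventually converges on the right idea, but the path there is cluttered with false starts that should be cut. The Jensen/concavity attempts in the middle paragraph do not lead anywhere: the function $t\mapsto t\log_2(N/t+1)$ is concave, but concavity alone, applied to $s$ points summing to $k$, gives you $\sum_j m_j\log_2(N/m_j+1)\ge k\log_2(N/k+1)$ (the wrong direction), not an upper bound. Your ``worst case is $s=2$ with $m_1=m_2=k/2$'' is also impossible, since the $m_j$ are \emph{distinct} powers of two. What actually controls the sum is precisely the geometric decay you identify in the last paragraph, and that is the only argument you need.

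The paper executes this cleanly by (i) overcounting, summing over \emph{all} $l=0,\dots,\lfloor\log_2 k\rfloor$ rather than only those appearing in the binary expansion of $k$, and (ii) splitting
\[
2^l\log_2\frac{N+2^l}{2^l}\;=\;2^l\log_2\frac{N+2^l}{k}\;+\;2^l\log_2\frac{k}{2^l}.
\]
The first piece sums to at most $2k\log_2(N/k+1)$ since $\sum_l 2^l\le 2k$ and $N+2^l\le N+k$; the second piece, after substituting $j=\lceil\log_2 k\rceil-l$, is bounded by $2k\sum_{j\ge 0} j\,2^{-j}=4k$; and the linear $2\cdot 2^l$ terms sum to at most $4k$. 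This gives the stated $2k\log_2(N/k+1)+8k$ with no induction and no case analysis. Your final paragraph is groping toward exactly this decomposition---the split $m_j\log_2(N/m_j)=m_j\log_2(N/k)+m_j\log_2(k/m_j)$ followed by a geometric-series bound on $\sum m_j\log_2(k/m_j)$---so I would drop everything before that and write it out directly.
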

\begin{proof}
	For any $S'\subset\{1,\ldots, N\}$, $\ket{\enc_N S'}$ comprises at most $|S'|\log_2(N/|S'| + 1) + 2|S'|$ trits. Thus, the number of trits in $\ket{\EffEnc_N Z}$ is at most
	\begin{align}
		\sum_{l=0}^{\lfloor \log_2 k\rfloor}\left( 2^l\log_2\frac {N+2^l}{2^l}  + 2^{l+1}\right)
		&= \sum_{l=0}^{\lfloor \log_2 k\rfloor} 2^l\log_2 \frac{N+2^l}{k} + \sum_{l=0}^{\lfloor \log_2 k\rfloor} 2^{l+1} + \sum_{l=0}^{\lfloor \log_2 k\rfloor} 2^l\log_2 \frac{k}{2^l}\\
		&\leq \sum_{l=0}^{\lfloor \log_2 k\rfloor} 2^l\log_2 \frac{N+k}{k}  + \sum_{l=0}^{\lfloor \log_2 k\rfloor} 2^{l+1} + k\sum_{l=0}^{\lfloor \log_2 k\rfloor} \frac{2^l}{k} \log_2\frac k{2^l} \\
		&\leq 2k\log_2\left(\frac Nk+1\right) + 4k + 2k\sum_{l=0}^{\lfloor \log_2 k\rfloor} \frac{1}{2^{\lceil \log_2 k\rceil - l}} (\lceil \log_2 k\rceil -l)\\
		&\leq  2k\log_2\left(\frac Nk+1\right) + 4k + 2k\sum_{j=0}^\infty \frac j{2^j} \\
		&\leq 2k\log_2\left(\frac Nk+1\right) + 8k,
	\end{align}
	which proves the claim. 
\end{proof}

Note that since we always assume that we work with sets of known sizes, basic set operations can be ``lifted'' from $\ket{\enc_N S}$ to $\ket{\EffEnc_N Z}$. 

\begin{definition}
	For any positive integers $N,k$ with $k\leq N$, let $\effcontains_{N,k}$ be the operation that performs 
	\begin{equation}\label{eq:effencdef}
		\effcontains_{N,k}\ket{\EffEnc Z}\ket x \ket 0 = \ket{\EffEnc Z} \ket x \ket{ x\in Z?}
	\end{equation}
	for any integer $x\in\{1,\ldots, N\}$ and distinct integers $x_1,\ldots,x_k\in \{1,\ldots N\}$, where $Z=(x_1,\ldots,x_k)$ and the last bit on the right-hand side of \eqref{eq:effencdef} is $1$ if $x=x_j$ for some $j\in\{1,\ldots,k\}$ and $0$ otherwise. 
\end{definition}
\begin{proposition}\label{prop:effcontains}
	For all positive integers $N,k$ with $k\leq N$, $\effcontains_{N,k}$ can be implemented reversibly using $O(\log N)$ ancillas and $O(\poly(N))$ gates. 
\end{proposition}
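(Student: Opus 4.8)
The plan is to exploit the fact that $\ket{\EffEnc_N Z}$ is, by construction, nothing but the concatenation of $s$ basic encodings $\ket{\enc_N B_1},\ldots,\ket{\enc_N B_s}$, where (writing $k_0:=0$) $B_j=\{x_{k_{j-1}+1},\ldots,x_{k_j}\}$ has size $2^{a_j}$, the blocks $B_1,\ldots,B_s$ are pairwise disjoint, and $B_1\cup\cdots\cup B_s=\{x_1,\ldots,x_k\}$. Consequently $x$ occurs in $Z$ if and only if $x\in B_j$ for exactly one $j$ (or for none), so that $[x\in Z]=\bigoplus_{j=1}^s [x\in B_j]$. This reduces $\effcontains_{N,k}$ to $s=O(\log N)$ invocations of $\contains$ on blocks of known, fixed size.

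First I would record that the layout of $\ket{\EffEnc_N Z}$ is determined entirely by $N$ and $k$: the binary expansion $k=2^{a_1}+\cdots+2^{a_s}$ fixes $s$ and the sizes $2^{a_j}$, and $\ket{\enc_N B_j}$ occupies a block of the fixed length $\lfloor 2^{a_j}\log_2(N/2^{a_j}+1)\rfloor+2\cdot 2^{a_j}$ trits. Hence the circuit I construct can address each sub-register $\ket{\enc_N B_j}$ directly. Next, for each $j=1,\ldots,s$ I would build a subroutine $f_j$ acting on $\ket{\EffEnc_N Z}$, the input register $\ket x$, the output bit, and $O(\log N)$ ancillas initialised to $\ket 0$, as follows: apply $\contains_{N,2^{a_j}}$ to $\ket{\enc_N B_j}$, $\ket x$ and one designated ancilla, writing $[x\in B_j]$ into it; copy that ancilla into the output bit via a CNOT; then apply $\contains_{N,2^{a_j}}^{-1}$ to reset the ancilla to $\ket 0$, which is valid because the CNOT left $\ket{\enc_N B_j}$, $\ket x$ and that ancilla in exactly the state $\contains_{N,2^{a_j}}$ had produced. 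By Proposition~\ref{prop:setops}(\ref{prop:setopscontains}) — together with the fact that inverting a reversible circuit preserves its gate and ancilla counts — $f_j$ reversibly maps $\ket{\EffEnc_N Z}\ket x\ket b\ket 0\mapsto\ket{\EffEnc_N Z}\ket x\ket{b\oplus[x\in B_j]}\ket 0$ using $O(\log N)$ ancillas and $O(\poly(N))$ gates, with the polynomial degree independent of $j$ (as the degree in Proposition~\ref{prop:setops}(\ref{prop:setopscontains}) does not depend on the set size).

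Finally I would set $\effcontains_{N,k}=f_s\circ\cdots\circ f_1$. Since the output bit starts in $\ket 0$, after all $s$ steps it holds $\bigoplus_{j=1}^s[x\in B_j]=[x\in Z]$, as required by \eqref{eq:effencdef}. Proposition~\ref{prop:compositereversible} then yields a reversible implementation of the composition using $\max_j O(\log N)=O(\log N)$ ancillas and $\sum_{j=1}^s O(\poly(N))$ gates; since $s\le\lfloor\log_2 k\rfloor+1=O(\poly(N))$, the total gate count is $O(\poly(N))$, which proves the claim.

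The only real work is bookkeeping: confirming that the block boundaries are fixed a priori (so the circuit structure is well defined) and that the $\contains^{-1}$ step genuinely restores the ancilla. The disjointness of the blocks $B_1,\ldots,B_s$, which is what makes the XOR-accumulation compute the correct indicator, is immediate from Definition~\ref{def:EffEnc}, so I do not anticipate any genuine obstacle here; this proposition is essentially a packaging of Proposition~\ref{prop:setops}(\ref{prop:setopscontains}) together with the block structure of the efficient encoding.
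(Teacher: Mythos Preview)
Your proposal is correct and follows essentially the same approach as the paper: reduce $\effcontains_{N,k}$ to $s\le\lfloor\log_2 k\rfloor+1$ calls of $\contains_{N,2^{a_j}}$ on the constituent blocks of the efficient encoding. The only minor difference is that the paper computes all $s$ membership bits into $s$ separate ancillas, applies a logical OR, and then uncomputes, whereas you process the blocks sequentially and XOR each result into the output bit, exploiting the disjointness of the blocks to ensure XOR coincides with OR; both variants give the same $O(\log N)$ ancilla and $O(\poly(N))$ gate bounds.
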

In particular, the runtime bound in Proposition~\ref{prop:effcontains} does not depend on $k$. 
\begin{proof}[Proof of Proposition~\ref{prop:effcontains}]
	This follows immediately from Proposition~\ref{prop:setops}(\ref{prop:setopscontains}): Introduce $s\leq \log_2 k = O(\log N)$ ancilla bits (where $s$ is defined as in Definition~\ref{def:EffEnc}), run $\contains_{N,2^{a_j}}$ on $\ket{\enc_N \{x_{k_{j-1}+1},\ldots,x_{k_j}\}}\ket x$ for all $j=1,\ldots,s$ (where $a_1,\ldots,a_s$ and $k_1,\ldots,k_s$ are defined as in Definition~\ref{def:EffEnc}) such that the outcome is stored in the $j$\textsuperscript{th} ancilla, apply a logical OR over the $s$ ancilla bits, and finally uncompute the $s$ ancilla bits by running $\contains_{N,2^{a_j}}^{-1}$ for $j=1,\ldots,N$. 
\end{proof}

Note that if $N'>N$ are integers and $S\subset\{1,\ldots,N\}$, then $\ket{\enc_N S}$ and $\ket{\enc_{N'}S} $ only differ in the number of additional $\ket{0}$'s at the end of the encoding. For the remainder of the paper, whenever the value of $N$ is clear from context, we will drop the subindex $N$ for simplicity of notation and simply write $\ket{\enc S}$, $\ket{\EffEnc Z}$, $\contains_k$, $\convert$, $\union_{k_1,k_2}$ and $\effcontains_k$  instead. 

\subsection{Efficient and reversible set generation} \label{subsec:iterative}

Suppose we want to generate a set $X( \nu)=\{x_1\ldots,x_r\}\subset \{1,\ldots, N\}$ of size $r$ from some input register $\ket{\nu}$. We assume that we can generate the elements iteratively, i.e. we have access to circuits that generate $x_i$ from $\nu$ and $\{x_1,\ldots, x_{i-1}\}$. As discussed in at the beginning of this section, simply using the basic encoding $\ket{\enc X}$ and adding one elment $x_i$ to $\ket{\enc X}$ at a time is problematic, as this incurs problems with reversibility or exponential computational overheads. In this subsection, we show how this can be circumvented using the $\ket{\EffEnc Z}$ encoding from the previous subsection. 

The intuition for why $\ket{\EffEnc Z}$, unlike $\ket{\enc X}$, allows for time-efficient uncomputation is that by splitting $X$ into smaller subsets, we can generate and uncompute these subsets more efficiently, thus avoiding the necessity of uncomputing the entire set for each newly added element, which can be seen as the reason for the exponential overhead in the naive reversible implementation given at the beginning of this section.

\begin{theorem}\label{thm:setgen}
	Let $N,r$ be known positive integers with $r\leq N$, and let $\mathcal I$ be a finite set. Let $X:\mathcal I \rightarrow \mathcal P_r([N])$, where $\mathcal P_r([N]) := \{ S\subset \{1,\ldots,N\}, |S|=r\}$. 
	Suppose that  for  $i=1,\ldots r$, $\calculate_i$ are reversible circuits such that for all $\nu\in \mathcal I$, there is a permutation $(x_1,\ldots , x_r)$ of the elements of $X(\nu)$ such that for all $i=1,\ldots,r$,
\begin{equation}\label{eq:calculateithmdef}
	\calculate_i \ket{\nu} \ket{\EffEnc Z_{i-1}}\ket 0\ket 0= \ket{\nu}\ket{\EffEnc Z_{i-1}}\ket{x_i}\ket 0, 
\end{equation}
where $Z_i=(x_1,\ldots, x_i)$, and the last register in \eqref{eq:calculateithmdef} comprises at most $A$ ancillas. Then, the operation
\begin{equation} \label{eq:targetmap}
	\ket{ \nu}\ket 0 \mapsto \ket{ \nu}\ket{\EffEnc Z_r}
\end{equation}
can be implemented reversibly using $O(r\log(N/r) + r + \log N+A)$ ancillas, $O(r^2)$ calls to $\calculate_i$ for some $i\in\{1,\ldots,r\}$, and $O(\poly(N))$ additional gates. 
\end{theorem}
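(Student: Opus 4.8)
The plan is to build the encoding $\ket{\EffEnc Z_r}$ recursively by mirroring the binary decomposition $r = 2^{a_1} + \cdots + 2^{a_s}$ that defines the $\EffEnc$ structure. The key primitive will be a family of subroutines $\setgen_{[a,b]}$ which, given $\ket\nu$ together with $\ket{\EffEnc Z_{a}}$ (the encoding of the first $a$ already-generated elements), produce the basic encoding $\ket{\enc_N\{x_{a+1},\dots,x_b\}}$ of the next block of $b-a$ elements, where $b-a$ is a power of two. Concretely, for a block of length $2^m$ I would first recursively generate and concatenate the two half-blocks of length $2^{m-1}$, which together with $\ket{\EffEnc Z_a}$ form $\ket{\EffEnc Z_{a+2^m}}$ restricted appropriately, then call $\union_{2^{m-1},2^{m-1}}$ (Proposition~\ref{prop:setops}(iii)) to merge them into a single $\ket{\enc_N}$ block, and finally \emph{uncompute} the two half-block encodings by running the half-block generators in reverse. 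The base case $m=0$ is a single element: call the appropriate $\calculate_i$ to produce $\ket{x_{a+1}}$, then $\convert_N$ to turn it into $\ket{\enc_N\{x_{a+1}\}}$, then uncompute $\ket{x_{a+1}}$ by $\calculate_i^{-1}$.

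The crucial point, and the reason this avoids the exponential blow-up flagged at the start of the section, is the cost recurrence. Let $g(m)$ be the gate count of $\setgen$ for a block of size $2^m$. Generating a size-$2^m$ block costs: two forward calls to the size-$2^{m-1}$ generator, one $\union$ (which is $O(\poly N)$ by Proposition~\ref{prop:setops}), and two reverse calls to the size-$2^{m-1}$ generator for uncomputation — so $g(m) \le 4 g(m-1) + O(\poly N)$. Wait — that is still exponential in $m$. The fix, which is what makes the $O(r^2)$ count in the statement work, is that the two half-blocks are \emph{not} generated independently from scratch: the second half-block generator reuses the first half-block as part of its input $\EffEnc$ prefix, and more importantly the recursion should be unrolled linearly in the index $i$ rather than in the block structure. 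Concretely, I would generate the elements $x_1,\dots,x_r$ one at a time in order; after producing $x_i$ via $\calculate_i$ and $\convert_N$, I merge it into the running $\EffEnc$ structure using a "carry" procedure analogous to binary incrementation: if the lowest block currently has size $2^t$ and a new size-$1$ block arrives, repeatedly $\union$ equal-sized blocks (each $\union$ costing $O(\poly N)$) and uncompute the merged-away copies. Each $\calculate_i$ call is used $O(1)$ times per index plus its uncomputations, and over the whole run the total number of $\union$ operations is $O(r)$ while the total number of $\calculate$ calls is $O(r)$ — but uncomputation of intermediate $\enc$ blocks, done by replaying the sub-history, is what pushes this to $O(r^2)$ $\calculate$-calls and $O(\poly N)$ extra gates per step, hence $O(r^2)\cdot\poly(N)$ overall; I would track this bookkeeping carefully.

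For the ancilla count, the point is that at any moment the workspace holds: the target $\EffEnc$ structure being built (size $O(r\log(N/r)+r)$ trits by the preceding Proposition), at most $O(\log r)$ intermediate $\enc$ blocks of total size $O(r\log(N/r)+r)$ awaiting merging or uncomputation, one scratch $\ket{x_i}$ register of $O(\log N)$ bits, the $O(\log N)$ ancillas needed by $\contains$/$\convert$ and the $O(K\log(N/K)+K+\log N)$ ancillas needed by the largest $\union$ (which is $O(r\log(N/r)+r+\log N)$), plus the $A$ ancillas required by the $\calculate_i$ circuits. Invoking Proposition~\ref{prop:compositereversible}, composing all these reversible pieces keeps the ancilla count at the \emph{maximum} over the constituent steps rather than the sum, yielding the claimed $O(r\log(N/r)+r+\log N+A)$. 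The reusability of uncomputed ancillas is exactly what Proposition~\ref{prop:compositereversible} formalizes, so once each step is phrased as a reversible implementation in the sense of the Definition, the composition bound applies directly.

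The main obstacle I anticipate is getting the uncomputation schedule exactly right so that it is simultaneously reversible, cheap, and low-memory: the tension is that uncomputing an intermediate $\enc$ block of size $2^m$ by naively replaying its construction would cost as much as building it, and if done recursively at every level one recovers the exponential overhead. The resolution is to organize the computation so that each intermediate block, once merged, is uncomputed by running in reverse only the $O(\poly N)$-gate, $O(m)$-or-so-$\calculate$-call local procedure that built it from already-available $\EffEnc$ data — never the entire history — and to show by a clean amortized/recursive count that the totals are $O(r^2)$ $\calculate$-calls and $O(\poly N)$ gates. Verifying that the intermediate states at every point genuinely have the $\EffEnc$ form required as input to $\calculate_i$ (so that \eqref{eq:calculateithmdef} applies) is the other delicate bookkeeping step, and is where the power-of-two block decomposition pays off.
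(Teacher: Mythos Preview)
Your first approach---recursively building a size-$2^m$ block from two size-$2^{m-1}$ half-blocks via $\union$, then uncomputing the halves by running the half-block generators in reverse---is exactly the paper's construction, and you abandoned it for the wrong reason. The recurrence $g(m)\le 4g(m-1)+O(\poly N)$ is indeed exponential in $m$, but $m$ ranges only up to $\lfloor\log_2 r\rfloor$, so $4^m\le r^2$: this recurrence yields precisely the claimed $O(r^2)$ calls to $\calculate_i$ and $O(r^2\poly N)=O(\poly N)$ additional gates (since $r\le N$). The paper writes the recursion as
\[
R_{i,l+1}=R_{i,l}^{-1}\,R_{i+2^l,l}^{-1}\,\union_{2^l,2^l}\,R_{i+2^l,l}\,R_{i,l},
\qquad
R_{i,0}=\calculate_{i+1}^{-1}\,\convert\,\calculate_{i+1},
\]
obtains $L_l=2\cdot 4^l$ for the $\calculate$-call count of each $R_{i,l}$, and sums over the binary digits $a_1>\cdots>a_s$ of $r$ to get $\sum_j L_{a_j}\le 2\sum_{l=0}^{\lfloor\log_2 r\rfloor}4^l=O(r^2)$.

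Your second approach (binary-carry incrementation) is therefore unnecessary, and your accounting there is hand-waving: the sentence ``total number of $\calculate$ calls is $O(r)$ --- but uncomputation \dots\ pushes this to $O(r^2)$'' is not an analysis but a restatement of the thing to be proved, and making it precise would lead you straight back to the $4^m$ recursion you just discarded. The one genuine subtlety you correctly flagged is that after appending a half-block $\ket{\enc\{x_{i+1},\dots,x_{i+2^l}\}}$ to $\ket{\EffEnc Z_i}$ one must check the concatenation is a bona fide $\ket{\EffEnc Z_{i+2^l}}$ so that the next $\calculate$ may legally be applied; this holds exactly when $2^{l+1}\mid i$, which the binary block structure guarantees at every step of the recursion.
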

\begin{proof}
	For any positive integer $i$, write $g(i)$ to be the largest integer $g$ such that $2^g$ divides $i$. 
	Then, for all $i\in\{1\ldots,r-1\}$ and $l\in\{0,\ldots, g(i)\}$ such that $i+2^l\leq r$, 
	let $R_{i,l}$  be the operation that performs
	\begin{equation}
		R_{i,l}\ket \nu\ket{\EffEnc Z_{i}}\ket 0 =\ket \nu\ket{\EffEnc Z_{i}}\ket{\enc \{x_{i+1},\ldots x_{i+2^l}\}}
	\end{equation}
	for all $\nu\in \mathcal I$, where $x_1,\ldots, x_r$ and $Z_1,\ldots, Z_r$ are as in the statement of the theorem. 
	Note that if $l\leq g(i)-1$, then
	\begin{equation}\label{eq:RilEffEnc}
		\ket{\EffEnc Z_{i}}\ket{\enc \{x_{i+1},\ldots, x_{i+2^l}\}} = \ket{\EffEnc Z_{i+2^l}}.
	\end{equation}	 
	We also define $R_{0,l}$ for all $l\leq \lfloor \log_2 r\rfloor$ to be the operation that performs 
	\begin{equation}
		R_{0,l}\ket \nu\ket 0 =\ket \nu\ket{\enc \{x_{1},\ldots, x_{2^l}\}} = \ket \nu \ket{\EffEnc Z_{2^l}}
	\end{equation}	
	for all $\nu\in\mathcal I$. 
	Note that  since $\convert\ket{x_i}\ket 0 = \ket{x_i}\ket{\enc \{x_i\}}$, 
	a reversible implementation of 
	 $R_{i,l}$ can be obtained recursively via
	\begin{equation}\label{eq:Ri0}
		R_{i,0} = \calculate_{i+1}^{-1}\convert\ \calculate_{i+1}
	\end{equation} for all $i\in\{0,\ldots,r-1\}$, 
	and 
	\begin{equation}\label{eq:Rilrecursion}
		R_{i,l+1} =  R_{i,l}^{-1}R_{i+2^{l},l}^{-1} \union_{2^{l},2^{l}} R_{i+2^{l},l} R_{i,l}
	\end{equation}
	for all $i\in\{0,\ldots,r-1\}$ and $l \in\{0,\ldots, g(i)-1\}$ such that $i+2^{l+1}\leq r$.	
	 Indeed, for $i\in\{1,\ldots,r\}$ and $l\leq g(i)-1$, \eqref{eq:RilEffEnc} implies that 
	\begin{equation}
		R_{i,l}\ket \nu\ket{\EffEnc Z_{i}}\ket 0  =
		\ket \nu \ket{\EffEnc Z_{i+2^l}}.
	\end{equation}		 
	  Hence, $R_{i+2^l,l}R_{i,l}$ maps $\ket \nu\ket{\EffEnc Z_{i}}\ket0\ket0\ket0$ to 
	\begin{equation}
		\ket \nu\ket{\EffEnc Z_{i}}\ket{\enc \{x_{i+1},\ldots,x_{i+2^l}\}}\ket{ \enc\{x_{i+2^l+1},\ldots,x_{i+2^{l+1}} \}}\ket0,
	\end{equation}
	and $\union_{2^l,2^l}$ maps the latter to
	\begin{equation}
		\ket \nu\ket{\EffEnc Z_{i}}\ket{\enc \{x_{i+1},\ldots,x_{i+2^l}\}}\ket{ \enc\{x_{i+2^l+1},\ldots,x_{i+2^{l+1}} \}}\ket{\enc \{x_{i+1},\ldots,x_{i+2^{l+1}} \}}.
	\end{equation}
	Finally, $ R_{i,l}^{-1}R_{i+2^l,l}^{-1}$ uncomputes the registers containing $\ket{\enc \{x_{i+1},\ldots,x_{i+2^l}\}}\ket{ \enc\{x_{i+2^l+1},\ldots,x_{i+2^{l+1}} \}}$, proving Eq.~\eqref{eq:Rilrecursion} for $i\in\{1,\ldots,r\}$. A similar argument also shows that \eqref{eq:Rilrecursion} holds for $i=0$.  
	
	Suppose that $r$ has  binary expansion $r=2^{a_1}+2^{a_2}+\cdots+2^{a_s}$, with integers $\lfloor \log_2 r\rfloor = a_1 > a_2 >\cdots > a_s = g(r)$. For $j=1,\ldots,s$, define $r_j:=2^{a_1} + \cdots + 2^{a_j}$, and 
	\begin{equation}\label{eq:defR}
		R_j := R_{r_{j-1},a_j}\cdots R_{r_1,a_2}R_{0, a_1}.
	\end{equation}
	Note that $a_j\leq g(r_{j-1})-1$ for all $j\in\{2,\ldots,s\}$, so \eqref{eq:RilEffEnc} implies that \eqref{eq:defR} is well-defined. Note moreover that $R_j\ket \nu\ket 0 = \ket \nu \ket{\EffEnc Z_{r_j}}$ for all $j\in\{1,\ldots,s\}$. In particular, $R_s$ performs the desired operation \eqref{eq:targetmap}.
	
	 It follows from \eqref{eq:Rilrecursion} that each individual $R_{i,l}$ constitutes a reversible implementation using at most $O(r\log(N/r)+r+\log N+A)$ ancillas.  Hence, Proposition~\ref{prop:compositereversible} implies that $R_s$ can be implemented reversibly using at most $O(r\log(N/r)+r+\log N+A)$ ancillas. 
	
	To bound the number of calls to $\calculate_{i'}$ and additional gate count of $R_s$, let $L_{i,l}$ be the number of calls of $R_{i,l}$ to $\calculate_{i'}$ for some $i'\in\{1,\ldots,r\}$, and let $M_{i,l}$ be the number of additional gates of $R_{i,l}$, respectively. Let $L_l = \max \{L_{i,l} : i\in\{0,\ldots,r-1\}, l\leq g(i), i+2^l\leq r\}$ and $M_l = \max\{M_{i,l} : i\in\{0,\ldots,r-1\}, l\leq g(i), i+2^l\leq r\}$. Eq.~\eqref{eq:Ri0} and \eqref{eq:Rilrecursion} clearly imply $L_l=2\cdot 4^l$. Moreover, since $\union_{2^l,2^l} $ can be implemented using at most $p(N)$ gates, where $p$ is a polynomial independent of $l$, it follows from \eqref{eq:Rilrecursion} that $M_{l+1}\leq 4M_l + p(N)$, implying $M_l=O(4^l\poly(N))$. Thus, $R_s$ uses at most 
	\begin{equation}
		L_{a_1}+\cdots+L_{a_s} \leq 2(1+4+\cdots+4^{\lfloor \log_2 r\rfloor}  )= O(r^2)
	\end{equation}
	calls to $\calculate_{i'}$ for some $i'\in\{1,\ldots,r\}$, and 
	\begin{equation}
		M_{a_1}+\cdots+M_{a_s} = O((1+4+\cdots+4^{\lfloor \log_2 r\rfloor})\poly(N)) = O(\poly(N))
	\end{equation}
	additional gates. 
\end{proof}

Theorem~\ref{thm:setgen} generates the efficient encoding $\ket{\EffEnc Z_r}$ of the set $X(\nu)$ instead of the simple encoding $\ket {\enc X(\nu)}$. For most applications, the former is sufficient, since the value of $r$ is known and simple set queries for checking properties of $X(\nu)$ (e.g. checking if $X(\nu)$ contains certain elements) are generally just as simple to implement reversibly using $\ket{\EffEnc Z_r}$ as with $\ket{\enc X(\nu)}$ (see Proposition~\ref{prop:effcontains}). We remark however that $\ket{\EffEnc Z_r}$ can be converted to $\ket{\enc X(\nu)}$ using a sequence of calls to $\union$ (for appropriate set sizes) and uncomputations. Since at most $O(\log r)$ union operations are required, is easy to see that this can be implemented with $O(\poly(N))$ calls to $R_{i,l}$ (as defined in the proof of Theorem~\ref{thm:setgen}) for suitable values of $i,l$, and $O(\poly(N))$ additional gates.

\begin{corollary}
	With the same notation as in Theorem~\ref{thm:setgen}, the operation
	\begin{equation}
		\ket \nu\ket 0 \mapsto \ket \nu\ket{\enc X(\nu)}
	\end{equation}
	can be implemented reversibly using $O(r\log (N/r)+r+\log N+A)$ ancillas, $O(\poly(N))$ calls to $\calculate_i$, and $O(\poly(N))$ additional gates. 
\end{corollary}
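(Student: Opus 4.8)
The plan is to build the conversion in two stages: first invoke Theorem~\ref{thm:setgen} to produce $\ket\nu\ket{\EffEnc Z_r}$ from $\ket\nu\ket 0$, then append a short ``collapsing'' stage that merges the $O(\log r)$ basic-encoding blocks comprising $\ket{\EffEnc Z_r}$ into a single $\ket{\enc X(\nu)}$, cleaning up all intermediate registers. Recall that if $r = 2^{a_1}+\cdots+2^{a_s}$ with $a_1 > \cdots > a_s$, then by Definition~\ref{def:EffEnc},
\begin{equation}
  \ket{\EffEnc Z_r} = \ket{\enc B_1}\ket{\enc B_2}\cdots\ket{\enc B_s},
\end{equation}
where $B_1,\ldots,B_s$ are the consecutive blocks of $X(\nu)$ of sizes $2^{a_1},\ldots,2^{a_s}$, so that $B_1\sqcup\cdots\sqcup B_s = X(\nu)$ and $s\leq \lfloor\log_2 r\rfloor + 1$.

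First I would obtain, via Theorem~\ref{thm:setgen}, the map $\ket\nu\ket 0 \mapsto \ket\nu\ket{\EffEnc Z_r}$ using $O(r\log(N/r)+r+\log N+A)$ ancillas, $O(r^2)$ calls to the $\calculate_i$, and $O(\poly N)$ additional gates. Next I would fold the blocks together from the right: using a fresh workspace register, apply $\union_{r - 2^{a_s}, 2^{a_s}}$-type operations (with the correct known set sizes, which are all determined by $r$) to compute $\ket{\enc (B_1\cup\cdots\cup B_{j})}$ from $\ket{\enc(B_1\cup\cdots\cup B_{j-1})}$ and $\ket{\enc B_j}$ into a new register, and uncompute the previous partial-union register as we go, so that only one partial union is stored at any time. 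After $s-1\leq \lfloor\log_2 r\rfloor$ such steps we hold $\ket{\enc X(\nu)}$ in a workspace register while $\ket{\EffEnc Z_r}$ still sits (now as garbage) in the original register. Finally I would uncompute $\ket{\EffEnc Z_r}$ by running the inverse of the Theorem~\ref{thm:setgen} circuit (applied to the $\ket\nu$ register and the $\EffEnc$ register), leaving exactly $\ket\nu\ket{\enc X(\nu)}$. Since the state at the moment before this uncomputation is $\ket\nu\ket{\EffEnc Z_r}\ket{\enc X(\nu)}$ with all other ancillas in $\ket 0$, and $X(\nu)$ is a fixed function of $\nu$, the inverse circuit correctly clears the $\EffEnc$ register without touching $\ket{\enc X(\nu)}$.

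For the resource count: each $\union$-and-uncompute step costs $O(\poly N)$ gates by Proposition~\ref{prop:setops}(\ref{prop:setopsunion}) (the polynomial degree being independent of the set sizes), and there are $O(\log r) = O(\log N)$ of them, contributing $O(\poly N)$ gates and no calls to $\calculate_i$. The ancilla overhead of this stage is $O(r\log(N/r)+r+\log N)$ for the one partial-union workspace plus the $\union$-internal ancillas, which is absorbed into the existing budget. The forward and inverse applications of the Theorem~\ref{thm:setgen} circuit together use $O(r^2) = O(\poly N)$ calls to the $\calculate_i$ and $O(\poly N)$ additional gates, with ancilla count $O(r\log(N/r)+r+\log N+A)$. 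Combining via Proposition~\ref{prop:compositereversible} gives the stated bounds. The only mildly delicate point is bookkeeping the known set sizes of the partial unions (so that the right instances $\union_{k_1,k_2}$ are invoked) and confirming that the blocks $B_j$ are genuinely disjoint so that $\union$ applies — but both follow immediately from $r$ being known and from Definition~\ref{def:EffEnc}, so I expect no real obstacle.
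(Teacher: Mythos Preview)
Your overall strategy --- apply Theorem~\ref{thm:setgen}, merge the $s\leq\lfloor\log_2 r\rfloor+1$ blocks of $\ket{\EffEnc Z_r}$ into one $\ket{\enc X(\nu)}$ via repeated $\union$, then uncompute $\ket{\EffEnc Z_r}$ --- is exactly the paper's approach. However, there is a real gap in the ancilla accounting for the merging stage.

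The claim that you can ``uncompute the previous partial-union register as we go, so that only one partial union is stored at any time'' does not hold with the tools available. The operation $\union_{k_1,k_2}^{-1}$ clears the \emph{union} register, not one of its operands; to erase $\ket{\enc U_{j-1}}$ after producing $\ket{\enc U_j}$ you would need $\ket{\enc U_{j-2}}$ (to run $\union^{-1}$ on $(U_{j-2},B_{j-1},U_{j-1})$), which you have already erased. This forces the usual recursive recomputation, and by the reversible-pebbling lower bound on a path you must hold $\Omega(\log s)$ partial unions simultaneously.

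This would be harmless if the partial unions were small, but your fold direction makes them all large: with $U_j=B_1\cup\cdots\cup B_j$ one has $|U_j|\geq 2^{a_1}\geq r/2$ for every $j$, so each $\ket{\enc U_j}$ occupies $\Theta(r\log(N/r)+r)$ trits, and storing even $\Theta(\log\log r)$ of them costs $\Omega\bigl(r\log\log r\bigr)$ when $N=O(r)$, overshooting the stated $O(r\log(N/r)+r+\log N+A)$ bound. The fix is simply to accumulate from the small end: set $W_j=B_j\cup\cdots\cup B_s$, so that $|W_j|<2\cdot 2^{a_j}=2|B_j|$. Then $\sum_j|W_j|\leq 2r$ and, by the same estimate that bounds the size of $\ket{\EffEnc Z_r}$, $\sum_j|W_j|\log(N/|W_j|)=O(r\log(N/r)+r)$. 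Hence one may simply keep \emph{all} of $\ket{\enc W_{s-1}},\ldots,\ket{\enc W_1}$ at once, then peel them off in order $W_2,W_3,\ldots,W_{s-1}$ via $\union^{-1}(B_j,W_{j+1},W_j)$ (each required $W_{j+1}$ is still present), staying within the claimed ancilla budget. The paper's brief sketch, which uncomputes blocks via the $R_{i,l}$ operators from the proof of Theorem~\ref{thm:setgen}, achieves the same effect.
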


\section{Speedup of Eppstein's algorithm}\label{sec:eppstein}

In this section, we provide an example of how to use the toolkit -- the divide-and-conquer hybrid approach from Section~\ref{sec:sha} and the set-generation procedure from Section~\ref{sec:iterative} -- to polynomially speed up Eppstein's algorithm \cite{Eppstein} for the cubic Hamiltonian cycle problem using a small quantum computer. The problem asks whether a given cubic graph $G=(V,E)$ has a Hamiltonian cycle, i.e. a cycle going through every vertex exactly once.

\subsection{Eppstein's algorithm}\label{subsec:classical_eppstein}

In this section, we review Eppstein's classical algorithm for solving this problem in time $O(2^{n(G)/3}\poly(n(G)))$, where $n(G)$ denotes the number of vertices  of a graph $G $. 

Note first of all that, without loss of generality, one can assume that the graph is triangle-free, since triangles can be removed by merging the three vertices of a triangle into a single vertex. 

 Eppstein's algorithm  introduces the concept of ``forced'' edges that a Hamiltonian cycle has to contain. In other words, if an edge is forced, we are only looking for Hamiltonian cycles which contain that edge.

\begin{definition}
	Let $G=(V,E)$ be a simple triangle-free graph with maximum degree at most $3$, and $F\subset E$. Then, the \emph{forced cubic Hamiltonian cycle} (FCHC) problem asks whether $G$ has a Hamiltonian cycle containing all edges in $F$. We call edges in $F$ \emph{forced}, and edges in $E\backslash F$ \emph{unforced}. We call $(G,F)$ an \emph{FCHC instance}. 
\end{definition}

 Roughly speaking, Eppstein's algorithm solves FCHC by recursively selecting an unforced edge and creating two subinstances by either adding that edge to $F$ or removing it from $G$. In both cases, the fact that $G$ is cubic induces additional edges to be either added to $F$ or to be removed. 

The details of Eppstein's algorithm are given in Alg.~\ref{alg:eppsteinClassical}. The  formulation of the algorithm here has been modified from Eppstein's original formulation in several places. In particular, we  adapted it to the Hamiltonian cycle  problem (instead of the travelling salesman problem\footnote{The result can be generalised to the travelling salesman problem, subject to constraints on the distances depending on the number of available qubits.}) and solve it as a decision problem (rather than finding a Hamiltonian cycle). We also made several smaller changes to make the transition the the quantum algorithm later easier. 
For clarity, and to make this section self-contained, $\Eppstein$ will in the following always refer to the algorithm in Alg.~\ref{alg:eppsteinClassical} instead of the original formulation of this algorithm in \cite{Eppstein}. 

\begin{algorithm}
\colorbox{algbgcolour}{\parbox{\textwidth}{\vspace{-8pt}\begin{flushleft}
$\textsc{Eppstein}(G,F)$:
\end{flushleft}  
\vspace{-10pt}
\begin{enumerate}
	\item\label{enum:classicalEppsteinStep1} Repeat the following steps (``trivial reductions'') until none of the conditions apply
	\begin{enumerate}[a.]
		\item\label{enum:classicalEppsteinTrivialDeg2} If $G$ contains a vertex with degree two with at least one unforced incident edge, add all its incident edges to $F$.
		\item\label{enum:classicalEppsteinTrivialDeg3} If $G$ contains a vertex with degree three with exactly two forced edges, remove the unforced edge. 
		\item\label{enum:classicalEppsteinTrivialCycle} If $G$ contains a cycle of four unforced edges such that two of its opposite vertices are each incident to a forced edge and at least one of the other vertices is incident to an unforced edge that is not part of the cycle, then add to $F$ all non-cycle edges that are incident to a vertex of the cycle.
	\end{enumerate}
	\item\label{enum:classicalEppsteinStep2}  Check if any of the following conditions (``terminal conditions'') apply
	\begin{enumerate}[a.]
		\item\label{enum:classicalEppsteinStep2a} If $G$ contains a vertex of degree $0$ or $1$, or if $F$ contains three edges meeting at a vertex, return \emph{false}.
		\item\label{enum:classicalEppsteinStep2b} If $G\backslash F$ is a collection of disjoint $4$-cycles and isolated vertices
		\begin{enumerate}[i.]
				\item If $G$ is disconnected, return \emph{false}.
				\item Otherwise, return \emph{true}.
		\end{enumerate}		 
		\item\label{enum:classicalEppsteinTerminalUnneccessary} If $F$ contains a non-Hamiltonian cycle, return \emph{false}.
	\end{enumerate}
	\item \label{enum:classicalEppsteinStep3} Choose an edge $yz$ according to the following cases:
	\begin{enumerate}[a.]
		\item \label{enum:classicalEppsteinStep3a} If $G\backslash F$ contains a $4$-cycle, exactly two vertices of which 
	are incident to an edge in $F$, 		
		let $y$ be one of the other two vertices of the cycle and let $yz$ be an edge of $G\backslash F$ that does not belong to the cycle.
		\item \label{enum:classicalEppsteinStep3b} If there is no such $4$-cycle, but $F$ is nonempty, 
		let $xy$ be any edge in $F$ and $yz$ be an adjacent edge in $G\backslash F$ such that $yz$ is not part of an isolated $4$-cycle in $G\backslash F$. 
		\item \label{enum:classicalEppsteinStep3c} Otherwise, let $yz$ be any edge in $G$ that is not part of an isolated $4$-cycle in $G\backslash F$.  
	\end{enumerate}
	\item\label{enum:classicalEppsteinStep4} Call   $\textsc{Eppstein}(G, F\cup \{yz\})$.
	\item\label{enum:classicalEppsteinStep5} Call  $\textsc{Eppstein}(G\backslash \{yz\}, F)$. 
	\item\label{enum:classicalEppsteinStep6} Return the disjunction (logical OR) of steps~\ref{enum:classicalEppsteinStep4} and \ref{enum:classicalEppsteinStep5}. 
\end{enumerate}}}
\caption{Eppstein's algorithm (modified)}\label{alg:eppsteinClassical}
\end{algorithm}

We first introduce a few important concepts.

\begin{definition}
	An FCHC instance $(G,F)$ is called \emph{trivial-reduction-free} if 
	\begin{enumerate}[(i)]
		\item $G$ does not contain any vertices of degree two with unforced incident edges,
		\item $G$ does not contain any vertices of degree three with exactly two forced edges, and
		\item $G$ does not contain any cycles of four unforced edges such that two of its opposite vertices are incident to a forced edge and at least one of the other vertices is incident to an unforced edge that is not part of the cycle.
	\end{enumerate}
\end{definition}

In other words, an FCHC instance is trivial-reduction-free if and only if none of the conditions of step~\ref{enum:classicalEppsteinStep1} of $\Eppstein$ apply. 

\begin{definition}
	Let $G=(V,E)$ be a simple, triangle-free graph with maximum degree at most $3$,  $\omega$ be a cycle of four edges, and $F\subset E$. We say that $\omega$ \emph{unforced-isolated w.r.t. $F$} if all edges of $\omega$ are in $E\backslash F$, and each of the four vertices of $\omega$ is incident to an edge in $F$. Moreover, denote by $C(G,F)$ the set of $4-$cycles in $G$ which are unforced-isolated w.r.t. $F$. 
\end{definition}

Note that if $(G,F)$ is trivial-reduction-free, then all unforced $4$-cycles are unforced-isolated w.r.t. $F$. 
The correctness of Alg.~\ref{alg:eppsteinClassical} is given by the following proposition. 

\begin{proposition}\label{prop:correctness}
	Let $(G,F)$ be a trivial-reduction-free FCHC instance. Suppose that $G$ has only vertices of degree $2$ or $3$, and that no three edges in $F$ meet in a single vertex. Suppose moreover that $G\backslash F$ is a collection of disjoint $4$-cycles and isolated vertices. Then, $G$ has a Hamiltonian cycle containing all edges in $F$ if and only if $G$ is connected. 
\end{proposition}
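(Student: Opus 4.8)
The plan is to reduce the problem to a simple gadget-selection problem and then to settle it with a local exchange (``$2$-opt'') argument. First I would nail down the local structure forced by the hypotheses. Since $(G,F)$ is trivial-reduction-free, a vertex of degree $2$ has no unforced incident edge (otherwise condition (i) of trivial-reduction-freeness would apply), so both of its edges lie in $F$; and a vertex of degree $3$ cannot have exactly two forced edges (condition (ii)), while it has at most two forced edges because no three edges of $F$ meet at a vertex, so it has at most one forced edge and hence at least two unforced ones. Since $G\backslash F$ is a disjoint union of $4$-cycles and isolated vertices, every vertex has degree $0$ or $2$ in $G\backslash F$; combining this with the previous observation, every degree-$3$ vertex has \emph{exactly} one forced edge and lies on \emph{exactly one} $4$-cycle of $G\backslash F$, while every degree-$2$ vertex is isolated in $G\backslash F$. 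In particular every $4$-cycle $\omega$ of $G\backslash F$ has four degree-$3$ vertices, each carrying a single forced edge leaving $\omega$, and every unforced edge of $G$ lies on exactly one such $4$-cycle.

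Next I would recast Hamiltonian cycles in these terms. A Hamiltonian cycle $H$ with $F\subseteq H$ uses, at each degree-$3$ vertex, the unique forced edge together with exactly one of the two $4$-cycle edges at that vertex, so for each $4$-cycle $\omega$ the set $H\cap E(\omega)$ is one of the two perfect matchings of the $4$-cycle $\omega$. Conversely, since every forced edge is used and every unforced edge sits on a $4$-cycle, \emph{any} assignment $\omega\mapsto M_\omega$ of one of the two perfect matchings to each $4$-cycle yields a subgraph $H:=F\cup\bigcup_\omega M_\omega$ which, by the degree count of the first step, has degree exactly $2$ at every vertex --- i.e.\ it is a disjoint union of simple cycles. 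Hence $G$ has a Hamiltonian cycle containing $F$ if and only if \emph{some} such assignment makes $H$ connected. The ``only if'' direction of the proposition is then immediate: a Hamiltonian cycle is a connected spanning subgraph of $G$, so $G$ is connected.

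For the ``if'' direction I would run a descent on the number of cycles. Take any assignment, giving $H$ with $m$ cycle components; I claim that as long as $G$ is connected and $m\geq 2$, some $4$-cycle can be re-matched so as to decrease $m$. Suppose first that every $4$-cycle has all four of its vertices in a single component of $H$. Then every edge of $G$ joins two vertices of the same component of $H$ --- forced edges because they lie in $H$, unforced edges because they lie on a $4$-cycle --- so each component of $H$ is a union of components of $G$; since $G$ is connected this forces $m=1$, a contradiction. Hence some $4$-cycle $\omega$ has, after relabelling its vertices $p_1,p_2,p_3,p_4$ along its cyclic order, chosen matching $M_\omega=\{p_1p_2,p_3p_4\}$ with $p_1,p_2$ in one component $D$ of $H$ and $p_3,p_4$ in a different component $D'$. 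Replacing $M_\omega$ by the other matching $\{p_2p_3,p_4p_1\}$ deletes the edges $p_1p_2\in D$ and $p_3p_4\in D'$, turning the cycles $D,D'$ into paths, and inserts $p_2p_3$ and $p_4p_1$, splicing those two paths into a single cycle while leaving every other component of $H$ untouched; the new assignment gives a valid $H'$ with $m-1$ components. Iterating terminates at $m=1$, which is a Hamiltonian cycle of $G$ containing $F$.

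The delicate point is the step in the descent that rules out getting stuck: showing that if no $4$-cycle is ``flippable'' then $H$ is already connected. The clean way to see this is the observation that \emph{every} edge of $G$ stays inside a single component of $H$, which uses both $F\subseteq H$ and the fact that every unforced edge lies on one of the $4$-cycles (a consequence of the hypothesis on $G\backslash F$ together with the first step). The remaining points --- that $H=F\cup\bigcup_\omega M_\omega$ is genuinely a simple $2$-regular graph for every choice (where triangle-freeness and ``no three forced edges at a vertex'' enter), and that the splice in the descent merges exactly two components into one --- are routine verifications.
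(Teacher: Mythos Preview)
Your proof is correct and follows essentially the same line as the paper's: both reduce the question to choosing one of the two perfect matchings on each unforced $4$-cycle, observe that any such choice yields a $2$-regular spanning subgraph containing $F$, and then use the key fact that swapping the matching on a $4$-cycle straddling two components merges those components into one. The only difference is organisational: the paper builds an auxiliary graph $G_2$ on the components of the initial $2$-factor, takes a spanning tree of $G_2$, and performs all the corresponding flips at once, whereas you run the same flip one $4$-cycle at a time as a descent on the number of components --- the underlying mechanism is identical.
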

\begin{proof}
The ``only if'' direction is trivial. Assume that $G=(V,E)$ is connected.  
	Note first of all that all vertices outside of $C(G,F)$ have degree $2$ and both their incident edges are in $F$.
	
	For each $4$-cycle $\omega\in C(G,F)$, let $h_1(\omega),h_2(\omega)$ be two opposite edges in $\omega$ and $h_3(\omega), h_4(\omega)$ be the other two edges in $\omega$. Let $F_1 := F\cup \{ h_1(\omega), h_2(\omega) : \omega\in C(G,F)\}$. Note that every vertex in $G$ is incident to an edge in $F_1$ and that every vertex in $G_1=(V,F_1)$ has degree two. Thus, $G_1$ is a collection of cycles. Consider the  graph $G_2$ whose vertices are the connected components of $G_1$, and  two connected components $H_1,H_2$ of $G_1$ are joined by an edge in $G_2$ iff there exists a $4$-cycle $\omega\in C(G,F)$ which ``separates'' $H_1,H_2$ in $G$, i.e., $h_1(\omega)\in H_1$ and $h_2(\omega)\in H_2$ or vice-versa. Note that since $G$ is connected, so is $G_2$. Note moreover that if $H_1$ and $H_2$ are adjacent in $G_2$ and separated by $\omega\in C(G,F)$ in $G$, replacing $h_1(\omega),h_2(\omega)$ with $h_3(\omega),h_4(\omega)$ in $G_1$ would result in replacing the two disconnected cycles $H_1$, $H_2$ by a single cycle going through the same vertices. Thus, consider a spanning tree of $G_2$ and let $G_1'=(V,F_1')$ be the graph obtained from $G_1$ by replacing $h_1(\omega),h_2(\omega)$ with $h_3(\omega),h_4(\omega)$ for all $\omega$ corresponding to an edge in the spanning tree. Then, by the previous observation, $G_1'$ is connected. Moreover, since $F\subset F_1'$ and every vertex has degree $2$ in $G_1'$, it follows that $G_1'$ is a Hamiltonian cycle containing all edges in $F$. 
\end{proof}

Note in particular that Proposition~\ref{prop:correctness} implies that step~\ref{enum:classicalEppsteinTerminalUnneccessary} of $\Eppstein$ is in fact unneccessary and the algorithm still performs correctly if that step is omitted. We include that step nevertheless, since the early termination of these instances simplifies the runtime analysis in Appendix~\ref{app:runtime}.

The main idea of bounding the runime of $\Eppstein$ is to introduce a ``problem size metric'' defined as follows.

\begin{definition}
 For an FCHC instance $(G,F)$, let $s(G,F):= \max(n(G)-|F|-|C(G,F)|, 0)$. We call $s(G,F)$ the \emph{size} of $(G,F)$.\end{definition}

\begin{proposition}\label{prop:snonnegative}
	Let $(G,F)$ be an FCHC instance such that no three edges in $F$ meet at a vertex. Suppose moreover that $F$ is not a collection of cycles. Then, $n(G)-|F|-|C(G,F)|> 0$.
\end{proposition}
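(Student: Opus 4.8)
The statement is a counting inequality: under the hypotheses, $n(G) > |F| + |C(G,F)|$. The plan is to bound $|F|$ and $|C(G,F)|$ separately in terms of the vertices they ``consume'', and show these two quantities together cannot saturate $n(G)$ unless $F$ is a union of cycles.

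First I would set up a vertex-counting argument. Since no three edges of $F$ meet at a vertex, every vertex of $G$ has $F$-degree (degree within the subgraph $(V,F)$) at most $2$. Let $V_0, V_1, V_2$ be the sets of vertices of $F$-degree $0$, $1$, $2$ respectively; so $n(G) = |V_0| + |V_1| + |V_2|$ and $2|F| = |V_1| + 2|V_2|$, giving $|F| = |V_2| + \tfrac12 |V_1|$. Next I would handle $|C(G,F)|$: each $4$-cycle $\omega \in C(G,F)$ is unforced-isolated, so all four of its edges are unforced and each of its four vertices is incident to an edge of $F$; since those vertices also carry two cycle edges (unforced), their total degree is at least $3$, hence exactly $3$ (max degree $3$), so each such vertex has $F$-degree exactly $1$, i.e. lies in $V_1$. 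Moreover distinct cycles in $C(G,F)$ are vertex-disjoint: two unforced-isolated $4$-cycles sharing a vertex $v$ would force $v$ to have degree $\geq 4$ (the shared vertex would need at least two cycle edges from each, or an $F$-edge plus cycle edges exceeding $3$) — this needs a small case check but is routine given max degree $3$ and the fact that the cycles consist entirely of unforced edges while each cycle vertex additionally touches a forced edge. Hence the $4|C(G,F)|$ vertices appearing on cycles in $C(G,F)$ are distinct elements of $V_1$, so $4|C(G,F)| \leq |V_1|$, i.e. $|C(G,F)| \leq \tfrac14 |V_1|$.

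Combining, $|F| + |C(G,F)| \leq |V_2| + \tfrac12|V_1| + \tfrac14|V_1| = |V_2| + \tfrac34 |V_1| \leq |V_0| + |V_1| + |V_2| = n(G)$, with equality only if $|V_0| = 0$ and $|V_1| = 0$. So it remains to show that $|V_1| = |V_0| = 0$ forces $F$ to be a collection of cycles: if every vertex has $F$-degree exactly $2$, then $(V,F)$ is a disjoint union of cycles, contradicting the hypothesis that $F$ is not a collection of cycles. Therefore the inequality is strict: $n(G) - |F| - |C(G,F)| > 0$.

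\textbf{Main obstacle.} The only genuinely nontrivial point is the claim that the cycles in $C(G,F)$ are pairwise vertex-disjoint and that every cycle-vertex has $F$-degree exactly $1$; everything else is bookkeeping. I expect to need a careful but short argument using the degree-$\leq 3$ constraint: a vertex on an unforced-isolated $4$-cycle already spends two of its $\leq 3$ incident edges on the cycle and one on a forced edge, leaving no room to belong to a second such $4$-cycle or to have $F$-degree $2$. I would write this out explicitly as a couple of lines of casework on incident edges.
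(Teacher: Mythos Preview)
Your proof is correct, but it takes a different route from the paper's. The paper argues by reduction: it observes that by adding one edge of each $\omega\in C(G,F)$ to $F$, one obtains a new set $F'$ with $|F'|=|F|+|C(G,F)|$, still satisfying the degree-$\leq 2$ condition and still not a union of cycles; then a single handshake computation on $(V,F')$ gives $2|F'|<2|V|$, i.e.\ $n(G)>|F|+|C(G,F)|$. Your argument instead stays with the original $F$, partitions $V$ by $F$-degree, and bounds $|C(G,F)|\leq \tfrac14|V_1|$ via the observation that each unforced-isolated $4$-cycle occupies four distinct vertices of $F$-degree exactly~$1$.

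Both approaches hinge on the same handshake identity, but they organise the ``one cycle costs one unit'' bookkeeping differently. The paper's reduction is slicker and sidesteps the vertex-disjointness check entirely (that check is the part you correctly flagged as the main obstacle; it does go through, since a shared vertex would have two unforced cycle edges from each cycle plus a forced edge, exceeding degree~$3$). Your approach, on the other hand, yields the slightly sharper intermediate bound $|F|+|C(G,F)|\leq |V_2|+\tfrac34|V_1|$, though that extra strength is not needed here.
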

\begin{proof}
	First of all, note that the general case can be reduced to the special case of $C(G,F)=\emptyset$ by adding one edge of each $\omega\in C(G,F)$ to $F$. Suppose now that $C(G,F)=\emptyset$, and let $G'=(V,F)$, where $V$ is the set of vertices of $G$. Then, every vertex has degree at most $2$ in $G'$. Moreover, since $F$ is not a collection of cycles, at least one vertex has degree $<2$ in $G'$. Hence,  
	\begin{equation}
		2|F| = \sum_{v\in V} \deg_{G'} v < 2|V|
	\end{equation}
	and hence $n(G)-|F|> 0$. 
\end{proof}

  It can be shown \cite{Eppstein} that every application of steps~\ref{enum:classicalEppsteinStep3}--\ref{enum:classicalEppsteinStep6} creates two FCHC instances $(G_1,F_1)$ and $(G_2,F_2)$ such that $s(G_1,F_1), s(G_2,F_2) \leq s(G,F) - 3$ or $s(G_1,F_1) \leq s(G,F)-2$ and $s(G_2,F_2) \leq s(G,F)-5$. This leads to a recursive runtime bound of $T(s(G,F))$, where $T(s) = \max(2T(s-3), T(s-2)+T(s-5)) + O(\poly(n(G)))$, leading to $T(s) = O(2^{s/3}\poly(n(G)))$.  
  In particular,  $\Eppstein(G,\emptyset)$ solves the 
  cubic Hamiltonian cycle problem in a runtime of $O(2^{n(G)/3}\poly(n(G)))$. 
  We provide the details of this runtime analysis in the Appendix~\ref{app:runtime}.

\subsection{Quantum improvement of $\Eppstein$ using small quantum computer}

Using Theorem~\ref{thm:sha}, the main part of obtaining a speedup is to provide a quantum speedup of $\Eppstein$ using few qubits. 

\begin{theorem}\label{thm:quantumAlg}
	There exists a quantum algorithm that, for any FCHC instance $(G,F)$, decides FCHC in a runtime of $O(2^{s/4}\poly(n))$ using $O(s\log(n/s) + s+\log n)$ qubits, where $s=s(G,F)$ and $n=n(G)$. 
\end{theorem}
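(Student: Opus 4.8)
The plan is to implement a quantum analogue of $\Eppstein$ that mirrors the classical recursion tree but replaces the binary branching with a \emph{quaternary} branching handled by amplitude amplification, yielding the $2^{s/4}$ scaling in place of $2^{s/3}$, while keeping the qubit count at $O(s\log(n/s)+s+\log n)$ by representing all data (the forced set $F$, the auxiliary $4$-cycle set $C(G,F)$, and the recursion path) using the efficient set encodings of Section~\ref{sec:iterative}. The starting observation is that two levels of the classical recursion choose two edges and hence create four subinstances; by Eppstein's size analysis (recalled just before the theorem), after a bounded number of trivial reductions these four subinstances each have size at most $s(G,F)-c$ for some constant $c\geq 3$ on the worst-case branch, but the key gain is that a single Grover-type search over the roughly $2^{\text{(branching depth)}}$ leaves of a subtree of size-decrease $4$ costs only the square root. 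Concretely I would fix a recursion depth $d$ (a constant, chosen so that the guaranteed size decrease over $d$ classical levels is at least some target $\Delta$), enumerate the $O(2^d)=O(1)$ leaf-instances reachable by $d$ levels of classical branching, and run amplitude amplification over the choice of leaf combined with a recursive call of the quantum algorithm on the chosen leaf. Balancing the per-level branching factor against the square-root speedup of amplitude amplification gives an effective exponent of $\gamma_Q=\tfrac14$ provided $d$ is large enough that the $\poly(n)$ overhead per node is dominated; this is the standard ``recursive Grover on a bounded-arity search tree'' calculation.

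The second and more delicate ingredient is the space-efficiency. At recursion depth $j$ the instance is described by the original graph $G$ (available as input, read-only, not counted toward the $O(s)$-type bound since $G$ is fixed) together with the current forced set $F_j\subset E$ and the current cycle set $C(G,F_j)$; since $|F_j|+|C(G,F_j)|\le n$ and $s_j:=s(G,F_j)$ decreases along the path, the information content of the ``diff'' from $F$ is $O(s\log(n/s)+s)$. I would apply Theorem~\ref{thm:setgen} to generate $\ket{\EffEnc Z}$ encodings of the forced edge set (viewing edges as elements of $\{1,\ldots,|E|\}\subset\{1,\ldots,O(n)\}$) and of $C(G,F)$, taking the $\calculate_i$ circuits to be the graph-theoretic operations that, given the current partial set and the classical branching choices made so far, compute the next forced edge induced by a trivial reduction or by the $\edgeselect$ step of Alg.~\ref{alg:eppsteinClassical}. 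The crucial point Theorem~\ref{thm:setgen} buys us is reversible, exponential-overhead-free uncomputation of these sets, so that after evaluating one leaf of the amplitude-amplification search we can cleanly restore the workspace before trying the next leaf; this is exactly the obstacle flagged at the start of Section~\ref{sec:iterative}. The required problem-specific subroutines are: (a) performing one trivial-reduction step reversibly (locate the offending vertex/$4$-cycle and output the edge(s) to force or delete); (b) implementing the $\edgeselect$ rule of step~\ref{enum:classicalEppsteinStep3} reversibly; (c) evaluating the terminal conditions of step~\ref{enum:classicalEppsteinStep2}, in particular testing whether $G\backslash F$ is a disjoint union of $4$-cycles and isolated vertices and whether $G$ is connected, which are all polynomial-time graph predicates and hence implementable with $O(\poly n)$ gates and $O(\log n)$ ancillas by the standard reversible simulation of a classical circuit with bounded workspace; and (d) maintaining $s(G,F)$ and detecting $\trivial$. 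Each of these touches only $O(1)$ vertices or a $\poly(n)$-time computable global property, so none needs more than $O(\log n)$ scratch qubits beyond the set encodings themselves, and Proposition~\ref{prop:compositereversible} lets us compose them.

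Putting the pieces together, the algorithm is: at a node, first apply the reversible trivial-reduction loop and check terminal conditions (returning the answer if $\trivial=1$); otherwise, reversibly compute the $O(1)$-size list of depth-$d$ descendant instances $\{(G,F^{(1)}),\ldots,(G,F^{(2^d)})\}$, each stored as an $\EffEnc$ encoding generated via Theorem~\ref{thm:setgen} with the $\calculate_i$ of items (a)--(b) above; then run amplitude amplification over $t\in\{1,\ldots,2^d\}$ with the recursive quantum call to decide FCHC on $(G,F^{(t)})$ as the tested predicate, uncomputing the descendant encoding after each trial using the inverse circuits supplied by Theorem~\ref{thm:setgen}; finally return the OR over the search. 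The runtime recurrence is $T_Q(s)\le O(2^{d/2})\,T_Q(s-\Delta)+O(\poly n)$ where $\Delta$ is the guaranteed size decrease over $d$ levels and $2^d$ the leaf count; choosing $d$ so that $\tfrac{d/2}{\Delta}\le \tfrac14$ (possible because the classical branching vectors $(3,3)$ and $(2,5)$ compound favourably — $d$ levels from the $(3,3)$ branch give $\Delta=3d$ and $2^d$ leaves, yielding exponent $\tfrac{d}{2\cdot 3d}=\tfrac16<\tfrac14$, with the asymmetric branch needing a short but routine worst-case check) and absorbing $\poly(n)$ factors gives $T_Q(s)=O(2^{s/4}\poly(n))$. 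The qubit budget is the maximum over the recursion of: the $\EffEnc$ encodings of $F$ and $C(G,F)$, which are $O(s\log(n/s)+s)$; the $O(\log n)$ ancillas from items (a)--(d) and from $\contains$/$\union$ (Propositions~\ref{prop:setops}, \ref{prop:effcontains}); a $O(\log n)$-qubit register tracking the depth-$d$ branch index and the amplitude-amplification iteration counter; and a read-only copy of $G$ in $O(\poly n)$ classical-style memory. Since the recursion depth is $O(s)$ and Theorem~\ref{thm:setgen} (via Proposition~\ref{prop:compositereversible}) reuses ancillas across recursive levels rather than stacking them, the total is $O(s\log(n/s)+s+\log n)$ as claimed. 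I expect the main obstacle to be the careful verification that the branching-vector arithmetic together with the constant-depth amplitude amplification really delivers exponent exactly $\tfrac14$ uniformly over both of Eppstein's cases (including the bookkeeping of the $\poly(n)$ additive overhead inside the amplified subroutine, which must be shown not to inflate the exponent), and — on the implementation side — exhibiting the reversible $\edgeselect$ and trivial-reduction circuits with only $O(\log n)$ ancillas despite their need to search the graph, which is where the $\calculate_i$ interface of Theorem~\ref{thm:setgen} and the bounded-workspace reversible simulation of polynomial-time predicates do the heavy lifting.
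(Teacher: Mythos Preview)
Your high-level architecture differs from the paper's and, as written, has two concrete gaps. The paper does \emph{not} use recursive amplitude amplification: it unrolls the entire recursion tree into a single flat search over $\{0,1\}^r$ with $r=\lfloor s/2\rfloor+4s$, and applies amplitude amplification \emph{once}. The exponent $1/4$ then comes simply from the tree depth being at most $s/2$ (so at most $s/2$ of the $r$ bits are ``live'' branching bits) and Grover's square root. Your recursive scheme would have to confront the well-known pitfalls of nested amplitude amplification (error accumulation over $O(s)$ levels, and the need for the inner call to produce a phase rather than a measured bit), none of which you address; the flat approach sidesteps all of this.

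The first concrete gap is your claim that the ``diff'' $F_j\setminus F$ has size $O(s)$: your justification (``$s_j$ decreases along the path'') is wrong, because trivial reductions of type~\ref{enum:classicalEppsteinTrivialDeg3} (edge deletions) do not decrease $s$, so a priori $\Theta(n)$ trivial reductions can occur on a single path. The paper establishes the needed bound as a separate lemma (Proposition~\ref{prop:trivialreductionbound}): along any \emph{accepting} path the total number of trivial reductions is at most $4s$. This holds only for accepting paths, which is why the paper uses a fixed budget $r$ with dummy elements for the overflow---a device you do not have. Without this, your set encodings can blow up to $\Theta(n)$ qubits. The second gap is the connectivity test in step~\ref{enum:classicalEppsteinStep2b}: you write that it is ``implementable with $O(\log n)$ ancillas by the standard reversible simulation of a classical circuit with bounded workspace'', but standard connectivity algorithms (BFS/DFS) use $\Theta(n)$ workspace. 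The paper explicitly invokes Reingold's theorem that undirected connectivity is in $\mathrm{L}$, together with the Lange--McKenzie--Tapp reversible simulation of logspace computations, to get $O(\log n)$ ancillas; this is a nontrivial ingredient you have omitted.
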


We will prove this in Section~\ref{sec:proofQAlg}. Theorems~\ref{thm:sha} and \ref{thm:quantumAlg} immediately imply an improvement to Eppstein's algorithm.

\begin{theorem} \label{thm:eppsteinSpeedup}
Let $c>0$ be an arbitrary constant. Then, given a quantum computer with $M=cn$ qubits, there exists a hybrid quantum-classical algorithm that solves the cubic Hamiltonian cycle problem for $n$-vertex graphs in runtime $O(2^{(1/3-f(c))n}\poly(n))$, where $f(c)>0$. 
\end{theorem}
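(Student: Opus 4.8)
\emph{Proof proposal.} The plan is to apply Theorem~\ref{thm:sha} directly, instantiating $\alg$ as $\Eppstein$ (Alg.~\ref{alg:eppsteinClassical}) and $\qalg$ as the quantum algorithm furnished by Theorem~\ref{thm:quantumAlg}. First I would cast $\Eppstein$ in the abstract form of Alg.~\ref{alg:algsha}: take $\mathcal P$ to be the set of FCHC instances $(G,F)$, $\mathcal A$ the FCHC decision problem, $n(P)=n(G)$, and $s(P)=s(G,F)$. The branching is binary, $l=2$, with $R_1(P)$ and $R_2(P)$ obtained by running the trivial-reduction loop (step~\ref{enum:classicalEppsteinStep1}) and then forming $(G,F\cup\{yz\})$ and $(G\setminus\{yz\},F)$ for the edge $yz$ chosen in step~\ref{enum:classicalEppsteinStep3}; $\trivial(P)$ is the indicator that some terminal condition of step~\ref{enum:classicalEppsteinStep2} holds after trivial reduction, $f(P)$ reads off the corresponding Boolean output, and $g$ is the logical OR of step~\ref{enum:classicalEppsteinStep6}. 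All of these are $\poly(n(G))$-time computable, $s(R_i(P))\le s(P)$, and $s(P)\le n(P)$ by construction. The runtime analysis of $\Eppstein$ recalled in Section~\ref{subsec:classical_eppstein} (details in Appendix~\ref{app:runtime}) shows that each non-terminal branching produces subinstances whose sizes drop either by $3$ and $3$ or by $2$ and $5$; this is precisely condition~\eqref{eq:sReduction} with $k=2$, $l_1=l_2=2$, and the two columns of $C$ equal to $(3,3)^T$ and $(2,5)^T$, giving $\gamma=1/3$.

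Next I would verify that $\qalg$ from Theorem~\ref{thm:quantumAlg} satisfies the hypotheses of Theorem~\ref{thm:sha}. It decides FCHC in time $O(2^{s/4}\poly(n))$, so $\gamma_Q=1/4\in[0,1/3)=[0,\gamma)$, and uses $G(s,n)=O(s\log(n/s)+s+\log n)$ qubits. This is exactly the situation of Example~\ref{ex:slogns}: writing $G(s,n)=As\ln(n/s)+Bs+O(\log n)$, we have $G(s,n')\le G(s,n)$ for $s\le n'\le n$ since $x\mapsto x\ln(x/s)$ is increasing for $x\ge s$, and $G(\lambda n,n)=nF(\lambda)+O(\log n)$ with $F(\lambda)=A\lambda\ln(1/\lambda)+B\lambda$. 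On any interval $[0,\tilde\lambda]$ with $\tilde\lambda<e^{B/A-1}$ this $F$ is strictly increasing, continuously differentiable, has $F(0)=0$, and $F'(\lambda)=A(\ln(1/\lambda)-1)+B$ is decreasing in $\lambda$, hence $F'\ge F'(\tilde\lambda)>0$, so $F'$ is bounded away from $0$.

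Then Theorem~\ref{thm:sha} applies for any constant $c\in(0,F(\tilde\lambda))$: with $M=cn$ qubits one obtains a hybrid algorithm deciding FCHC in time $O(\max(2^{\gamma s-f(c)n},2^{\gamma_Q s})\poly(n))$, and in particular in time $O(2^{(\gamma-f(c))n}\poly(n))=O(2^{(1/3-f(c))n}\poly(n))$ with $f(c)=(\gamma-\gamma_Q)F^{-1}(c)=\tfrac1{12}F^{-1}(c)>0$. Running this on the top-level instance $(G,\emptyset)$, after the preliminary triangle-contraction step (polynomial, does not increase $n$), solves the cubic Hamiltonian cycle problem. To get the statement for arbitrary $c>0$ rather than only $c<F(\tilde\lambda)$, note that if $c\ge F(\tilde\lambda)$ we simply discard the surplus qubits and run the above with $c'=F(\tilde\lambda)/2$, still obtaining a speedup (set $f(c):=f(c')>0$); monotonicity of $F^{-1}$ shows the bound only improves as $c$ grows within the admissible range.

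I expect the one genuinely delicate point to be the faithful translation of $\Eppstein$ into the template of Alg.~\ref{alg:algsha}: absorbing the trivial-reduction loop of step~\ref{enum:classicalEppsteinStep1} and the terminal checks of step~\ref{enum:classicalEppsteinStep2} into $R_i,\trivial,f$ so that these remain polynomial-time, $\trivial(P)=1$ forces the recursion to halt, a universal constant $C$ with ``$s(P)\le C\Rightarrow\trivial(P)=1$'' exists (this follows from Propositions~\ref{prop:correctness} and \ref{prop:snonnegative} together with the terminal conditions), and the size recurrence~\eqref{eq:sReduction} holds verbatim. This bookkeeping is essentially Eppstein's original correctness and runtime argument reorganised; once it is in place, the theorem is a direct substitution into Theorems~\ref{thm:sha} and \ref{thm:quantumAlg}.
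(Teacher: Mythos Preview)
Your approach is the same as the paper's: instantiate Theorem~\ref{thm:sha} with $\Eppstein$ as $\alg$ and the quantum algorithm of Theorem~\ref{thm:quantumAlg} as $\qalg$, then read off $\gamma=1/3$, $\gamma_Q=1/4$, and the qubit function from Example~\ref{ex:slogns}. Two small technical discrepancies are worth flagging. First, the paper takes $\mathcal P$ to be the \emph{trivial-reduction-free} FCHC instances and defines $R_1,R_2$ as branch-then-$\trivred$ (select $yz$, force/delete it, then run step~\ref{enum:classicalEppsteinStep1}), so that $R_i$ stays inside $\mathcal P$ and Proposition~\ref{prop:sbound} applies verbatim to give~\eqref{eq:sReduction}; your ordering (trivred-then-branch on all FCHC instances) produces $R_i(P)$ that is not trivial-reduction-free, and Proposition~\ref{prop:sbound} only bounds $s(\trivred(R_i(P)))$, not $s(R_i(P))$ itself, so the hypothesis of Theorem~\ref{thm:sha} is not literally met without an extra argument. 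This is precisely the ``delicate point'' you anticipate, and the paper's choice sidesteps it cleanly. Second, the paper records $k=3$ with columns $(3,3)^T,(2,5)^T,(5,2)^T$ to cover the ``or vice-versa'' clause of Proposition~\ref{prop:sbound}; your $k=2$ omits the symmetric case, though this does not affect the resulting recurrence or $\gamma$. Your explicit verification of the hypotheses on $F$ and the handling of $c\ge F(\tilde\lambda)$ are more detailed than what the paper spells out.
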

Note that with Theorem~\ref{thm:quantumAlg}, we have a quantum algorithm that satisfies all the criteria to apply Theorem~\ref{thm:sha}. However, to make things fully rigorous, we also need to deal with the fact that Alg.~\ref{alg:eppsteinClassical} includes initial rewritings and reductions, and as such is not immediately a special case of Alg.~\ref{alg:algsha}. This is a minor technicality, which we resolve as follows for completeness. 
\begin{proof}[Proof of Theorem~\ref{thm:eppsteinSpeedup}]
	Note first of all that it is sufficient to prove that one can solve the FCHC problem in a runtime of $O(\max(2^{\gamma s(G,F) - f(c)n(G)},2^{\gamma_Q s(G,F)})\poly(n(G)))$, where $\gamma=1/3$ and $\gamma_Q=1/4$. Next, note that it is sufficient to do this only for trivial-reduction-free FCHC instances, since the others can be reduced to the trivial-reduction-free case by one application of step~\ref{enum:classicalEppsteinStep1} of $\Eppstein$. 
	
	Thus, with the notation of Section~\ref{sec:sha}, let $\mathcal P$ be the set of all trivial-reduction-free FCHC instances. Then $\alg$ is given as follows: $\trivial(P)=1$ if any of the terminal conditions in step~\ref{enum:classicalEppsteinStep2} of $\Eppstein$ apply, and $f(P)$ is the value returned in step~\ref{enum:classicalEppsteinStep2}. Moreover, $l=2$ and  $R_{1}$ and $R_2$ are given by first selecting an edge $yz$ according to step~\ref{enum:classicalEppsteinStep3} of $\Eppstein$, forcing and deleting it, respectively, followed by performing all possible trivial reductions (i.e., step~\ref{enum:classicalEppsteinStep1} of $\Eppstein$). Note that indeed, $R_1$ and $R_2$ map trivial-reduction-free FCHC instances to trivial-reduction-free FCHC instances. Finally, the runtime analysis of $\Eppstein$ (see Proposition~\ref{prop:sbound} in Appendix~\ref{app:runtime}) gives $k=3$, $l_1=l_2=l_3=2$ and 
	\begin{equation}
		C = \left(
			\begin{array}{ccc}
				3 & 2 & 5 \\
				3 & 5 & 2 
			\end{array}
		\right). 
	\end{equation}
	The result now follows from Theorem~\ref{thm:sha}. 
\end{proof}

\subsection{Proof of Theorem~\ref{thm:quantumAlg}} \label{sec:proofQAlg}

To prove Theorem~\ref{thm:quantumAlg}, note first of all that we can without loss of generality assume that $(G,F)$ is trivial-reduction-free, because if it is not, $G$ and $F$ can be classically pre-processed by repeated applications of step~\ref{enum:classicalEppsteinStep1} of $\Eppstein$. Note that $s(G,F)$ is non-increasing under this step. 

 The recursive steps~\ref{enum:classicalEppsteinStep4} and \ref{enum:classicalEppsteinStep5} in $\Eppstein$ yield a binary recursion tree, with the two branches corresponding to either step~\ref{enum:classicalEppsteinStep4} or \ref{enum:classicalEppsteinStep5}. 
More formally, consider the binary rooted tree defined as follows. 

\begin{definition}
 Let $(G,F)$ be an FCHC instance. Define $\rectree(G,F)$ to the binary rooted tree constructed as follows: the root of the tree is a vertex labelled $(G,F)$. Then, given a vertex $(\tilde G, \tilde F)$, if $\Eppstein(\tilde G, \tilde F)$ terminates in step~\ref{enum:classicalEppsteinStep2}, it becomes a leaf of the tree. Otherwise, if $\Eppstein(\tilde G,\tilde F)$ calls $\Eppstein(G_1,F_1)$ and $\Eppstein(G_2,F_2)$ in steps~\ref{enum:classicalEppsteinStep4} and \ref{enum:classicalEppsteinStep5}, respectively, create two children of $(\tilde G,\tilde F)$ labelled $(G_1,F_1)$ and $(G_2,F_2)$, respectively. 
 
 Moreover, let $\tau( G, F)$ be the number of edges that are forced or deleted in step~\ref{enum:classicalEppsteinStep1} of $\Eppstein( G, F)$, before the algorithm moves on. 
\end{definition}

Intiutively, $\rectree(G,F)$ is the tree of FCHC instances explored by steps~\ref{enum:classicalEppsteinStep4} and \ref{enum:classicalEppsteinStep5} of Alg.~\ref{alg:eppsteinClassical}. 
Note that by Proposition~\ref{prop:sbound} in Appendix~\ref{app:runtime}, 
$\rectree(G,F)$ has depth at most $s(G,F)/2$. 
We first show that in a ``good'' branch of the recursion tree, a total of at most $O(s(G,F))$ trivial reductions are performed.

\begin{proposition}\label{prop:trivialreductionbound}
	Let $(G,F)$ be trivial-reduction-free and let $(G,F)=(G_1,F_1), (G_2,F_2),\ldots,(G_l,F_l)$ be a path in $\rectree(G,F)$, such that $\Eppstein(G_l,F_l)$ returns true in step~\ref{enum:classicalEppsteinStep2}. Then, 
	\begin{equation}
		\sum_{j=1}^l \tau(G_j,F_j) \leq 4s(G,F).
	\end{equation}
\end{proposition}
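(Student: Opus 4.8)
The plan is to track how the ``size'' $s(G_j,F_j)$ and the running count of trivially reduced edges evolve along the path, and to show that each trivial reduction ``pays for itself'' in a decrease of $n(G)-|F|-|C(G,F)|$ (the unclamped version of $s$). Concretely, I would introduce the quantity $\sigma(G,F):=n(G)-|F|-|C(G,F)|$, so that $s(G,F)=\max(\sigma(G,F),0)$, and analyse $\sigma$ along the path rather than $s$ itself, since $\sigma$ decreases additively whereas the $\max$ with $0$ could in principle obscure the bookkeeping. Since $\Eppstein(G_l,F_l)$ returns true in step~\ref{enum:classicalEppsteinStep2}, step~\ref{enum:classicalEppsteinStep2b} must have triggered, so $G_l$ is connected and $G_l\backslash F_l$ is a union of disjoint $4$-cycles and isolated vertices; in particular no further recursion happens below $(G_l,F_l)$, and one can check that $\sigma(G_l,F_l)\ge 0$ essentially by Proposition~\ref{prop:snonnegative} (or directly: every vertex not in a $4$-cycle has both incident edges forced, so $|F_l|\le n(G_l)-|C(G_l,F_l)|$ roughly).

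The core of the argument is a per-step inequality: for each edge $(G_j,F_j)\to(G_{j+1},F_{j+1})$ in the path, I want something like
\begin{equation}
\sigma(G_{j+1},F_{j+1}) \le \sigma(G_j,F_j) - c_0\cdot\bigl(1 + \tau(G_{j+1},F_{j+1})\bigr)
\end{equation}
for a constant $c_0>0$ — the ``$+1$'' coming from the edge selected and forced/deleted in step~\ref{enum:classicalEppsteinStep3}, and the $\tau$ term coming from the subsequent trivial reductions of step~\ref{enum:classicalEppsteinStep1}. The plan here is to inspect each of the three trivial-reduction rules (a), (b), (c) and each of the terminal/selection cases, and verify that forcing or deleting an edge in step~\ref{enum:classicalEppsteinStep1} either decreases $n-|F|$ by at least some fixed amount per edge (e.g. forcing an edge raises $|F|$ by $1$; deleting an edge may drop a vertex's degree and can later force edges), or increases $|C(G,F)|$ only in a controlled way. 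This is exactly the kind of case analysis underlying Proposition~\ref{prop:sbound} in the appendix, and I would lean on the structure established there: the key subtlety is that $|C(G,F)|$ can both increase and decrease, so I would bound the net effect of all $\tau(G_{j+1},F_{j+1})$ reductions on $|C|$ by the number of $4$-cycles created or destroyed, and absorb that into the constant. Summing the per-step inequality telescopes:
\begin{equation}
\sum_{j=1}^{l} \tau(G_j,F_j) \;\le\; \frac{1}{c_0}\bigl(\sigma(G_1,F_1) - \sigma(G_l,F_l)\bigr) \;\le\; \frac{1}{c_0}\,\sigma(G_1,F_1) \;\le\; \frac{1}{c_0}\,s(G,F),
\end{equation}
using $\sigma(G_l,F_l)\ge 0$ and $\sigma(G_1,F_1)=\sigma(G,F)=s(G,F)$ (the latter because $(G,F)$ being trivial-reduction-free with the hypotheses in force makes $\sigma$ nonnegative, or one simply notes $\sigma\le s$ always holds when $\sigma\ge0$ and handles $\sigma<0$ trivially). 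One then picks the constants so that $1/c_0\le 4$, matching the claimed bound $4s(G,F)$.

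The main obstacle I expect is getting the per-step constant right, i.e.\ showing that every trivial reduction decreases $\sigma$ by a genuinely \emph{positive constant amount} and not merely ``does not increase'' it — the danger is rule~(c), the $4$-cycle rule, which adds several edges to $F$ but may simultaneously create a new unforced-isolated $4$-cycle, so the increase in $|F|$ could be partly cancelled by an increase in $|C(G,F)|$. I would handle this by arguing that rule~(c) turns an \emph{unforced} $4$-cycle (not yet counted in $C$, because one of its vertices has an incident unforced non-cycle edge) into an unforced-isolated one while forcing at least one non-cycle edge, and carefully comparing the change $\Delta|F|$ against $\Delta|C|$; since the forced non-cycle edges are incident to cycle vertices and the cubic structure limits how these interact, the net change in $\sigma$ is negative by at least a fixed amount per invocation. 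A secondary, more bookkeeping-level obstacle is that a single execution of step~\ref{enum:classicalEppsteinStep1} (contributing $\tau(G_{j+1},F_{j+1})$ to the count) is a \emph{sequence} of trivial reductions, so I need the decrease to be at least proportional to the number of edges touched, not just to the number of rule-applications; this follows because rules (a) and (b) each touch $O(1)$ edges and strictly change $|F|$, and rule (c) touches $O(1)$ edges while making progress as above, so I can bound the total edges touched by a constant times the total $\sigma$-decrease across the whole path. With the constants chosen generously (the bound $4s$ leaves plenty of slack), this should go through.
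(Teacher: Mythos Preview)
Your telescoping/potential-function approach is different from the paper's, and as written it has a genuine gap. The problem is step~\ref{enum:classicalEppsteinTrivialDeg3} (rule (b), deleting the third edge at a vertex with two forced edges): this operation leaves $n(G)$, $|F|$, and $|C(G,F)|$ all unchanged (the deleted edge cannot lie in any unforced $4$-cycle since its endpoint has only that one unforced edge), so $\sigma$ does not move at all, yet $\tau$ increases by~$1$. Thus no positive per-edge constant $c_0$ can exist, and your displayed inequality $\sigma(G_{j+1},F_{j+1})\le\sigma(G_j,F_j)-c_0(1+\tau(G_{j+1},F_{j+1}))$ fails whenever a round of trivial reductions contains many step-(b) deletions. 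The same issue you flag for rule~(c) --- a forcing that increases $|C|$ by one and hence leaves $\sigma$ unchanged --- also arises for rule~(a) and for the branching step, so the ``plenty of slack'' remark does not rescue the argument.

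The paper's proof avoids the potential-function bookkeeping entirely by a \emph{global} count. Because the path ends in ``true'', all edges ever forced along it lie in some Hamiltonian cycle; that cycle has $n(G)$ edges, of which $|F|$ are already forced and $2|C(G,F)|$ come from the initial unforced-isolated $4$-cycles (whose edges the algorithm never forces). Hence the total number of forcings along the whole path is at most $n(G)-|F|-2|C(G,F)|\le s(G,F)$, bounding the step-(a) and step-(c) contributions to $\sum\tau$ directly. Step-(b) deletions are then controlled by noting that each one is \emph{triggered} by some edge becoming forced, and any single forced edge can trigger at most two such deletions (one at each endpoint); together with the $\le s(G,F)/2$ branching operations this gives the bound $4s(G,F)$. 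The key idea you are missing is this structural upper bound on forcings coming from the Hamiltonian cycle itself, which replaces the per-step analysis and sidesteps the $|C|$-cancellation and step-(b) issues altogether.
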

\begin{proof}
	First, note that no edge which is part of some $\omega \in C(G,F)$ will be forced at any point of the algorithm. On the other hand, every Hamiltonian cycle contains exactly two (opposite) edges of each $\omega \in C(G,F)$. Hence, along the path from $(G_1,F_1)$ to $(G_l,F_l)$, 
a total of at most $n(G)-|F|-2|C(G,F)|\leq s(G,F)$ edges will be forced. In particular, at most $s(G,F)$ edges will be forced in step~\ref{enum:classicalEppsteinTrivialDeg2} or \ref{enum:classicalEppsteinTrivialCycle}	of $\Eppstein(G_j,F_j)$ over all $j=1,\ldots,l$. 
	As for the number of edges deleted in step~\ref{enum:classicalEppsteinTrivialDeg3}, note that 
	since $(G,F)$ is trivial-reduction-free, 
	every deletion of an edge in step~\ref{enum:classicalEppsteinTrivialDeg3} of $\Eppstein(G_j,F_j)$ for some $j \in\{1,\ldots,l\}$  is induced by an additional edge being forced. Each such edge can induce at most two edges being deleted in step~\ref{enum:classicalEppsteinTrivialDeg3}. Therefore, at most $2(s(G,F)+ s(G,F)/2) =3s(G,F)$ edges are deleted in step~\ref{enum:classicalEppsteinTrivialDeg3} of  $\Eppstein(G_j,F_j)$ over all $j=1,\ldots,l$. 
\end{proof}

The quantum algorithm we construct is essentially a quantum version of a non-recursive variant of $\Eppstein$ which proceeds by successively forcing and deleting edges according to a given input of a suitable search space. 
Note that $G$ and $F$ are classical input parameters and as such, the quantum circuit may depend on $G$ and $F$. We will not actively modify $G$ or $F$ in the quantum algorithm. Instead, the removal and forcing of additional edges are done by quantumly storing encodings of a set $X\subset\{1,\ldots, 3|E|\}$, where for $e\in\{1,\ldots,|E|\}$, $e\in X$ means that edge number $e$ is forced, and $e+|E|\in X$ means that edge number $e$ has been removed (we assume that all edges and vertices are enumerated in a pre-specified order, i.e., with some abuse of notation\footnote{For simplicity of notation, we will in this section not distinguish between an edge $e$ and its number in the enumeration, and simply write $e=10$ to mean that $e$ is the $10$\textsuperscript{th} edge according to the enumeration. We do the same for vertices.}, $E=\{1,\ldots,|E|\}$ and $V=\{1,\ldots, |V|\}$). 
For convenience we  also introduce dummy variables which correspond to the values $2|E|+1,\ldots,3|E|$. 

\begin{algorithm}[htpb]
\colorbox{algbgcolour}{\parbox{\textwidth}{\begin{algorithmic}[1]
\Procedure{NonRecursiveEppstein}{$G,F$}
	\State 
	$s := n(G)-|F|-|C(G,F)|$, 
	 $r:=\lfloor s/2\rfloor + 4s$
	\ForAll{$\vec \nu \in \{0,1\}^r$}
		\State $X := \reduce(G,F,\vec \nu)$
		\State $h:= \checkcycles(G,F,X)$
		\If{$h=1$}
			\State\Return true
		\EndIf
	\EndFor	
	\State\Return false
\EndProcedure
\State
\Procedure{$\reduce$}{$G, F, \vec \nu = (\nu_1,\ldots,\nu_r)$}
	\State $X:=\emptyset$ 
	\For{$i=1,\ldots,r$ } 
		\State $x:=\textsc{Calculate}(G,F,X,\nu_i)$
		\State $X := X\cup \{x\}$
	\EndFor
 	\State \Return $X$
\EndProcedure
\State 
\Procedure{$\calculate$}{$G=(V,E), F, X, \nu$}
	\State $F':= \{e\in\{1,\ldots,|E| : e\in X\}$,
	 $D := \{e\in\{1,\ldots,|E| : e+|E|\in X\}$,
	 $G':=G\backslash D$ 
	\If {$G'$ contains a vertex with degree $2$ with at least one edge in $E\backslash (F\cup F' \cup D)$} \label{alg2case1-1}
		\State $e := $  one of the  edges in $E\backslash (F\cup F'\cup D)$ incident to that vertex, \label{alg2case1-2}
		 $a := 0$
	\ElsIf {$G'$ contains a vertex with degree $3$ with exactly two edges in $F\cup F'$}\label{alg2case2-1}
		\State $e :=$ the third edge incident to that vertex,
		 $a :=1$\label{alg2case2-2}
	\ElsIf {$G'\backslash (F\cup F')$ contains a cycle of $4$ edges with two of its opposite vertices being incident to an edge in $F\cup F'$ and one of the  other two vertices being incident to a non-cycle edge in $E\backslash (F\cup F'\cup D)$}\label{alg2case3-1}
		\State $e:=$ that non-cycle edge in $E\backslash (F\cup F'\cup D)$,
		 $a := 0$\label{alg2case3-2}
	\ElsIf {$G'$ contains a vertex of degree $0$ or $1$, or if $F\cup F'$ contains three edges meeting at a vertex, or if $G'\backslash (F\cup F') $ is a collection of disjoint $4$-cycles and isolated vertices}\label{alg2case4-1}
		\State $e := i+2|E|$,
		 $a := 0$\label{alg2case4-2}
	\ElsIf {$G'\backslash (F\cup F')$ contains a $4$-cycle, two vertices of which  
	are incident to an edge in $F\cup F'$} 	
	\label{alg2case5-1} 
		\State $e:=$ an edge in $G'\backslash (F\cup F') $ incident to one of the other two vertices and which does not belong to the cycle,
		 $a := \nu$\label{alg2case5-2}
	\ElsIf {$F\cup F'$ is nonempty and not a collection of cycles}\label{alg2case6-1}
		 \State $e:=$ any edge in $E\backslash (F\cup F'\cup D)$ that is adjacent to an edge in $F\cup F'$,
		 $a:= \nu$\label{alg2case6-2}
	\Else\label{alg2case7-1}
		\State $e:=$ any edge in $G'$ that is not part of a $4$-cycle in $G'\backslash (F\cup F')$,
		 $a:=\nu$ \label{alg2case7-2}
	\EndIf
	\State\Return $e+a|E|$
\EndProcedure
\State
\Procedure{$\checkcycles$}{$G=(V,E),F,X$}
  	\State $F':= \{e\in\{1,\ldots,|E| : e\in X\}$,
	 $D := \{e\in\{1,\ldots,|E| : e+|E|\in X\}$,
	 $G' := (V, E\backslash D)$
	\If{$G'$ contains a vertex of degree 1} \label{algcheckcase1}
		\State\Return false
	\ElsIf {$F\cup F'$ contains three edges meeting at a vertex}\label{algcheckcase2}
		\State\Return false
	\ElsIf {$G\backslash (F\cup F') $ is not a collection of disjoint $4$-cycles and isolated vertices}\label{algcheckcase3}
		\State\Return false
	\ElsIf {$G$ is disconnected}\label{algcheckcase4}
		\State\Return false
	\EndIf 
	\State\Return true
\EndProcedure
\end{algorithmic}}}
\caption{Non-recursive variant of $\Eppstein$. Note that $G$ and $F$ are never modified, and that in the $\reduce$ subroutine, $|X|=i$ after each cycle of the loop.}\label{alg:nonrecursiveEppstein}
\end{algorithm}

Before moving to the quantum algorithm, we first show how to perform $\Eppstein$ non-recursively. 
By Proposition~\ref{prop:trivialreductionbound} and Proposition~\ref{prop:sbound} in Appendix~\ref{app:runtime}, a given path leading to an accepting leaf  of the recursion tree $\rectree(G,F)$  adds at most $r:=\lfloor s(G,F)/2\rfloor + 4s(G,F)$ elements to $X$, corresponding to at most $4s(G,F)$ edges deleted or forced in step~\ref{enum:classicalEppsteinStep1} (Proposition~\ref{prop:trivialreductionbound}) of Alg.~\ref{alg:eppsteinClassical}, and at most $s(G,F)/2$ in steps~\ref{enum:classicalEppsteinStep4} or \ref{enum:classicalEppsteinStep5} (Proposition~\ref{prop:sbound}). Thus, a sequence of  $r$ binary variables $\nu_1,\ldots, \nu_r\in\{0,1\}$ suffice to enumerate all relevant leafs of $\rectree(G,F)$, 
where each variable $\nu_i$ corresponds to an edge deleted or forced in step~\ref{enum:classicalEppsteinStep1}, step~\ref{enum:classicalEppsteinStep4} or step~\ref{enum:classicalEppsteinStep5}. 
More precisely, 
the value of $\nu_i$ is ignored if the $i$\textsuperscript{th} element added to $X$ is through a trivial reduction (i.e., in step~\ref{enum:classicalEppsteinStep1} of $\Eppstein$), otherwise the value of $\nu_i$ specifies if the edge chosen in step~\ref{enum:classicalEppsteinStep3} is forced (step~\ref{enum:classicalEppsteinStep4}) or removed (step~\ref{enum:classicalEppsteinStep5}). 
Thus, $\Eppstein$ induces a mapping from $\mathcal I:=\{0,1\}^r\rightarrow \mathcal P_r([3|E|])$, $\vec \nu:=(\nu_1,\ldots, \nu_r)\mapsto X=X(\vec \nu)$. 

Note that although the same task could be achieved by only introducing $\lfloor s(G,F)/2\rfloor$ binary variables instead of $r$, the additional $4s(G,F)$ variables ensure that at any given point of the implementation, $X$ has a pre-determined size, which would not be the case otherwise (indeed,  Proposition~\ref{prop:trivialreductionbound} only provides an upper bound to the number of trivial reductions overall but these are generally distributed in a previously unknown way). This will be important when we later want to use the results of Section~\ref{sec:iterative}.  

The non-recursive (classical) variant of $\Eppstein$ is given in Alg.~\ref{alg:nonrecursiveEppstein}. Write $X=F'\cup D$, where $F'=\{e\in\{1,\ldots, |E|\} : e\in X\}$ and $D=\{e\in\{1,\ldots, |E|\} : e+|E|\in X\}$ are the edges which would be forced and deleted in either of steps~\ref{enum:classicalEppsteinStep1}, \ref{enum:classicalEppsteinStep4} or \ref{enum:classicalEppsteinStep5} of $\Eppstein(G,F)$, respectively. 
The algorithm first computes $X(\vec \nu)$ from $\vec \nu$ (we call this operation $\reduce$), and then checks 
the conditions of step~\ref{enum:classicalEppsteinStep2} for the FCHC instance $(G\backslash D, F\cup F')$ and returns the corresponding value. 
We call the second step $\checkcycles$. For completeness, $\checkcycles$ also returns false if none of the conditions of step~\ref{enum:classicalEppsteinStep2} apply (note that if this is the case, Propostions~\ref{prop:correctness} and \ref{prop:trivialreductionbound} imply that  $\vec\nu$ corresponds to a branch that does not find a Hamiltonian cycle). The full non-recursive version $\Eppstein$ simply goes through all $2^{r}$ values of $\vec \nu$, yielding a runtime of $O(2^{r}\poly(n(G)))$. Alternatively, by picking $\vec \nu\in\mathcal I$ uniformly at random, an expected rutnime of $O(2^r/2^t\poly(n(G))) = O(2^{s/2}\poly(n(G)))$ can be achieved. 

Note that we have omitted checking the conditions of step~\ref{enum:classicalEppsteinTerminalUnneccessary} of $\Eppstein$ in the non-recursive formulation, as it does not affect the correctness of the algorithm (see also remark after Proposition~\ref{prop:correctness}). Indeed, for a  ``good'' branch (i.e., a value of $\vec\nu$ that corresponds to finding a Hamiltonian cycle), the condition of step~\ref{enum:classicalEppsteinTerminalUnneccessary} of $\Eppstein$ is never fulfilled, whereas Proposition~\ref{prop:correctness} ensures that if $F\cup F'$ contains a non-Hamiltonian cycle, $\checkcycles$ will return false. Instead of checking the condition of step~\ref{enum:classicalEppsteinTerminalUnneccessary} of $\Eppstein$, the condition in line~\ref{alg2case6-1} of Alg.~\ref{alg:nonrecursiveEppstein} has been modified to account for the case when $F$ is a collection of (non-Hamiltonian) cycles, in which case step~\ref{enum:classicalEppsteinStep3b} of $\Eppstein$ cannot be applied. 

To turn Alg.~\ref{alg:nonrecursiveEppstein} into a quantum algorithm, we first show that $\reduce$ and $\checkcycles$ can be performed \emph{reversibly} in polynomial time, and with only $O(s\log(n/s)+s+\log n)$ bits, where $s=s(G,F)$ and $n=n(G)$, and we assume that $\checkcycles$ simply writes the result on a single output bit. Note that throughout $\textsc{NonRecursiveEppstein}(G,F)$, $G$ and $F$ are never modified. As such, it is sufficient to write $\reduce$ and $\checkcycles$ as reversible circuits $\reduce_{G,F}$ and $\checkcycles_{G,F}$, which depend on $G$ and $F$, and which perform
\begin{equation}\label{eq:reducecheckcyclesdefine}
	\ket{\vec \nu}\ket 0\ket 0\ket 0 \ket 0 
		\stackrel{\reduce_{G,F}}{\longrightarrow}
	\ket{\vec \nu}\ket {\EffEnc Z(\vec \nu)}\ket0\ket0
		\stackrel{\checkcycles_{G,F}}{\longrightarrow}
	\ket{\vec \nu}\ket {\EffEnc Z(\vec \nu)}\ket {h(Z(\vec \nu))} \ket 0,
\end{equation}
where $Z(\vec \nu) = (x_1,\ldots,x_r)$, $x_1,\ldots,x_r$ are the elements of $X(\vec\nu)$ in the order in which they are added to $X$ in $\reduce$,  $h(Z(\vec \nu))$ is the output bit returned by $\checkcycles(G,F,X(\vec\nu))$, and the last register comprises at most $O(s\log(n/s)+s+\log n)$ ancilla bits. 

 Next, we turn this into a quantum process by replacing every elementary reversible operation by its corresponding quantum operation. 
The final step is to use amplitude amplification \cite{2000_Brassard} (or alternatively fixed point search \cite{YoderLowChuang14}) to  quantumly search for the value $\ket{h(Z(\vec\nu))}=\ket 1$ on the output register. Note that since $t$ of the $r$ input bits are irrelevant, the dimension of the target space, if a Hamiltonian cycle exists, is at least $2^t$. Thus, fixed point search requires $O(\sqrt{2^r/2^t})=O(2^{s/4})$ repetitions of $\reduce_{G,F}$ and $\checkcycles_{G,F}$.

It thus only remains to show that for any given trivial-reduction-free FCHC instance $(G,F)$,  both $\reduce_{G,F}$ and $\checkcycles_{G,F}$ can be performed reversibly in polynomial time and with only $O(s\log(n/s)+s+\log n)$ bits, where $s=s(G,F)$ and $n=n(G)$.

\subsubsection{Reversible space-efficient implementation of $\reduce_{G,F}$}

The basic idea of the $\reduce_{G,F}$ algorithm is to reversibly generate an efficient encoding of $X(\vec \nu)\subset\{1,\ldots, 3|E|\}$ from $\ket{\vec \nu}$ using Theorem~\ref{thm:setgen} by specifying suitable reversible $\calculate_i$ operations. For each value of $i=1\ldots, r$, let $\calculate_{G,F,i}$ be the operation that finds the next edge $e_i$ and an \emph{action bit} $a_i\in \{0,1\}$ following the implementation of $\textsc{Calculate}$ in Alg.~\ref{alg:nonrecursiveEppstein}. That implementation in turn follows the implementation of $\Eppstein$, i.e., an edge $e_i$ is selected according to the rules of step~\ref{enum:classicalEppsteinStep1} and \ref{enum:classicalEppsteinStep3}, and  $a_i=0$ if $e_i$ will be forced and $a_i=1$ if $e_i$ will be removed.  
If no more edges are forced or deleted (i.e., one of the terminal conditions of step~\ref{enum:classicalEppsteinStep2a} or \ref{enum:classicalEppsteinStep2b} of $\Eppstein$ apply), we set $e_i$ to be a dummy variable $e_i=2|E|+i$ and $a_i=0$.

The task then becomes to reversibly implement the operations
\begin{equation}\label{eq:calculate1def}
	\calculate_{G,F,1} \ket{\nu_1} \ket 0= \ket{\nu_1}\ket{x_1}
\end{equation}
and
\begin{equation}\label{eq:calculateidef}
	\calculate_{G,F,i} \ket{\nu_i} \ket{\EffEnc Z_{i-1}}\ket 0= \ket{\nu_i}\ket{\EffEnc Z_{i-1}}\ket{x_i}
\end{equation}
for $i=2,\ldots,r$, 
where $x_i=e_i+a_i|E|$ and $Z_i = (x_1,\ldots,x_i)$. Note that by introducing the dummy variables, we ensure that for any $i\in\{1,\ldots,r\}$, $X_i=\{x_1,\ldots,x_i\}$ has exactly $i$ elements, even if no edges have been forced or deleted in some of the steps.

The primary challenge of the implementation is to maintain reversibility and at the same time use few ancillas. For this, it is important that any ancillas used are as soon as possible reset to $\ket 0$ in order to avoid accumulating unnecessary junk bits.  

\begin{proposition}\label{prop:calculate}
 Let $(G,F)$ be a trivial-reduction-free FCHC instance and $i\in\{1,\ldots,r\}$, where $r=\lfloor s(G,F)/2\rfloor + 4s(G,F)$. Then, the operation $\calculate_{G,F,i}$ defined by Eq.~\eqref{eq:calculate1def} and \eqref{eq:calculateidef} can be implemented reversibly using $O(\log n(G))$ ancillas and $O(\poly(n(G)))$ gates.
\end{proposition}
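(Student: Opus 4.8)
The plan is to show that $\calculate_{G,F,i}$ can be implemented by (a) reconstructing from $\ket{\EffEnc Z_{i-1}}$ a basic encoding $\ket{\enc X_{i-1}}$ of the set $X_{i-1}=\{x_1,\dots,x_{i-1}\}$, (b) decoding from $\ket{\enc X_{i-1}}$ the ``current'' forced and deleted edge sets $F'=\{e : e\in X_{i-1}\}$ and $D=\{e : e+|E|\in X_{i-1}\}$ in the sense of Alg.~\ref{alg:nonrecursiveEppstein}, (c) using these together with the fixed classical data $G,F$ and the bit $\nu_i$ to select the edge $e_i$ and action bit $a_i$ exactly as in $\textsc{Calculate}$, hence to write $\ket{x_i}$ with $x_i = e_i + a_i|E|$, and finally (d) uncomputing all intermediate registers (the decoded sets, the basic encoding, and all scratch space) by running the inverse of steps (a)--(c) so that only $\ket{\nu_i}\ket{\EffEnc Z_{i-1}}\ket{x_i}$ remains. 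Since $\EffEnc$ is a reversible function of the $\enc$-blocks, and each block has known size, step (a) and its inverse are obtained from $O(\log r)$ applications of $\union$ (and its inverse) for the appropriate set sizes; by Proposition~\ref{prop:setops}(\ref{prop:setopsunion}) each costs $O(\poly(N))$ gates and $O(r\log(N/r)+r+\log N)$ ancillas, where $N = 3|E| = O(n(G))$, but since $r = O(s(G,F)) = O(n(G))$ this is $O(\poly(n(G)))$ gates and, more to the point, the ancilla count here is subsumed. Wait — we need $O(\log n)$ ancillas, not $O(r\log(N/r))$; I address this below.

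The key point for the ancilla budget is that $\calculate_{G,F,i}$ need not \emph{store} a full $\enc$-encoding of $X_{i-1}$ simultaneously with everything else. Instead, all the graph-theoretic tests in $\textsc{Calculate}$ (``does $G'$ have a degree-two vertex with an unforced incident edge'', ``does $G'\backslash(F\cup F')$ contain a suitable $4$-cycle'', etc.) are local checks involving only a constant number of vertices and edges at a time. To evaluate whether a given edge $e$ lies in $F'$ or in $D$, it suffices to call $\effcontains_{N,i-1}$ on $\ket{\EffEnc Z_{i-1}}$ with query value $e$ or $e+|E|$, which by Proposition~\ref{prop:effcontains} uses only $O(\log N)$ ancillas and $O(\poly(N))$ gates, and crucially does \emph{not} require materialising $\enc X_{i-1}$. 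Thus each of the finitely many case-tests in $\textsc{Calculate}$ — iterating over the $O(\poly(n))$ candidate vertices, edges, and $4$-cycles of the fixed graph $G$, and for each making $O(1)$ membership queries into $\ket{\EffEnc Z_{i-1}}$ — can be carried out with $O(\log n)$ ancillas, computing the answer onto a scratch bit, using it, and uncomputing it. The edge $e_i$ finally selected is a function of $G,F,\vec\nu$ expressible via polynomially many such reversible sub-tests; standard reversible-computation bookkeeping (compute the index of the first applicable case and the selected edge into an $O(\log n)$-bit register, XOR $x_i = e_i + a_i|E|$ into the output, then uncompute the index register by re-running the tests in reverse) keeps the total ancilla count at $O(\log n)$ and the gate count at $O(\poly(n))$. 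Connectivity-type conditions (needed only in $\checkcycles$, not here) do not arise in $\calculate$; every condition in $\textsc{Calculate}$ is a bounded-locality predicate modulo membership queries, which is what makes the $O(\log n)$ ancilla bound possible.

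Concretely I would organise the proof as: first observe $N=3|E|=O(n)$ and $r=O(s)=O(n)$, so ``$O(\poly(N))$'' and ``$O(\poly(n))$'' coincide and ``$O(\log N)$'' equals ``$O(\log n)$''; second, phrase $\textsc{Calculate}$ as a fixed finite decision list of predicates $P_1,\dots,P_7$ (the seven \textbf{if}/\textbf{elseif} branches), each of which is a disjunction over $O(\poly(n))$ local configurations of an $O(1)$-fan-in Boolean formula in the atoms ``$e\in F$'' (classical, free), ``$e\in F'$'' ($\effcontains$ query), ``$e\in D$'' ($\effcontains$ query), and adjacency facts about $G$ (classical, free); third, invoke Proposition~\ref{prop:effcontains} to implement each atom reversibly with $O(\log n)$ ancillas and $O(\poly(n))$ gates, and Proposition~\ref{prop:compositereversible} to compose the polynomially-many such sub-circuits while reusing the same $O(\log n)$ ancilla workspace; fourth, for the branch-selection and output step, compute the index $j^\ast$ of the first satisfied $P_j$ and the corresponding $(e_i,a_i)$ into an $O(\log n)$-bit register, copy $x_i=e_i+a_i|E|$ to the output, and uncompute the register by re-running the predicate evaluations in reverse — exactly the ``compute–copy–uncompute'' pattern, whose correctness follows because the predicates are functions of inputs that are themselves left unchanged. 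The main obstacle is purely bookkeeping: verifying that every scratch bit introduced while evaluating a predicate or locating a witnessing vertex/edge/$4$-cycle is uncomputed before the next predicate is evaluated, so that the ancilla count stays at $O(\log n)$ rather than growing with $i$ or with the number of candidate configurations — but this is exactly the discipline already used for $\effcontains$ in Proposition~\ref{prop:effcontains}, and it goes through here by the same reasoning together with Proposition~\ref{prop:compositereversible}.
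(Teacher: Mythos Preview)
Your proposal is correct and follows essentially the same approach as the paper: query membership in $X_{i-1}$ via $\effcontains$ directly on $\ket{\EffEnc Z_{i-1}}$ (never materialising $\ket{\enc X_{i-1}}$, which you rightly abandon after the first paragraph), iterate over the seven branches and over the $O(\poly n)$ candidate vertices/edges/$4$-cycles within each, and use compute--copy--uncompute to return all scratch to zero. The paper's only extra content is making the ``first applicable branch / first witnessing configuration'' selection explicit via a small flag counter on which all later checks are controlled (and an analogous inner counter for the iteration within each branch), which is precisely a concrete realisation of your ``compute the index $j^\ast$ of the first satisfied $P_j$'' step.
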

\begin{proof}
The basic idea to maintain reversibility is to introduce a counter, initially set to $0$, which is increased once suitable values for $e_i$ and $a_i$ have been found. Then, by controlling all operations on that counter being $0$, we ensure that no further edges are selected once an edge and action bit has been found, and hence that only one edge and action bit is selected. 

More precisely, we introduce a counter from $0$ to $7$, which we call the \emph{flag counter}, and denote it by $\mathcal{FC}$. We also introduce an additional ancilla bit, which we call the \emph{flag bit}, and denote it by $\mathcal{FB}$. Both are initially set to zero. We denote the register of $O(\log n)$ bits containing the value of $e_i$ as $\mathfrak e$ and the register containing the value of $a_i$ as $\mathfrak a$. We assume that $\mathfrak e$ and $\mathfrak a$ are both initially set to zero. 

For clarity of notation, we will only cover the case $i\geq 2$ here. The case $i=1$ is fully analogous, but without the $\ket{\EffEnc Z_{i-1}}$ register and ignoring any operations involving it. As before, we write  $Z_{i-1}=(x_1,\ldots,x_{i-1})$, $X_{i-1}=\{x_1,\ldots,x_{i-1}\}$, $F'=\{e\in\{1,\ldots,|E|\}: e\in X_{i-1}\}$ and $D=\{e\in\{1,\ldots,|E|\}: e+|E|\in X_{i-1}\}$.

\begin{figure}[htbp]
	\begin{tikzpicture}
	
	\node at (0,0) {\begin{tikzpicture}[scale=.8]
	
		\draw (0,-.5) -- (7,-.5); 
		\draw (0,1) -- (7,1);
		\draw (0,2) -- (7,2);
		\draw (0,3) -- (7,3);
		\draw (0,4) -- (7,4);
		\draw (0,5) -- (7,5);
		\draw (0,6) -- (7,6);
				
		\draw (3,1) -- (3,-.5);
		
		\draw (6,2) -- (6,1.1);
		\draw (6,.9) -- (6,-.5);			
		
		\filldraw[fill=red!10, draw] (1,.5) rectangle (5,6.5);		
		
		\filldraw[fill=green!10, draw] (3,-.5) circle (.5);
		\node at (3,-.5) {$=\!0?$};
		
		\filldraw[fill=blue!10, draw] (5.5,0) rectangle (6.5,-1);
		\node at (6,-.5) {$+1$};
		
		\node[circle,inner sep=2pt,fill] at (6,2) {};
		
		\node[left] at (0,-.5) {$\mathcal{FC}$};
		
		\node[left] at (0,4) {$\mathfrak e$};
		\node[left] at (0,3) {$\mathfrak a$};
		
		\node[left] at (0,2) {$\mathcal{FB}$};
		
		\node[left] at (0,1) {$\ket{0}_{\text{anc}}$};
		\node[right] at (7,1) {$\ket{0}_{\text{anc}}$};
		
		\node[left] at (0,6) {$\ket{\nu_i}$};
		\node[right] at (7,6) {$\ket{\nu_i}$};

		\node[left] at (0,5) {$\ket{\EffEnc Z_{i-1}}$};
		\node[right] at (7,5) {$\ket{\EffEnc Z_{i-1}}$};

		\node at (3,3.5) {\large Check \& Select};

	\end{tikzpicture}};
	
	\node at (6,0) {$=:$};
	
	\node at (8.5,0) {\begin{tikzpicture}[scale=.8]
	
		\draw (0,-.5) -- (4,-.5); 
		\draw (0,1) -- (4,1);
		\draw (0,2) -- (4,2);
		\draw (0,3) -- (4,3);
		\draw (0,4) -- (4,4);
		\draw (0,5) -- (4,5);
		\draw (0,6) -- (4,6);		
		
		\filldraw[fill=orange!10, draw] (1,-1) rectangle (3,6.5);						
		\node at (2,3) {$\text{CCS}$};

	\end{tikzpicture}};
	
	\end{tikzpicture}
	\caption{Controlled check-and-select (CCS) operation. The ancilla register consists of $O(\log n)$ bits.}\label{fig:checkselect}
\end{figure}
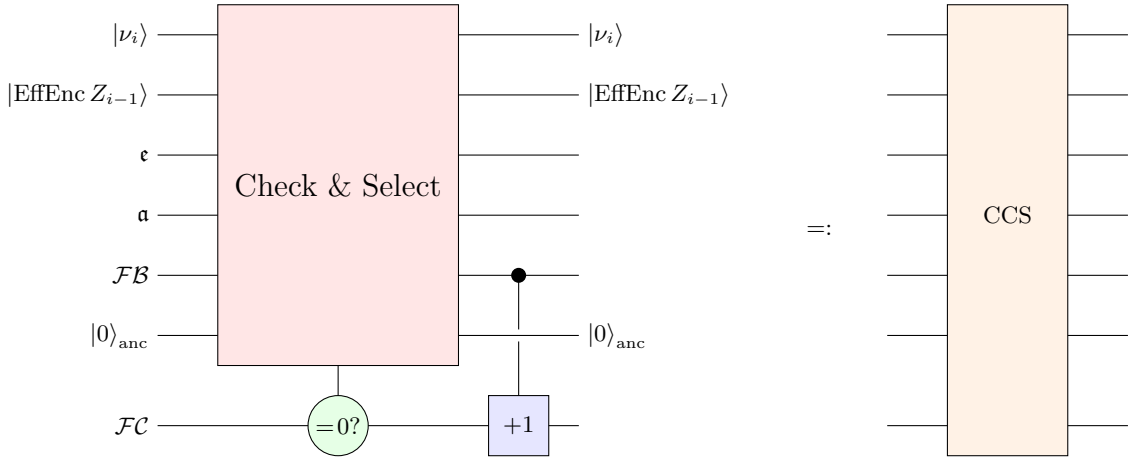

The basic building block of the implementation consists of a \emph{controlled check-and-select} (CCS) operation illustrated in Fig.~\ref{fig:checkselect}. We will implement seven different check-and-select operations, each corresponding to lines \ref{alg2case1-1}--\ref{alg2case1-2}, \ref{alg2case2-1}--\ref{alg2case2-2}, \ref{alg2case3-1}--\ref{alg2case3-2}, \ref{alg2case4-1}--\ref{alg2case4-2}, \ref{alg2case5-1}--\ref{alg2case5-2},   \ref{alg2case6-1}--\ref{alg2case6-2}, and \ref{alg2case7-1}--\ref{alg2case7-2}, of Alg.~\ref{alg:nonrecursiveEppstein}, respectively. For example, the check-and-select operation corresponding to lines \ref{alg2case1-1}--\ref{alg2case1-2} of Alg.~\ref{alg:nonrecursiveEppstein} does the following: if $G\backslash D$ contains a vertex with degree $2$ with at least one edge in $E\backslash (F\cup F'\cup D)$, it adds to $\mathfrak e$ the value of $e$ of one of the edges in $E\backslash (F\cup F'\cup D)$ incident to such a vertex (if multiple such vertex/edge combinations exist, $e$ is taken to be the first such edge of the first such vertex), leaves $\mathfrak a$ invariant (such that it stays in $0$), and flips $\mathcal{FB}$. 

\begin{figure}[htbp]
	\begin{tikzpicture}[scale=.8]

		\foreach \i in {-1,...,6}
		{
			\draw (0,\i) -- (2.7,\i);
			\draw (4.3,\i) --  (11.7,\i);
			\draw (13.3,\i) -- (16,\i);
			
			\node at (3.5,\i) {$\cdots$};
			\node at (12.5,\i) {$\cdots$};
		}		
		
		\foreach \i in {1,5,10,14}
		{
			\filldraw[fill=orange!10,draw] (\i -.2, 6.5) rectangle (\i+1.2,-.5);
		}
		
		\node at (1.5,3) {$\text{CCS}_1$};
		\node at (5.5,3) {$\text{CCS}_7$};
		
		\node at (10.5,3) {$\text{CCS}_7^{-1}$};
		\node at (14.5,3) {$\text{CCS}_1^{-1}$};

		\node[circle,inner sep=2pt,fill] at (7+.2,4) {};
		\node[circle,inner sep=2pt,fill] at (9-.2,3) {};
		
		\foreach \i in {0,...,3}
		{
			\filldraw[white] (7+.2-.1,\i-.1) rectangle (7+.2+.1,\i+.1);
		}
		
		\foreach \i in {0,...,2}
		{
			\filldraw[white] (9-.2-.1,\i-.1) rectangle (9-.2+.1,\i+.1);
		}
		
		\draw (7+.2,4) -- (7+.2,-1.3);
		\draw (9-.2,3) -- (9-.2,-1);
		
		\draw (7+.2,-1) circle (.3);
		
		\filldraw[fill=blue!10,draw] (8.5-.2,-.5) rectangle (9.5-.2,-1.5);
		\node at (9-.2,-1) {$+|E|$};
				
		\node[left] at (0,0) {$\ket0_{\mathcal{FC}}$};
		\node[right] at (16,0) {$\ket0_{\mathcal{FC}}$};
				
		\node[left] at (0,4) {$\ket0_{\mathfrak e}$};
		\node[right] at (16,4) {$\ket0_{\mathfrak e}$};
		
		\node[left] at (0,3) {$\ket0_{\mathfrak a}$};
		\node[right] at (16,3) {$\ket0_{\mathfrak a}$};
		
		\node[left] at (0,2) {$\ket{0}_{\mathcal{FB}}$};
		\node[right] at (16,2) {$\ket{0}_{\mathcal{FB}}$};
		
		\node[left] at (0,1) {$\ket{0}_{\text{anc}}$};
		\node[right] at (16,1) {$\ket{0}_{\text{anc}}$};
		
		\node[left] at (0,6) {$\ket{\nu_i}$};
		\node[right] at (16,6) {$\ket{\nu_i}$};

		\node[left] at (0,5) {$\ket{\EffEnc Z_{i-1}}$};
		\node[right] at (16,5) {$\ket{\EffEnc Z_{i-1}}$};
		
		\node[left] at (0,-1) {$\ket 0$};
		\node[right] at (16,-1) {$\ket{x_i}$};

	\end{tikzpicture}
	\caption{Reversible implementation of $\calculate_{G,F,i}$ using $O(\log n)$ ancillas. The seven CCS operations use the check-and-select operations corresponding to  lines \ref{alg2case1-1}--\ref{alg2case1-2}, \ref{alg2case2-1}--\ref{alg2case2-2}, \ref{alg2case3-1}--\ref{alg2case3-2}, \ref{alg2case4-1}--\ref{alg2case4-2}, \ref{alg2case5-1}--\ref{alg2case5-2},   \ref{alg2case6-1}--\ref{alg2case6-2}, and \ref{alg2case7-1}--\ref{alg2case7-2}, of Alg.~\ref{alg:nonrecursiveEppstein}, respectively.}\label{fig:calculatei}
\end{figure}
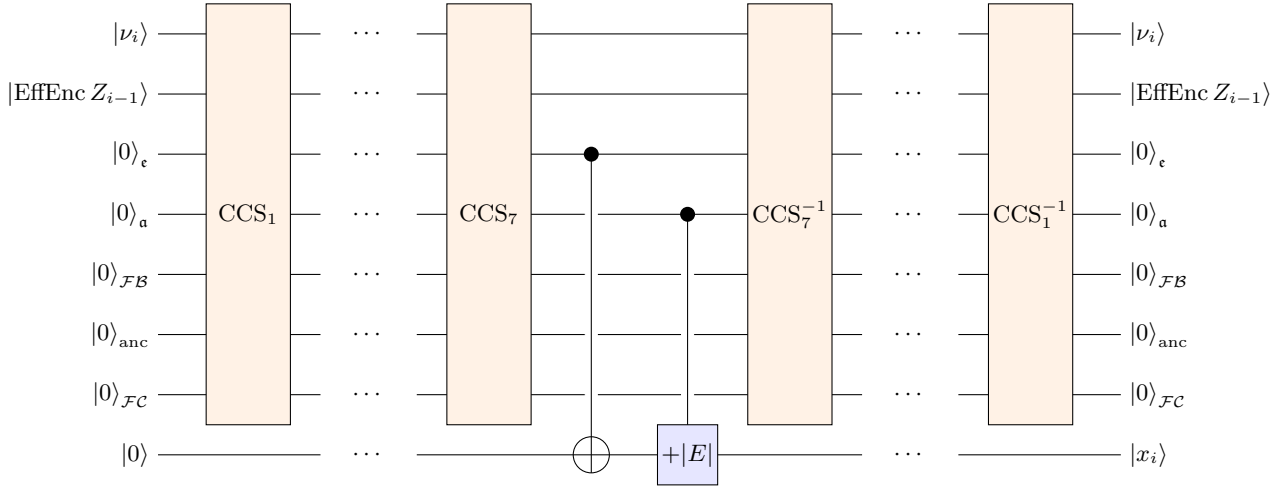

The full reversible implementation of $\calculate_{G,F,i}$ is given in Fig.~\ref{fig:calculatei}. It consists of these seven CCS operations concatenated in sequence, then adds $\mathfrak e + \mathfrak a |E|$ to the output register, and finally applies the inverses of the CCS operations. The latter ensure that all registers except for the input registers ($\ket{\nu_i}\ket{\EffEnc Z_{i-1}}$) and the ouput register are reset to $\ket 0$.

It thus remains to show how to reversibly implement each of the seven check-and-select operations. It is easy to see, however, that each of these can be implemented using $O(\log n)$ ancilla bits. 
Indeed, all check-and-select operations can be formulated as a search over a set of objects that can be classically enumerated beforehand, since $G$ and $F$ are classical inputs, and the checking of each individual object can always be implemented using a constant number of ancillas and calls to $\effcontains_{i-1}$. 

For example, checking whether any given vertex $v$ has any of the properties in question (e.g., whether $v$ has degree $2$ with at least one incident edge in $E\backslash (F\cup F'\cup D)$) reduces to checking which, if any, of its incident edges $e$ have been forced (i.e, $e\in X_{i-1}$) or deleted (i.e, $e+|E|\in X_{i-1}$). The latter can be done using a constant number of calls to $\effcontains_{i-1}$. Controlled on the outcome, a suitable edge and action bit is added to $\mathfrak e_{\text{temp}}$ and $\mathfrak a_{\text{temp}}$, respectively, and $\mathcal{FB}_{\text{temp}}$ flipped, where $\mathfrak e_{\text{temp}}$, $\mathfrak a_{\text{temp}}$ and $\mathcal{FB}_{\text{temp}}$ are ancilla registers of the same sizes as $\mathfrak e$, $\mathfrak a$, and $\mathcal{FB}$, respectively. 
The full CCS operation is then implemented by 
 introducing a counter from $0$ to $n$  initially set to $0$, and then, controlled on the counter being $0$, performing the previous operation for all $n$ vertices in sequence, followed by incrementing that counter controlled on $\mathcal{FB_{\text{temp}}}$ (this  is conceptually the same as the sequence of CCS operations in Fig.~\ref{fig:checkselect}--\ref{fig:calculatei}, except with $\mathcal{FC}$ replaced by a counter from $0$ to $n$, and $\mathfrak e$, $\mathfrak a$, and $\mathcal{FB}$ replaced by $\mathfrak e_{\text{temp}}$, $\mathfrak a_{\text{temp}}$, and $\mathcal{FB}_{\text{temp}}$, respectively). Note that the final value of the counter is then $n+1-$ the first vertex with the property (if the counter is $0$, no vertex has that property). 
Then, the values of $\mathfrak e_{\text{temp}}$, $\mathfrak a_{\text{temp}}$ and $\mathcal{FB}_{\text{temp}}$ are added to $\mathfrak e$, $\mathfrak a$, and $\mathcal{FB}$, respectively. 
After that, applying the inverse of the property checks for all vertices resets all ancillas. This covers the CCS operations corresponding to lines~\ref{alg2case1-1}--\ref{alg2case1-2}, \ref{alg2case2-1}--\ref{alg2case2-2},    \ref{alg2case6-1}--\ref{alg2case6-2},  and all but one of the subcases of  lines~\ref{alg2case4-1}--\ref{alg2case4-2} of Alg.~\ref{alg:nonrecursiveEppstein}.

A similar iteration through all edges in $E$ covers the CCS operation corresponding to lines~ \ref{alg2case7-1}--\ref{alg2case7-2} of Alg.~\ref{alg:nonrecursiveEppstein}.

One can similarly check if $G\backslash (F\cup F'\cup D)$ contains $4$-cycles with certain properties, and select incident edges accordingly if it does. Indeed, note that $G\backslash F$ has at most $O(n)$ $4$-cycles, which can be enumerated in a pre-specified order, so a counter using $O(\log n)$ bits suffices. This covers the CCS operations corresponding to lines \ref{alg2case3-1}--\ref{alg2case3-2} and  \ref{alg2case5-1}--\ref{alg2case5-2} of Alg.~\ref{alg:nonrecursiveEppstein}. 

Finally, to check if $G\backslash (F\cup F'\cup D)$ is a collection of disjoint $4$-cycles and isolated vertices, note that this is the case if and only if every edge in $E\backslash (F\cup F'\cup D)$ is part of exactly one $4$-cycle in $G\backslash (F\cup F'\cup D)$. This can thus be checked in a similar manner by sequentially going through all edges $e$, and then, for each (not necessarily unforced-isolated) $4$-cycle $\omega\ni e$ in $G\backslash F$, checking if any of its edges are in $F'\cup D$. This covers the final subcase of the CCS operation corresponding  lines~\ref{alg2case4-1}--\ref{alg2case4-2} of Alg.~\ref{alg:nonrecursiveEppstein}. 
\end{proof}

Using Theorem~\ref{thm:setgen}, we obtain the efficient reversible implementation of $\reduce$ from Proposition~\ref{prop:calculate}.

\begin{corollary}
	Let $(G,F)$ be a trivial-reduction-free FCHC instance. Then, the operation $\reduce_{G,F}$ defined by Eq.~\eqref{eq:reducecheckcyclesdefine} can be implemented reversibly with $O(s\log(n/s) + s + \log n)$ ancillas and $O(\poly(n))$ gates, where $s=s(G,F)$ and $n=n(G)$. 
\end{corollary}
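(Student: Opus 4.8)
The proof is essentially a direct instantiation of Theorem~\ref{thm:setgen}, with Proposition~\ref{prop:calculate} supplying the required generating circuits. The plan is to set $N := 3|E|$ (so $N = O(n)$, since $G$ has maximum degree $3$ and hence $|E|\le 3n/2$), $r := \lfloor s/2\rfloor + 4s$ (so $r = \Theta(s)$), $\mathcal I := \{0,1\}^r$, and to take $X = X(\cdot)\colon\{0,1\}^r\to\mathcal P_r([3|E|])$ to be the map induced by the non-recursive variant of $\Eppstein$ in Alg.~\ref{alg:nonrecursiveEppstein}. Because $|X_i|=i$ for every $i$ — which is precisely the purpose of the dummy variables $2|E|+i$ — the set $X(\vec\nu)$ genuinely consists of $r$ distinct elements of $\{1,\ldots,3|E|\}$, so in particular $r\le N$, as Theorem~\ref{thm:setgen} requires.

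Next I would verify the hypotheses of Theorem~\ref{thm:setgen}. For the generating circuits I take $\calculate_i := \calculate_{G,F,i}$ from Proposition~\ref{prop:calculate}. Although $\calculate_{G,F,i}$ acts nontrivially only on the $i$-th bit $\ket{\nu_i}$ of $\ket{\vec\nu}$ together with the $\ket{\EffEnc Z_{i-1}}$ register, this is a special case of a reversible circuit on the full input register $\ket{\vec\nu}$, so it has exactly the form~\eqref{eq:calculateithmdef} with ancilla parameter $A = O(\log n)$, namely the $O(\log n)$ internal ancillas of Proposition~\ref{prop:calculate}, which are returned to $\ket 0$. The permutation $(x_1,\ldots,x_r)$ of $X(\vec\nu)$ demanded by the theorem is just the order in which elements are appended to $X$ in $\reduce$; these are pairwise distinct as noted above, and by the defining equation~\eqref{eq:calculateidef} the circuit $\calculate_{G,F,i}$ outputs $x_i$ when fed $\ket{\EffEnc Z_{i-1}}$, so~\eqref{eq:calculateithmdef} indeed holds.

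Applying Theorem~\ref{thm:setgen} then produces a reversible implementation of $\ket{\vec\nu}\ket 0\mapsto\ket{\vec\nu}\ket{\EffEnc Z_r}$ using $O(r\log(N/r)+r+\log N+A)$ ancillas, $O(r^2)$ calls to $\calculate_{G,F,i}$ for various $i$, and $O(\poly(N))$ further gates. This is exactly $\reduce_{G,F}$ as specified by~\eqref{eq:reducecheckcyclesdefine}; the extra trailing $\ket 0\ket 0$ registers there are simply workspace that remains in $\ket 0$ after $\reduce_{G,F}$ (and is subsequently consumed by $\checkcycles_{G,F}$), and is absorbed into the ancilla count. It then remains only to translate the resource bounds. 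Since $r=\Theta(s)$ and $N = O(n)$, writing $c_1 s\le r\le c_2 s$ and $N\le Cn$ gives $r\log(N/r)\le c_2 s\log\!\left(Cn/(c_1 s)\right) = O(s\log(n/s)+s)$ (using $s\le n$), hence $r\log(N/r)+r+\log N+A = O\!\left(s\log(n/s)+s+\log n\right)$; and the gate count is $O(r^2)\cdot O(\poly(n)) + O(\poly(n)) = O(\poly(n))$, since $r = O(n)$ and each $\calculate_{G,F,i}$ uses $O(\poly(n))$ gates by Proposition~\ref{prop:calculate}.

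I expect no genuine obstacle here: the difficult work is already carried out in Proposition~\ref{prop:calculate} (which encapsulates the delicate reversible, memory-frugal implementation of the seven check-and-select primitives) and in Theorem~\ref{thm:setgen} (which encapsulates the divide-and-conquer uncomputation scheme that avoids the exponential-overhead trap). The only points that require care are confirming that the order of generation is a \emph{bona fide} permutation of $X(\vec\nu)$ — guaranteed by the dummy-variable padding — and confirming that the ancilla parameter $A$ fed into Theorem~\ref{thm:setgen} is $O(\log n)$ rather than something larger, which is exactly the content of Proposition~\ref{prop:calculate}.
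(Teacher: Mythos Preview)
Your proposal is correct and follows exactly the approach the paper intends: the paper itself simply states that the corollary follows from Theorem~\ref{thm:setgen} together with Proposition~\ref{prop:calculate}, and you have carefully spelled out the instantiation (with $N=3|E|$, $r=\lfloor s/2\rfloor+4s$, $A=O(\log n)$) and the translation of the resource bounds. The only thing to add is that your write-up is considerably more detailed than the paper's one-line justification, which is entirely appropriate here.
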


\subsubsection{Reversible space-efficient implementation of $\checkcycles_{G,F}$}

The reversible and space-efficient implementation of $\checkcycles_{G,F}$ follows the (non-reversible) implementation of $\checkcycles$ in Alg.~\ref{alg:nonrecursiveEppstein} by checking in turn whether the conditions in lines~\ref{algcheckcase1}, \ref{algcheckcase2}, \ref{algcheckcase3}, and \ref{algcheckcase4}, respectively, apply.

It is clear that checking the first three conditions of $\checkcycles$, namely whether $G'$ contains a vertex of degree $1$, whether $F\cup F'$ contains three edges meeting at a vertex, or whether $G\backslash (F\cup F'\cup D)$ is a collection of disjoint $4$-cycles and isolated vertices, can each be done in the same way as  the check-and-select operations in the proof of Proposition~\ref{prop:calculate}, and thus can be implemented reversibly using $O(\log n(G))$ ancillas and $O(\poly(n(G)))$ gates. Hence, 
it only remains to check  whether $G\backslash D$ is connected. We use the fact that there is a classical reversible space-efficient algorithm to check that. Indeed, \cite{Reingold2005} shows that there is a (not necessarily reversible) classical algorithm that decides in time $O(\poly(n(G)))$ and space $O(\log n(G))$ if a graph $G$ is connected. Ref.~\cite{LMT2000} then shows that any given (not necessarily reversible) computation requiring memory $S$ and time $T$ can be implemented reversibly using $S$ ancillas and $2^{O(S)}$ gates (see also \cite{Williams2000,BTV2001}). This implies the following. 

\begin{proposition}[\cite{Reingold2005,LMT2000}]\label{prop:connectivity}
	There exists a classical deterministic algorithm that checks if a given graph $G$ is connected in time $O(\poly n(G))$, which can be implemented reversibly using $O(\log n(G))$ ancillas. 
\end{proposition}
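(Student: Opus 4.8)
The plan is to assemble the statement from two cited results, with only a small amount of bookkeeping to cast the conclusion in the reversible form required by the paper.

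First I would invoke Reingold's theorem~\cite{Reingold2005}: there is a deterministic algorithm which, given an $n(G)$-vertex graph $G$ on a read-only input, decides whether $G$ is connected using $O(\log n(G))$ work space and $O(\poly(n(G)))$ time. This settles the classical space-bounded part of the claim.

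Next I would reversibilise this algorithm using the simulation of~\cite{LMT2000}: any deterministic computation running in space $S$ and time $T$ admits a reversible implementation using $O(S)$ ancillas and $2^{O(S)}$ gates. Applying this with $S = O(\log n(G))$ and $T = O(\poly(n(G)))$ yields a reversible circuit that computes the connectivity bit using $O(\log n(G))$ ancillas and $2^{O(\log n(G))} = O(\poly(n(G)))$ gates. To match the notion of ``implemented reversibly'' used here, namely an operation $\ket{x}\ket{0}^{\otimes l}\mapsto\ket{f(x)}\ket{0}^{\otimes l}$ that leaves the input encoding of $G$ intact and restores every scratch register to $\ket 0$, I would use the standard trick of running the reversible circuit, copying its single output bit onto a fresh ancilla with a CNOT, and then running the inverse circuit. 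This restores $G$ and all work ancillas to $\ket 0$ while leaving the connectivity bit on the designated output register, at the cost of doubling the gate count and adding one ancilla, both of which are absorbed into the stated bounds.

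The only point requiring any care --- and hence the ``main obstacle'', though it is purely a matter of constants --- is to confirm that the exponential-in-space overhead of the reversibilisation step remains polynomial. Since the $O(\cdot)$ in $S = O(\log n(G))$ and the $O(\cdot)$ in the $2^{O(S)}$ gate bound are both universal constants, their composition gives $2^{O(\log n(G))} = n(G)^{O(1)} = O(\poly(n(G)))$, so the resulting circuit has $O(\poly(n(G)))$ gates as claimed. No graph-theoretic argument beyond correctly quoting and composing the two results is needed.
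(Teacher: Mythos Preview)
Your proposal is correct and matches the paper's argument essentially verbatim: the paper also cites Reingold's $O(\log n)$-space connectivity algorithm and then applies the Lange--McKenzie--Tapp reversibilisation (space $S$ becomes $S$ ancillas and $2^{O(S)}$ gates), observing that $2^{O(\log n)} = \poly(n)$. The compute--copy--uncompute detail you add is implicit in the paper's notion of reversible implementation and is a harmless elaboration.
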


Note that Proposition~\ref{prop:connectivity} only requires oracular access to the adjacency matrix of the graph in question \cite{Reingold2005}. Moreover, we can easily access the required adjacency matrix, i.e., the map
\begin{equation}\label{eq:adjacencymatrix}
	\ket{\EffEnc Z(\vec\nu)}\ket v\ket w\ket 0 \mapsto \ket{\EffEnc Z(\vec\nu)}\ket v\ket w\ket {v\sim_{G'} w}
\end{equation}
can be implemented reversibly with $O(1)$ ancillas and at most three call to $\effcontains_r$ , where the last bit in \eqref{eq:adjacencymatrix} is $1$ if the vertices $v$ and $w$ are connected in $G'=G\backslash D$, and $0$ otherwise. 
This completes our space-efficient reversible implementation of $\checkcycles_{G,F}$. 

\begin{corollary}
	Let $(G,F)$ be a trivial-reduction-free FCHC instance. Then, the operation $\checkcycles_{G,F}$ defined by Eq.~\eqref{eq:reducecheckcyclesdefine} can be implemented reversibly using $O(\log n(G))$ ancillas and $O(\poly(n(G)))$ gates. 
\end{corollary}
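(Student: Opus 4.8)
The plan is to assemble the corollary directly from the ingredients prepared in the preceding paragraphs. Recall that $\checkcycles_{G,F}$ must, for every $\vec\nu$, realise the right half of Eq.~\eqref{eq:reducecheckcyclesdefine}, i.e.\ write onto a fresh output bit the value $h(Z(\vec\nu))$ returned by $\checkcycles(G,F,X(\vec\nu))$ in Alg.~\ref{alg:nonrecursiveEppstein}, while restoring every scratch register to $\ket 0$ and leaving $\ket{\vec\nu}\ket{\EffEnc Z(\vec\nu)}$ untouched. Since $\checkcycles$ outputs \emph{false} exactly when at least one of the four conditions in lines~\ref{algcheckcase1}--\ref{algcheckcase4} of Alg.~\ref{alg:nonrecursiveEppstein} holds, the strategy is: (i) for each of the four conditions, compute a single indicator bit into a dedicated ancilla via a reversible subcircuit that cleans up all of its own temporaries; (ii) write $h=\neg(\text{cond}_a\vee\text{cond}_b\vee\text{cond}_c\vee\text{cond}_d)$ onto the output bit using a constant number of (multi-)controlled-NOT gates; (iii) uncompute the four indicator bits by running the inverses of the subcircuits from (i). By Proposition~\ref{prop:compositereversible}, the ancilla count of the composed circuit is the maximum over its parts and the gate count the sum, so it suffices to bound each part.

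For the first three conditions the indicator bits can be produced exactly in the manner of the check-and-select operations in the proof of Proposition~\ref{prop:calculate}: each condition is a search over a set of objects (vertices, edges, or the $O(n)$ four-cycles of $G\setminus F$) that can be classically enumerated in advance because $G$ and $F$ are classical inputs, and testing a single object only requires deciding which of a constant number of candidate edges are forced or deleted --- a constant number of queries to $\effcontains_r$ against $\ket{\EffEnc Z(\vec\nu)}$ together with $O(\log n)$ ancillas for a counter and temporary registers. Using Proposition~\ref{prop:effcontains} (each $\effcontains_r$ call uses $O(\log n)$ ancillas and $O(\poly n)$ gates, with a degree independent of $r$) and the fact that only $O(\poly n)$ such calls are made, each of the first three indicators is computed reversibly with $O(\log n)$ ancillas and $O(\poly n)$ gates, and the temporaries are erased by running the object tests in reverse.

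The fourth condition --- connectivity of $G'=G\setminus D$ --- is handled by Proposition~\ref{prop:connectivity}: there is a deterministic polynomial-time connectivity test implementable reversibly with $O(\log n)$ ancillas (Reingold's $O(\log n)$-space algorithm, converted by the method of \cite{LMT2000}, whose $2^{O(S)}$ gate overhead is $\poly(n)$ for $S=O(\log n)$). That algorithm only needs oracle access to the adjacency matrix of $G'$, which we supply through the reversible map in Eq.~\eqref{eq:adjacencymatrix}: $v\sim_{G'}w$ holds iff $vw$ is an edge of $G$ and its edge-number has not been deleted, i.e.\ a constant number of table lookups plus at most three calls to $\effcontains_r$, using $O(1)$ extra ancillas and returning $\ket{\EffEnc Z(\vec\nu)}$, $\ket v$, $\ket w$ intact with scratch reset to $\ket 0$; each such query is therefore a legitimate reversible oracle call of the form required by the connectivity routine. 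Composing (i)--(iii), $\checkcycles_{G,F}$ then runs reversibly with $O(\log n)$ ancillas and $O(\poly n)$ gates.

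The main obstacle I expect is bookkeeping for reversibility rather than any new idea: one must ensure that every indicator subcircuit and every adjacency-oracle call restores its scratch to $\ket 0$ before the next step, so that the clean-up phase (iii) is valid, and one must check that the $2^{O(S)}$ gate blow-up of the \cite{LMT2000} simulation, instantiated at $S=c\log n$ for the \emph{fixed} connectivity algorithm, really produces a polynomial whose degree is a universal constant (it does, since $c$ is fixed). Everything else is a routine assembly of Propositions~\ref{prop:effcontains}, \ref{prop:compositereversible}, and \ref{prop:connectivity} with the adjacency-matrix access of Eq.~\eqref{eq:adjacencymatrix}.
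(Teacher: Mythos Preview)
Your proposal is correct and takes essentially the same approach as the paper: the paper's argument (given in the text immediately preceding the corollary rather than as a separate proof) handles the first three conditions by reusing the check-and-select machinery from the proof of Proposition~\ref{prop:calculate}, and the connectivity condition via Proposition~\ref{prop:connectivity} (Reingold plus the reversible simulation of \cite{LMT2000}) with adjacency-matrix access supplied by Eq.~\eqref{eq:adjacencymatrix}. Your explicit compute--OR--uncompute bookkeeping and invocation of Proposition~\ref{prop:compositereversible} simply spell out what the paper leaves implicit.
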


This concludes the proof of Theorem~\ref{thm:quantumAlg}.\qed

\section{Conclusion and outlook}\label{sec:conclusion}

The recent progress in experimental quantum computing \cite{2017_IBM,2018_Google,2018_Intel} increases confidence that  fully scalable quantum computers will be realised at some point in the upcoming decades. However, the  rate at which the number of qubits we can manipulate with relevant precision and coherence times currently grows provides significant motivation for studying potential uses of  size-limited  quantum computers.
Complementary to research dedicated to solving small-yet-hard simulation and ground-state problems \cite{Lloyd96,PhysRevA.92.042303,doi:10.1063/1.5027484}, which are promising applications for really small quantum computers, in this work, we  investigated ways to achieve speedups of classical algorithms by exploiting quantum computers significantly smaller than the problem size. 
 Concretely, we provide a framework for designing hybrid quantum-classical algorithms, which can allow for polynomial asymptotic speedups given a quantum computer which is any constant fraction of the problem size. 

Our result also implies that we can achieve a trade-off between the speedup we obtain, and the size of problem we wish to tackle. Thus, a small quantum computer can dramatically speed up the solving of small problems, but  can be used to achieve more modest speedups of larger instances as well. Such trade-offs have, to our knowledge, not been explored before our works.

We provided the general formalism in the form of the so-called divide-and-conquer hybrid approach,  which enables us to realise such trade-offs for a broad class of recursive classical algorithms, and we provided a characterisation of the space-efficiency of the quantum subroutines required to achieve polynomial speedups.
Moreover, we provided a toolkit for the space-efficient reversible generation and manipulation of sets, which is often the bottleneck of the space requirements of many such algorithms. As an illustration, we show how this framework can be applied to speed up the algorithm of Eppstein for detecting Hamilton cycles in cubic graphs.

We also identify a number of questions that remain unresolved, both from a purely theoretical and from an applied perspective.

First, the algorithms which we so far have applied our hybridd approach to are not the absolutely best known algorithms for their respective problems. Indeed, Eppstein's algorithm has subsequently been improved from $O(1.2599^n\poly(n))$ to $O(1.2509^n\poly(n))$ \cite{IwamaNakashima2007} and $O(1.2312^n\poly(n))$ \cite{XiaoNagamochi2016}. It would be interesting to  apply our framework to speed up the actually best classical algorithms, which would yield an asymptotic speedup over the best classical algorithms for a quantum computer the size of any constant fraction of the problem size.

Second, our approach currently focuses on Grover-based speedups; however, the quantum backtracking techniques  \cite{v014a015,PhysRevA.95.032323} and subsequent improvements \cite{Ambainis:2017:QAT:3055399.3055444} 
lead to better performing quantum algorithms. It remains an open question whether these methods can also be made to fit in our hybrid approach, and whether they would thus yield better speedups.

From a more applied perspective, there are two key issues which currently prevent our algorithms from being practical. First is the fact that we deal with asymptotic speedups, and focus on algorithm performance in the worst-case exponential run-times. For the issue of asymptotic statements,  there has lately been increasing interest in analysing performance for finite-size settings \cite{arXiv:1810.05582}, which were moderately promising, but required arbitrarily-sized quantum computers. It would be interesting to see if a similar claim could be made for size-limited quantum computers.

 Second, we assume ideal noiseless settings, which is still remote \cite{Preskill2018quantumcomputingin}. 
Note that exponential run-times essentially require full fault tolerance to yield reliable results. It would be interesting to consider our hybrid approach for heuristic algorithms, which run for low-polynomial times, and while they may fail to find solutions for most truly hard instances, still perform very well in practice. 
These much more efficient algorithms would be more important for real-world solutions to NP-hard problems. Furthermore, in this case, it is more likely that intermediary efficient error mitigation schemes, as opposed to full fault tolerance, suffices to achieve quantum-enhanced and usable NP heuristics.

\acknowledgments
We thank J.I.~Cirac for helpful discussions. 
VD is partially funded through the Quantum Software Consortium.

%\bibliography{biblio4}

%merlin.mbs apsrev4-1.bst 2010-07-25 4.21a (PWD, AO, DPC) hacked
%Control: key (0)
%Control: author (8) initials jnrlst
%Control: editor formatted (1) identically to author
%Control: production of article title (-1) disabled
%Control: page (0) single
%Control: year (1) truncated
%Control: production of eprint (0) enabled
%

\appendix

\section{Runtime analysis of $\Eppstein$}\label{app:runtime}

In this section, we prove the runtime of $\Eppstein$ (as defined in Alg.~\ref{alg:eppsteinClassical}) of $O(2^{s(G,F)/3}\poly(n(G)))$. Note that since $s(G,F)\leq n(G)$, this in particular implies a runtime of $O(2^{n(G)/3}\poly(n(G)))$, which is the bound commonly cited in literature.  The analysis is essentially the same as that in \cite{Eppstein}. 

We first introduce a few notions for convenience.

\begin{definition}
	For any FCHC instance $(G,F)$, let $\trivred(G,F)$ be the FCHC instance obtained from $(G,F)$ by applying step~\ref{enum:classicalEppsteinStep1} of $\Eppstein$ to $(G,F)$. 
\end{definition}
In other words, $\trivred$ is the first step of $\Eppstein$ which performs all possible trivial reductions (step~\ref{enum:classicalEppsteinTrivialDeg2}, \ref{enum:classicalEppsteinTrivialDeg3}, and \ref{enum:classicalEppsteinTrivialCycle}) until no more such reductions are possible. In particular, $\trivred(G,F)$ is trivial-reduction-free for any FCHC instance $(G,F)$, and $\trivred(G,F)=(G,F)$ whenever $(G,F)$ is trivial-reduction-free.

\begin{definition}
	Let $(G,F)$ be a trivial-reduction-free FCHC instance. We say that $(G,F)$ is \emph{non-terminal} if 
	\begin{enumerate}[(i)]
		\item $G$ does not contain any vertices of degree $0$ or $1$,
		\item $F$ does not contain three edges meeting at a vertex, 
		\item $G\backslash F$ is not a collection of disjoint $4$-cycles and isolated vertices, and
		\item $F$ does not contain a non-Hamiltonian cycle. 
	\end{enumerate}
	Otherwise, we call $(G,F)$ \emph{terminal}. 
\end{definition}
In other words, a trivial-reduction-free FCHC instance is non-terminal if and only if none of the terminal conditions in step~\ref{enum:classicalEppsteinStep2} of $\Eppstein$ apply.

\begin{definition}
	For any trivial-reduction-free and non-terminal FCHC instance $(G,F)$, let $\edgeselect(G,F)$ be the edge selected in step~\ref{enum:classicalEppsteinStep3} of $\Eppstein(G,F)$. 
\end{definition}
 In other words, if $(G,F)$ is  trivial-reduction-free and non-terminal, then steps~\ref{enum:classicalEppsteinStep4} and \ref{enum:classicalEppsteinStep5} of $\Eppstein(G,F)$ call $\Eppstein(G, F\cup\{\edgeselect(G,F)\})$ and $\Eppstein(G\backslash \{\edgeselect(G,F)\},F)$, respectively. 
 
Eppstein's key idea to bounding the runtime was to show that with each recursive call, $s(G,F)$ reduces by a constant larger than $1$, leading to a runtime that is polynomially better than a trivial path-search algorithm (see also Lemma~7 in \cite{Eppstein}).

\begin{proposition} \label{prop:sbound}
	Let $(G,F)$ be a  trivial-reduction-free and non-terminal FCHC instance and let $e=\edgeselect(G,F)$. Suppose that $F$ is nonempty, and let $(G_1,F_1)= \trivred(G, F\cup \{e\})$ and $(G_2,F_2) = \trivred(G\backslash\{e\}, F)$. Suppose that $(G_1,F_1)$ and $(G_2,F_2)$ are non-terminal. Then,
	\begin{itemize}
		\item $s(G_1,F_1), s(G_2,F_2)\leq s(G,F)-3$, or
		\item $s(G_1,F_1)\leq s(G,F)-2 $ and $ s(G_2,F_2) \leq s(G,F)-5$, or vice-versa. 
	\end{itemize}
\end{proposition}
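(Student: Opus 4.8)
The plan is to track how the quantity $s(G,F) = \max(n(G) - |F| - |C(G,F)|, 0)$ changes when we force the selected edge $e = \edgeselect(G,F)$ versus when we delete it, taking into account the cascade of trivial reductions triggered in each branch. Since we are told both resulting instances $(G_1,F_1)$ and $(G_2,F_2)$ are non-terminal, Proposition~\ref{prop:snonnegative} guarantees that $n - |F| - |C|$ stays strictly positive throughout, so the $\max$ with $0$ is never active and we may work with the raw expression $n - |F| - |C(G,F)|$ directly. The key bookkeeping facts are: (a) forcing an edge increases $|F|$ by $1$ and can only increase it further via trivial reductions (step~1a and 1c force more edges), while deleting an edge decreases degrees and forces trivial reductions of type 1a; (b) the term $|C(G,F)|$ can only change when a vertex of a $4$-cycle acquires or loses a forced incident edge — and once an unforced-isolated $4$-cycle appears it is never touched again; (c) crucially, since $(G,F)$ is trivial-reduction-free, any edge deletion in step~1b is \emph{induced} by a prior forcing, so deletions come in controlled bundles attached to forced edges.

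First I would set up the three cases exactly as in Eppstein's analysis, following the three subcases \ref{enum:classicalEppsteinStep3a}, \ref{enum:classicalEppsteinStep3b}, \ref{enum:classicalEppsteinStep3c} of step~\ref{enum:classicalEppsteinStep3} — but since the hypothesis assumes $F$ nonempty and (implicitly, given the statement) that we are in the regime where step~\ref{enum:classicalEppsteinStep3b} applies, I would focus on the case where $e = yz$ is chosen adjacent to a forced edge $xy \in F$. In the forcing branch $(G, F\cup\{e\})$: the vertex $y$ now has two forced incident edges $xy$ and $yz$, so step~1b fires and removes $y$'s third incident edge; that removal drops a neighbor's degree, potentially triggering step~1a there, which forces two more edges; and so on. I would carefully enumerate this local cascade around $y$ and $z$, showing it nets a decrease of $|F| + |C|$-adjusted count by at least $2$, and in the favorable subcase by at least $3$. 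In the deletion branch $(G\setminus\{e\}, F)$: removing $yz$ drops $\deg(y)$ to $2$; since $xy$ is forced and incident to $y$, step~1a forces $y$'s other remaining edge, and symmetrically at $z$ — and because $G$ is triangle-free and cubic with enough forced structure, this propagates to give a decrease of at least $3$, and in the subcase complementary to the forcing branch, at least $5$. The precise edge-counting is mechanical given the trivial-reduction-free hypothesis and the triangle-freeness, and it mirrors Lemma~7 of \cite{Eppstein}.

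The main obstacle I anticipate is the careful accounting of the $|C(G,F)|$ term: adding forced edges can \emph{destroy} previously-unforced $4$-cycles from $C$ (by forcing one of their edges), which would \emph{decrease} $|C|$ and hence \emph{increase} $s$, working against us; conversely, the step~\ref{enum:classicalEppsteinStep3a}/\ref{enum:classicalEppsteinStep3b} edge-selection rules are specifically designed (the "not part of an isolated $4$-cycle" clause, and the $4$-cycle-first priority) to prevent exactly this kind of adverse interaction. So the delicate part of the proof is verifying that under these selection rules, whenever an edge of some $\omega \in C(G,F)$ would be forced during the cascade, a compensating new unforced-isolated $4$-cycle is created, or the vertex count bookkeeping absorbs the loss. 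I would handle this by a case split on whether the selected edge $yz$ is incident to a vertex of an existing $4$-cycle in $G\setminus F$ (subcase \ref{enum:classicalEppsteinStep3a} versus \ref{enum:classicalEppsteinStep3b}), and in each case show the net change $\Delta(|F| + |C|) \geq 2$ (resp. $\geq 3$) in one branch and $\geq 3$ (resp. $\geq 5$) in the other by an explicit local analysis of the bounded neighborhood affected by the cascade. The remaining subcases — $F$ empty is excluded by hypothesis, and subcase \ref{enum:classicalEppsteinStep3a} when a suitable $4$-cycle exists — are treated identically with the roles of the two branches possibly swapped, which is the "or vice-versa" in the statement. I would defer the full enumeration to match Eppstein's original Lemma~7, citing \cite{Eppstein} for the combinatorial core and only spelling out the modifications needed because we have reformulated the algorithm as a decision problem and introduced the $|C(G,F)|$ correction term into the size metric.
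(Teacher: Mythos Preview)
Your overall approach---case analysis according to which subcase of step~\ref{enum:classicalEppsteinStep3} selects $e$, then tracking the change in $|F|$ and $|C(G,F)|$ through the cascade of trivial reductions---is exactly what the paper does. However, your sketch has a misplaced concern about $|C(G,F)|$ and, more importantly, misses the structural observation that actually drives the $(-2,-5)$ case.

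On $|C(G,F)|$: your worry that the cascade might force an edge of some $\omega\in C(G,F)$ and thereby \emph{decrease} $|C|$ does not materialise. Once a $4$-cycle is unforced-isolated, all four of its vertices already carry a forced incident edge and all four cycle edges are unforced; the edge-selection rules (and the fact that trivial reductions only act at vertices where the forced/unforced balance has just changed) ensure none of these edges is ever touched. So $|C|$ is monotone non-decreasing and no ``compensating new $4$-cycle'' argument is needed. In fact the paper uses the $|C|$ term only once, in case~\ref{enum:classicalEppsteinStep3a}: forcing $e$ there completes the fourth forced non-cycle edge around the $4$-cycle, so $|C(G_1,F_1)|\ge |C(G,F)|+1$, which together with $|F_1|\ge|F|+2$ gives the required $-3$; the deletion branch simply gets $|F_2|\ge|F|+3$ directly.

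On case~\ref{enum:classicalEppsteinStep3b}: your ``local cascade around $y$ and $z$'' is too local to produce the asymmetric bound. The paper's key observation is that if $z$ or $w$ (the third neighbour of $y$) already has a forced incident edge, then $y$ sits on an unforced \emph{chain} of $k\ge 4$ vertices whose interior vertices each carry one forced edge. Forcing or deleting $e$ then triggers trivial reductions that propagate along the \emph{entire} chain, alternately forcing and deleting its edges. The parity of $k$ and whether the chain is open or closed give the count: for an open chain with $k=2l$ even, one branch forces $l-1$ chain edges plus four more at the two degree-$2$ endpoints ($\ge 5$ total), the other forces $l\ge 2$; for $k$ odd both branches force $\ge 4$; for a closed chain, non-terminality forces $k\ge 6$ even and both branches force $k/2\ge 3$. (If neither $z$ nor $w$ carries a forced edge, a short direct count gives $-3$ in both branches.) Without this chain picture, an analysis confined to the immediate neighbourhood of $y$ and $z$ will not reach the correct bounds, and deferring to \cite{Eppstein} does not quite suffice since the size metric here includes the $|C(G,F)|$ term.
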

\begin{proof}
 We prove the claim by going through all possible cases that can occur in $\edgeselect(G,F)$. 
 
	\begin{figure}[htbp]
		\begin{tikzpicture}
		\node (Q) at (0,0) {\begin{tikzpicture}[scale=.5]
			\node[blackdot] (A) at (0,0) {};		
			\node[blackdot] (B) at (0,1) {};
			\node[blackdot] (C) at (1,1) {};		
			\node[blackdot] (D) at (1,0) {};		
		
			\node (a) at (-.6,-.6) {};
			\node (b) at (-.6,1.6) {};
			\node (c) at (1.6,1.6) {};
			\node (d) at (1.6,-.6) {};
			
			\node[below] at (c) {$e$};
			
			\draw (A) -- (B) -- (C) -- (D)--(A);
			\draw (C) -- (c)  ;
			\draw[very thick,red] (A) -- (a);
			\draw[very thick,red] (B) -- (b);
			\draw (D) -- (d);

		\end{tikzpicture} };
		
		\node (force) at (-1.5,-1.5) {\begin{tikzpicture}[scale=.5]
			\node[blackdot] (A) at (0,0) {};		
			\node[blackdot] (B) at (0,1) {};
			\node[blackdot] (C) at (1,1) {};		
			\node[blackdot] (D) at (1,0) {};		
		
			\node (a) at (-.6,-.6) {};
			\node (b) at (-.6,1.6) {};
			\node (c) at (1.6,1.6) {};
			\node (d) at (1.6,-.6) {};
			
			\draw (A) -- (B) -- (C) -- (D)--(A);
			\draw[very thick,red] (C) -- (c)  ;
			\draw[very thick,red] (A) -- (a);
			\draw[very thick,red] (B) -- (b);
			\draw[very thick,red] (D) -- (d);

		\end{tikzpicture} };

		\node (delete) at (1.5,-1.5) {\begin{tikzpicture}[scale=.5]
			\node[blackdot] (A) at (0,0) {};		
			\node[blackdot] (B) at (0,1) {};
			\node[blackdot] (C) at (1,1) {};		
			\node[blackdot] (D) at (1,0) {};		
		
			\node (a) at (-.6,-.6) {};
			\node (b) at (-.6,1.6) {};
			\node (c) at (1.6,1.6) {};
			\node (d) at (1.6,-.6) {};
			
			\draw[very thick,red] (B) -- (C) -- (D)--(A);
			\draw[dotted] (C) -- (c)  ;
			\draw[very thick,red] (A) -- (a);
			\draw[very thick,red] (B) -- (b);
			\draw[dotted] (D) -- (d);
			\draw[dotted] (A)--(B);

		\end{tikzpicture}};	
		
		\draw[blue,->] (-.65,-.65) -- ++(-.2,-.2);
		\draw[blue,->] (.65,-.65) -- ++(.2,-.2);
		
		\end{tikzpicture}
		\caption{If $\edgeselect(G,F)$ selects $e$ according to step~\ref{enum:classicalEppsteinStep3a} of $\Eppstein$, $C(G_1,F_1) \geq C(G,F)+1$, $|F_1|\geq |F|+2$, and $|F_2|\geq |F|+3$. }\label{fig:app3a}
	\end{figure}
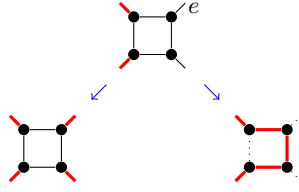

 Suppose first that $e$ is selected according to step~\ref{enum:classicalEppsteinStep3a} of $\Eppstein$ (see Fig.~\ref{fig:app3a}), i.e., $G\backslash F$ contains a $4$-cycle $\omega$ with exactly two vertices incident to edges in $F$, and $e$ is one of the other two edges adjacent to vertices in the cycle. Note that indeed, since $(G, F)$ is trivial-reduction-free, every vertex in $\omega$ is adjacent to a non-cycle edge. Note moreover that since $(G,F)$ is trivial-reduction-free, the two vertices of $\omega$ which are incident to edges in $F$ must be adjacent.  Hence, $\omega$ has two opposite vertices which are incident to an edge in $F\cup\{e\}$. Thus, $\trivred(G,F\cup\{e\})$ forces the last non-cycle edge adjacent to $\omega$. It follows that $|F_1|\geq |F|+2$ and $|C(G_1,F_1) |\geq |C(G,F)|+1$, and hence $s(G_1,F_1)\leq s(G,F)-3$. On the other hand, removing $e$ leads to two unforced edges in $\omega$, which will be forced, leading to one of the vertices in $\omega$ to have two incident forced edges. This leads to the removal of the third (cycle) edge, hence forcing the final edge in $\omega$. It follows that $|F_2|\geq |F|+3$, and hence $s(G_2,F_2)\leq s(G,F)-3$.
 
 Next, suppose that $e=yz$ is selected according to step~\ref{enum:classicalEppsteinStep3b} of $\Eppstein$.
	Let $w$ be the third vertex adjacent to $y$ in $G$. Note that since $(G,F)$ is trivial-reduction-free, $yw\not\in F$ and $\deg z=\deg w = 3$.  We distinguish two cases.

	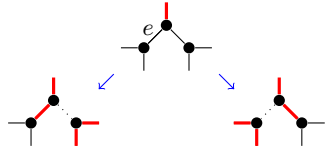
\begin{figure}[htbp]
		\begin{tikzpicture} 
		\node (Q) at (0,0) {\begin{tikzpicture}[scale=.3]
	
			\node[blackdot] (A) at (0,-1) {};		
			\node[blackdot] (B) at (-1,-2) {};
			\node[blackdot] (C) at (1,-2) {};		
			
			\node (T) at (0,.4) {};
			\node (L) at (-2.4,-2) {};
			\node (R) at (2.4,-2) {};
			\node (LB) at (-1,-3.4) {};
			\node (RB) at (1,-3.4) {};
			
			\draw[very thick,red] (A) -- (T);
	
			\path[draw] (A) edge node[above left=-4pt] {$e$} (B);
			
			\draw (A) -- (C);		
			
			\draw (B) -- (L);
			\draw (B) -- (LB);
			
			\draw (C) -- (R);
			\draw (C) -- (RB);
			
			\draw (A) -- (B)  ;
		\end{tikzpicture} };
		
		\node (force) at (-1.5,-1) {\begin{tikzpicture}[scale=.3]
	
			\node[blackdot] (A) at (0,-1) {};		
			\node[blackdot] (B) at (-1,-2) {};
			\node[blackdot] (C) at (1,-2) {};		
			
			\node (T) at (0,.4) {};
			\node (L) at (-2.4,-2) {};
			\node (R) at (2.4,-2) {};
			\node (LB) at (-1,-3.4) {};
			\node (RB) at (1,-3.4) {};
			
			\draw[very thick,red] (A) -- (T);
	
			\draw (A) -- (B);	
		
			\draw (B) -- (L);
			\draw (B) -- (LB);
		
			\draw (C) -- (R);
			\draw (C) -- (RB);
			
			\draw[very thick, red] (A) -- (B) ;
			
			\draw[dotted] (A) -- (C);
			
			\draw[very thick, red] (C) -- (R);
			\draw[very thick, red] (C) -- (RB);
		\end{tikzpicture} };

		\node (delete) at (1.5,-1) {\begin{tikzpicture}[scale=.3]
	
			\node[blackdot] (A) at (0,-1) {};		
			\node[blackdot] (B) at (-1,-2) {};
			\node[blackdot] (C) at (1,-2) {};		
			
			\node (T) at (0,.4) {};
			\node (L) at (-2.4,-2) {};
			\node (R) at (2.4,-2) {};
			\node (LB) at (-1,-3.4) {};
			\node (RB) at (1,-3.4) {};
			
			\draw[very thick,red] (A) -- (T);
	
			\draw (A) -- (C);		
			
			\draw (B) -- (L);
			\draw (B) -- (LB);
		
			\draw (C) -- (R);
			\draw (C) -- (RB);

			\draw[very thick, red] (A) -- (C) ;
			
			\draw[dotted] (A) -- (B);
			
			\draw[very thick, red] (B) -- (L);
			\draw[very thick, red] (B) -- (LB);
			
		\end{tikzpicture} };	
		
		\draw[blue,->] (-.7,-.5) -- ++(-.2,-.2);
		\draw[blue,->] (.7,-.5) -- ++(.2,-.2);

		\end{tikzpicture}
		\caption{If $\edgeselect(G,F)$ selects $e$ according to step~\ref{enum:classicalEppsteinStep3b} of $\Eppstein$, and neither $z$ nor $w$ have an incident edge in $F$, then $|F_1|,|F_2|\geq |F|+3$.} \label{fig:app3b-1}
	\end{figure}

	In the first case, suppose that neither $z$ nor $w$ have an incident edge in $F$ (Fig.~\ref{fig:app3b-1}). Then, $\trivred(G,F\cup\{e\})$ first removes $yw$ and then adds both remaining edges incident to $w$ to $F$. Hence, $s(G_1,F_1)\leq s(G,F)-3$. Similarly, $s(G_2,F_2)\leq s(G,F)-3$.
	
	In the second case, suppose that $z$ or $w$, have an incident edge in $F$. Note that since $xy\in F$, $yz$ and $yw$ cannot be part of a $4$-cycle of unforced edges, because otherwise $\edgeselect(G,F)$ would choose $e$ according to step~\ref{enum:classicalEppsteinStep3a} of $\Eppstein$. Thus, $z,y,w$ are part of an unforced, and possibly closed, chain of $k\geq 4$ vertices, with the inner vertices each having an incident edge in $F$. There are two subcases here. 

	\begin{figure}[htbp]
		\begin{subfigure}[b]{.3\textwidth}
		\begin{tikzpicture}
		\node (Q) at (0,0) {\begin{tikzpicture}[scale=.5]
	
			\node[blackdot] (A) at (0,0) {};		
			\node[blackdot] (B) at (1,0) {};
			\node[blackdot] (C) at (2,0) {};		
			\node[blackdot] (D) at (3,0) {};	

			\node (b) at (1,1) {};
			\node (c) at (2,1) {};
			
			\node (a1) at (-.7,-.7) {};
			\node (a2) at (-.7,.7) {};
			\node (d1) at (3.7,-.7) {};
			\node (d2) at (3.7,.7) {};
			
			\draw (A) -- (B) -- (C) -- (D); 
			
			\draw [very thick, red] (b) -- (B);
			\draw [very thick, red] (c) -- (C);
			
			\draw (a1) -- (A);
			\draw (a2) -- (A);
			\draw (d1) -- (D);
			\draw (d2) -- (D);
		\end{tikzpicture} };
		
		\node (force) at (-1.5,-1.5) {\begin{tikzpicture}[scale=.5]
	
			\node[blackdot] (A) at (0,0) {};		
			\node[blackdot] (B) at (1,0) {};
			\node[blackdot] (C) at (2,0) {};		
			\node[blackdot] (D) at (3,0) {};	

			\node (b) at (1,1) {};
			\node (c) at (2,1) {};
			
			\node (a1) at (-.7,-.7) {};
			\node (a2) at (-.7,.7) {};
			\node (d1) at (3.7,-.7) {};
			\node (d2) at (3.7,.7) {};
			
			\draw [very thick, red] (B) -- (C); 
			
			\draw[dotted] (A) -- (B);
			\draw[dotted] (C) -- (D);
			
			\draw [very thick, red] (b) -- (B);
			\draw [very thick, red] (c) -- (C);
			
			\draw[very thick,red] (a1) -- (A);
			\draw[very thick,red] (a2) -- (A);
			\draw[very thick,red] (d1) -- (D);
			\draw[very thick,red] (d2) -- (D);
		\end{tikzpicture} };

		\node (delete) at (1.5,-1.5) {\begin{tikzpicture}[scale=.5]
	
			\node[blackdot] (A) at (0,0) {};		
			\node[blackdot] (B) at (1,0) {};
			\node[blackdot] (C) at (2,0) {};		
			\node[blackdot] (D) at (3,0) {};	

			\node (b) at (1,1) {};
			\node (c) at (2,1) {};
			
			\node (a1) at (-.7,-.7) {};
			\node (a2) at (-.7,.7) {};
			\node (d1) at (3.7,-.7) {};
			\node (d2) at (3.7,.7) {};
			
			\draw[dotted]  (B) -- (C); 
			
			\draw[very thick, red] (A) -- (B);
			\draw[very thick, red] (C) -- (D);
			
			\draw [very thick, red] (b) -- (B);
			\draw [very thick, red] (c) -- (C);
			
			\draw (a1) -- (A);
			\draw (a2) -- (A);
			\draw (d1) -- (D);
			\draw (d2) -- (D);
		\end{tikzpicture} };	
		
		\draw[blue,->] (-1,-.7) -- ++(-.2,-.2);
		\draw[blue,->] (1,-.7) -- ++(.2,-.2);

		\end{tikzpicture}
			\caption{$k$ even}\label{subfig:keven}
		\end{subfigure}
		\qquad\qquad 
		\begin{subfigure}[b]{.3\textwidth}
		\begin{tikzpicture}
		\node (Q) at (0,0) {\begin{tikzpicture}[scale=.5]
	
			\node[blackdot] (A) at (0,0) {};		
			\node[blackdot] (B) at (1,0) {};
			\node[blackdot] (C) at (2,0) {};		
			\node[blackdot] (D) at (3,0) {};	
			\node[blackdot] (E) at (4,0) {};	

			\node (b) at (1,1) {};
			\node (c) at (2,1) {};
			\node (d) at (3,1) {};
					
			\node (a1) at (-.7,-.7) {};
			\node (a2) at (-.7,.7) {};
			\node (e1) at (4.7,-.7) {};
			\node (e2) at (4.7,.7) {};
			
			\draw (A) -- (B) -- (C) -- (D) -- (E); 
			
			\draw [very thick, red] (b) -- (B);
			\draw [very thick, red] (c) -- (C);
			\draw [very thick, red] (d) -- (D);
			
			\draw (a1) -- (A);
			\draw (a2) -- (A);
			\draw (e1) -- (E);
			\draw (e2) -- (E);
		\end{tikzpicture} };
		
		\node (force) at (-2,-1.5) {\begin{tikzpicture}[scale=.5]
	
			\node[blackdot] (A) at (0,0) {};		
			\node[blackdot] (B) at (1,0) {};
			\node[blackdot] (C) at (2,0) {};		
			\node[blackdot] (D) at (3,0) {};	
			\node[blackdot] (E) at (4,0) {};	

			\node (b) at (1,1) {};
			\node (c) at (2,1) {};
			\node (d) at (3,1) {};
					
			\node (a1) at (-.7,-.7) {};
			\node (a2) at (-.7,.7) {};
			\node (e1) at (4.7,-.7) {};
			\node (e2) at (4.7,.7) {};
			
			\draw [very thick, red] (B) -- (C); 
			\draw [very thick, red] (D) -- (E); 			
			
			\draw[dotted] (A) -- (B);
			\draw[dotted] (C) -- (D);
			
			\draw [very thick, red] (b) -- (B);
			\draw [very thick, red] (c) -- (C);
			\draw [very thick, red] (d) -- (D);
			
			\draw[very thick,red] (a1) -- (A);
			\draw[very thick,red] (a2) -- (A);
			\draw (e1) -- (E);
			\draw (e2) -- (E);
		\end{tikzpicture} };

		\node (delete) at (2,-1.5) {\begin{tikzpicture}[scale=.5]
	
			\node[blackdot] (A) at (0,0) {};		
			\node[blackdot] (B) at (1,0) {};
			\node[blackdot] (C) at (2,0) {};		
			\node[blackdot] (D) at (3,0) {};	
			\node[blackdot] (E) at (4,0) {};	

			\node (b) at (1,1) {};
			\node (c) at (2,1) {};
			\node (d) at (3,1) {};
					
			\node (a1) at (-.7,-.7) {};
			\node (a2) at (-.7,.7) {};
			\node (e1) at (4.7,-.7) {};
			\node (e2) at (4.7,.7) {};
			
			\draw [very thick, red] (A) -- (B); 
			\draw [very thick, red] (C) -- (D); 			
			
			\draw[dotted] (B) -- (C);
			\draw[dotted] (D) -- (E);
			
			\draw [very thick, red] (b) -- (B);
			\draw [very thick, red] (c) -- (C);
			\draw [very thick, red] (d) -- (D);
			
			\draw[very thick,red] (e1) -- (E);
			\draw[very thick,red] (e2) -- (E);
			\draw (a1) -- (A);
			\draw (a2) -- (A);
		\end{tikzpicture} };	
		
		\draw[blue,->] (-1,-.7) -- ++(-.2,-.2);
		\draw[blue,->] (1,-.7) -- ++(.2,-.2);

		\end{tikzpicture}
			\caption{$k$ odd}\label{subfig:kodd}
		\end{subfigure}
		\caption{$\edgeselect(G,F)$ selects $e$ according to step~\ref{enum:classicalEppsteinStep3b} of $\Eppstein$, and $z,y,w$ are part of an unforced chain of $k\geq 4$ vertices such that the inner $k-2$ vertices each have an incident edge in $F$ and the outer two vertices each have three unforced incident edges. (a) If $k$ is even, then $|F_1|\geq |F|+5$ and $|F_2|\geq |F|+2$ or vice-versa. (b) If $k$ is odd, then $|F_1|,|F_2| \geq |F|+4$.}\label{fig:app3b-2}
	\end{figure}
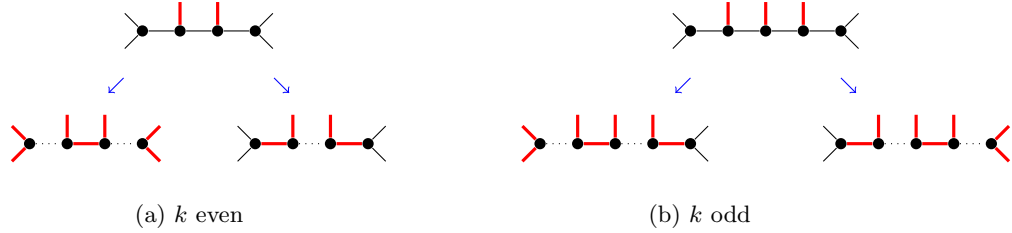

	In the first subcase, suppose that this chain terminates (see Fig.~\ref{fig:app3b-2}). Thus, $z,y,w$ are part of an unforced chain of $k\geq 4$ vertices such that the inner $k-2$ vertices each have an incident edge in $F$ and the outer two vertices have three unforced incident edges. Then, $\trivred(G, F\cup\{e\})$ and $\trivred(G\backslash\{e\},F)$ force and delete alternating edges of that chain.
	If $k=2l$ is even, then  one of $(G_1,F_1)$ and $(G_2,F_2)$ has $l-1$ edges of that chain which are forced, and the other has $l$ (Fig.~\ref{subfig:keven}). Assume without loss of generality that $(G_1,F_1)$ has $l-1$ edges of that chain forced. Then, both outer vertices of the chain will eventually have degree two, and thus $\trivred$ forces four additional edges. It follows that $|F_1| \geq |F|+4+(l-1)\geq |F|+5$ and $|F_2|\geq |F|+l\geq |F|+2$, and hence $s(G_1,F_1)\leq s(G,F)-5$ and $s(G_2,F_2)\leq s(G,F)-2$. 	
	On the other hand, if $k=2l+1$ is odd, then $\trivred(G, F\cup\{e\})$ and $\trivred(G\backslash\{e\},F)$ both force $l$ edges of the chain, and leave one of the two outer vertices with degree two, thus forcing its other two incident edges (Fig.~\ref{subfig:kodd}). It follows that $|F_1|,|F_2|\geq |F|+l+2\geq |F|+4$ and hence $s(G_1,F_1),s(G_2,F_2)\leq s(G,F)-4$. 	
	\begin{figure}[htbp]
		\begin{tikzpicture}
		\node (Q) at (0,0) {\begin{tikzpicture}[scale=.5]
	
			\foreach \i in {0,1,2,3,4,5}
			{
				\node[blackdot] (A\i) at (60*\i:1) {};
				\node (a\i) at (60*\i:1.8) {};
				
				\draw[very thick, red] (a\i) -- (A\i);
			}
			
			\draw (A0) -- (A1) -- (A2)--(A3)--(A4)--(A5)--(A0);
		\end{tikzpicture} };
		
		\node (force) at (-1.5,-1.5) {\begin{tikzpicture}[scale=.5]
			\foreach \i in {0,1,2,3,4,5}
			{
				\node[blackdot] (A\i) at (60*\i:1) {};
				\node (a\i) at (60*\i:1.8) {};
				
				\draw[very thick, red] (a\i) -- (A\i);
			}
			
			\draw[very thick,red] (A0) -- (A1);
			\draw[very thick,red] (A2) -- (A3);
			\draw[very thick,red] (A4) -- (A5);
			
			\draw[dotted] (A1) -- (A2);
			\draw[dotted] (A3) -- (A4);
			\draw[dotted] (A5) -- (A0);
		\end{tikzpicture} };

		\node (delete) at (1.5,-1.5) {\begin{tikzpicture}[scale=.5]
			\foreach \i in {0,1,2,3,4,5}
			{
				\node[blackdot] (A\i) at (60*\i:1) {};
				\node (a\i) at (60*\i:1.8) {};
				
				\draw[very thick, red] (a\i) -- (A\i);
			}
			
			\draw[dotted] (A0) -- (A1);
			\draw[dotted] (A2) -- (A3);
			\draw[dotted] (A4) -- (A5);
			
			\draw[very thick,red] (A1) -- (A2);
			\draw[very thick,red] (A3) -- (A4);
			\draw[very thick,red] (A5) -- (A0);
		\end{tikzpicture} };	
		
		\draw[blue,->] (-.7,-.7) -- ++(-.2,-.2);
		\draw[blue,->] (.7,-.7) -- ++(.2,-.2);

		\end{tikzpicture}
		\caption{If $\edgeselect(G,F)$ selects $e$ according to step~\ref{enum:classicalEppsteinStep3b} of $\Eppstein$, and $z,y,w$ are part of an unforced cycle of $k\geq 6$ vertices with $k$ even, each of which is incident to an edge in $F$, then $|F_1|,|F_2|\geq |F|+3$.} \label{fig:app3b-3}
	\end{figure}
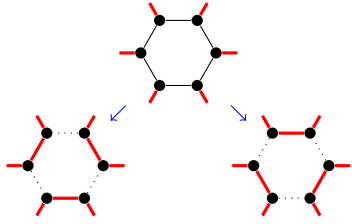
	
	In the second subcase, this unforced chain is a cycle of $k$ vertices (Fig.~\ref{fig:app3b-3}), each of which is incident to an edge in $F$. Then, $\trivred(G, F\cup\{e\})$ and $\trivred(G\backslash\{e\},F)$ force and delete alternating edges of that chain. Clearly, by the definition of Step~\ref{enum:classicalEppsteinStep3b} of $\Eppstein$, $k> 4$.  
	Moreover, if $k$ is odd, $\trivred(G,F\cup\{e\})$ and $\trivred(G\backslash\{e\},F)$ would both have three forced edges meeting in a vertex, and hence would be terminal. Hence, $k\geq 6$, and thus $|F_1|,|F_2|\geq |F|+3$. Hence, $s(G_1,F_1),s(G_2,F_2)\leq s(G,F)-3$.
 
 Finally, note that since $F$ is assumed to be non-empty and $(G,F)$ to be non-terminal, $\edgeselect(G,F)$ does not choose $e$  according to step~\ref{enum:classicalEppsteinStep3c} of $\Eppstein$. 
\end{proof}

\begin{corollary}
	For any FCHC instance $(G,F)$, $\Eppstein(G,F)$ decides the FCHC problem in a runtime of $O(2^{s(G,F)/3}\poly(n(G)))$. 
\end{corollary}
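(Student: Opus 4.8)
The plan is to express the runtime of $\Eppstein(G,F)$ through the size of the recursion tree $\rectree(G,F)$ and then bound the latter by solving a linear recurrence built from Proposition~\ref{prop:sbound}. First I would record that $\Eppstein$ opens every call with $\trivred$ in time $O(\poly(n(G)))$, and that the quantity $n(G)-|F|-|C(G,F)|$ (and hence $s$) is non-increasing under each individual trivial reduction of step~\ref{enum:classicalEppsteinStep1}: adding edges to $F$ or deleting an unforced edge never increases it, a routine check using that every vertex of a cycle in $C(G,F)$ has degree three with exactly one forced incident edge, so such cycles are never affected by the reductions. Hence it suffices to prove the claimed runtime for trivial-reduction-free instances, so we may assume $(G,F)=\trivred(G,F)$. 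Since each node of $\rectree(G,F)$ executes only steps~\ref{enum:classicalEppsteinStep1}--\ref{enum:classicalEppsteinStep3} and a disjunction, at cost $O(\poly(n(G)))$ each, the runtime of $\Eppstein(G,F)$ is $O(\poly(n(G)))$ times the number of nodes of $\rectree(G,F)$; and since that tree is binary, the number of nodes is at most twice the number of leaves.

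Next I would let $T(s)$ be the largest number of leaves of $\rectree(\tilde G,\tilde F)$ over all trivial-reduction-free FCHC instances with $s(\tilde G,\tilde F)\le s$; note $T$ is monotone, $T\ge1$, and a terminal (leaf) instance contributes $1$. For an internal node $(\tilde G,\tilde F)$ with $\tilde F\neq\emptyset$, the edge $e=\edgeselect(\tilde G,\tilde F)$ is chosen by step~\ref{enum:classicalEppsteinStep3a} or \ref{enum:classicalEppsteinStep3b}, and the two subtrees below the node are the trees of $\trivred(\tilde G,\tilde F\cup\{e\})$ and $\trivred(\tilde G\setminus\{e\},\tilde F)$ -- precisely the instances $(G_1,F_1),(G_2,F_2)$ of Proposition~\ref{prop:sbound}. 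If both are non-terminal, that proposition gives $s(G_1,F_1),s(G_2,F_2)\le\tilde s-3$, or $s(G_1,F_1)\le\tilde s-2$ and $s(G_2,F_2)\le\tilde s-5$ (or vice versa); if one of them is terminal it is a leaf contributing $1$. In every case the number of leaves below the node is at most $\max\bigl(2T(\tilde s-3),\,T(\tilde s-2)+T(\tilde s-5)\bigr)$, so $T(s)\le\max\bigl(2T(s-3),\,T(s-2)+T(s-5)\bigr)$ for $s$ above a suitable constant $C$, while $T(s)=O(1)$ for $s\le C$: here Proposition~\ref{prop:snonnegative} does the work, since a trivial-reduction-free instance with $s=0$ must have three forced edges meeting at a vertex or have $F$ a collection of cycles, hence is terminal, so the recursion bottoms out. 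The one remaining case is $\tilde F=\emptyset$, where step~\ref{enum:classicalEppsteinStep3c} is used; there $\tilde G$ is forced to be cubic, both children after $\trivred$ have nonempty $F$ and size $\le\tilde s$, and -- because $F$ is only ever enlarged along a path and a deleted edge leaves degree-two endpoints that trigger a trivial reduction -- step~\ref{enum:classicalEppsteinStep3c} can only be invoked at the root of $\rectree(G,F)$, contributing merely a factor $2$.

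Finally I would solve the recurrence with standard techniques for linear recurrences (cf.\ \cite{Bentley:1980:GMS:1008861.1008865}): the branch $2T(s-3)$ has growth rate $2^{1/3}$, while $T(s-2)+T(s-5)$ has growth rate the unique positive root of $x^{5}=x^{3}+1$, which is strictly below $2^{1/3}$ because $2^{5/3}>3=2+1$. Hence $T(s)=O(2^{s/3})$, and together with the first paragraph $\Eppstein(G,F)$ runs in time $O(2^{s(G,F)/3}\poly(n(G)))$. I expect the main obstacle to be bookkeeping rather than mathematics: aligning the (pre-$\trivred$) vertices of $\rectree$ with the (post-$\trivred$) instances governed by Proposition~\ref{prop:sbound}, and cleanly disposing of the degenerate situations, namely $\tilde F=\emptyset$ (step~\ref{enum:classicalEppsteinStep3c}) and terminal children. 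The sole genuinely arithmetical point is the inequality $2^{5/3}\ge 3$, which is what pins the exponent to exactly $1/3$ rather than something larger.
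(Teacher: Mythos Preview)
Your proposal is correct and follows essentially the same approach as the paper: set up the recurrence $T(s)\le\max(2T(s-3),\,T(s-2)+T(s-5))$ from Proposition~\ref{prop:sbound}, use Proposition~\ref{prop:snonnegative} to anchor the base case, and solve. Your version is in fact more careful than the paper's terse argument---you explicitly dispose of the $F=\emptyset$ case (step~\ref{enum:classicalEppsteinStep3c}), verify that $s$ is non-increasing under $\trivred$, and check the arithmetic $2^{5/3}>3$ that pins the exponent---whereas the paper simply asserts the recurrence and invokes ``standard techniques''.
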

\begin{proof}
	Clearly, if $(G,F)$ is trivial-reduction-free and terminal, $\Eppstein(G,F)$ only takes $O(\poly(n(G)))$ time. Moreover, if $(G,F)$ is trivial-reduction-free, then by Propostion~\ref{prop:snonnegative}, $s(G,F) = 0$ implies that $(G,F)$ is terminal. Hence, Proposition~\ref{prop:sbound} implies that the runtime of $\Eppstein(G,F)$ can be bounded by some function $T(s(G,F))$ depending only on $s(G,F)$, where $T(s)$ satisfies $T(s)=O(\poly(n(G)))$ for $s\leq 0$, and 
	\begin{equation}
		T(s) \leq \max(2T(s-3), T(s-2)+T(s-5))
	\end{equation}
	for $s>0$. Using standard techniques for solving linear recurrence relations, one obtains $T(s)=O(2^{s/3}\poly(n(G)))$ for $s\geq 0$.
\end{proof}

\end{document}